\theoremstyle{plain}
\newtheorem{theorem}{Theorem}
\newtheorem{lemma}[theorem]{Lemma}
\newtheorem{proposition}[theorem]{Proposition}
\newtheorem{corollary}[theorem]{Corollary}
\theoremstyle{definition}
\newtheorem{definition}[theorem]{Definition}
\newtheorem{example}[theorem]{Example} 
\newcommand{\E}{\mathbb{E}}
\newcommand{\R}{\mathbb{R}}
\newcommand{\A}{\mathcal{A}}
\newcommand{\F}{\mathcal{F}}
\newcommand{\wt}{{w_i^{(t)}}}
\newcommand{\wtp}{{w_i^{(t+1)}}}
\newcommand{\wT}{{w_i^{(T)}}}
\newcommand{\mt}{{m_i^{(t)}}}
\newcommand{\Mt}{{M^{(t)}}}
\newcommand{\mT}{{m_i^{(T)}}}
\newcommand{\MT}{{M^{(T)}}}
\newcommand{\pt}{{p_i^{(t)}}}
\newcommand{\qt}{{q^{(t)}}}
\newcommand{\bt}{{b_i^{(t)}}}
\newcommand{\rt}{{r^{(t)}}}
\newcommand{\st}{{s_i^{(t)}}}
\newcommand{\Mpt}{{p^{(t)}}}
\newcommand{\Mbt}{{b^{(t)}}}
\newcommand{\Bern}{\text{Bern}}
\DeclareMathOperator*{\argmax}{argmax}
\newcommand{\Pt}{{\Phi^{(t)}}}
\newcommand{\Ptp}{{\Phi^{(t+1)}}}
\newcommand{\Ad}{\mathcal{A}_d}
\newcommand{\Ar}{\mathcal{A}_r}
\begin{document}

\title{Online Prediction with Selfish Experts}
\author{
	Tim Roughgarden\thanks{Department of Computer Science,
		Stanford University, 474 Gates Building, 353 Serra Mall, Stanford, CA 94305.
		This research was supported in part by NSF grant
		CCF-1215965.
		Email: {\tt tim@cs.stanford.edu}.} \and Okke Schrijvers\thanks{Department of Computer Science,
		Stanford University, 482 Gates Building, 353 Serra Mall, Stanford, CA 94305.
		This research was supported in part by NSF grant
		CCF-1215965.
		Email: {\tt okkes@cs.stanford.edu}.} 
}

	\maketitle

\begin{abstract}
We consider the problem of binary prediction with expert advice in
settings where experts have agency and seek to maximize their
credibility.
This paper makes three main contributions.  First, it defines a model
to reason formally about settings with selfish experts, and
demonstrates that  ``incentive compatible'' (IC) algorithms are
closely related to the design of proper scoring rules. 
Designing a good IC algorithm is easy if the designer's loss function
is quadratic, but for other loss functions, novel techniques are required.
Second, we design IC algorithms with good performance guarantees for
the absolute loss function. Third, we give a formal separation between
the power of online prediction with selfish experts and online
prediction with honest experts by proving lower bounds for both
IC and non-IC algorithms. 
In particular,
with selfish experts and the absolute loss function, there is no
(randomized) algorithm for online prediction---IC or otherwise---with asymptotically
vanishing regret.
\end{abstract}

\section{Introduction}
\label{s:intro}

In the months leading up to elections and referendums, a plethora of
pollsters try to figure out how the electorate is going to
vote. Different pollsters use different methodologies, reach different
people, and may have sources of random errors, so generally the polls
don't fully agree with each other. Aggregators such as Nate Silver's
FiveThirtyEight, The Upshot by the New York Times, and HuffPost
Pollster by the Huffington Post consolidate these different reports
into a single prediction, and hopefully reduce random
errors.\footnote{FiveThirtyEight: \url{https://fivethirtyeight.com/},
  The Upshot: \url{https://www.nytimes.com/section/upshot}, HuffPost
  Pollster: \url{http://elections.huffingtonpost.com/pollster}.}
FiveThirtyEight in particular has a solid track record for their
predictions, and as they are transparent about their methodology we
use them as a motivating example in this paper. To a first-order
approximation, they operate as follows: first they take the
predictions of all the different pollsters, then they assign a weight
to each of the pollsters based on past performance (and other
factors), and finally they use the weighted average of the pollsters
to run simulations and make their own prediction.\footnote{This is of course
a simplification. FiveThirtyEight also uses features like the
  change in a poll over time, the state of the economy, and
  correlations between states.  See
  \url{https://fivethirtyeight.com/features/how-fivethirtyeight-calculates-pollster-ratings/}
  for details. Our
  goal in this paper is not to accurately model all of the fine
  details of FiveThirtyEight (which are anyways changing all the time).
Rather, it is to formulate a general model of
  prediction with experts that clearly illustrates why incentives matter.}

But could the presence of an institution that rates pollsters
inadvertently create perverse incentives for the pollsters? The
FiveThirtyEight pollster ratings are publicly
available.\footnote{\url{https://projects.fivethirtyeight.com/pollster-ratings/}}
The ratings can be interpreted as a reputation, and a low rating can
negatively impact future revenue opportunities for a
pollster. Moreover, it has been demonstrated in practice that
experts do not always report their true beliefs about
future events. For example, in weather forecasting there is a known
``wet bias,'' where consumer-facing weather forecasters deliberately
\emph{over}estimate low chances of rain (e.g. a $5\%$ chance of rain
is reported as a $25\%$ chance of rain) because people don't like to
be surprised by rain \citep{BK08}.

These examples motivate the development of models of aggregating
predictions that endow agency to the data sources.\footnote{More
  generally, one can investigate how the presence of machine learning
  algorithms affects data generating processes, either during
  learning, e.g. \citep{DFP10,CDP15}, or during deployment,
  e.g. \citep{HMPW16,BS11}. We discuss some of this work in the related
  work section.}  While there are multiple models in which we can
investigate this issue, a natural candidate is the problem of
prediction with expert advice. 
By focusing on a standard model, we abstract away from the fine details of FiveThirtyEight (which are anyways changing all the time), which allows us to formulate a general model of prediction with experts that clearly illustrates why incentives matter.
 In the classical
model~\citep{LW94,FS97}, at each time step, several experts make
predictions about an unknown event.  An online prediction algorithm
aggregates experts' opinions and makes its own prediction at each time
step.  After this prediction, the event at this time step is realized
and the algorithm incurs a loss as a function of its prediction and
the realization. To compare its performance against individual
experts, for each expert the algorithm calculates what its loss would
have been had it always followed the expert's prediction.
While the problems introduced in this paper are relevant for general
online prediction, to focus on the most interesting issues we
concentrate on the case of binary events, and real-valued predictions
in $[0,1]$. For different applications, different notions of loss are
appropriate, so we parameterize the model by a loss function
$\ell$. Thus our formal model is: at each time step $t=1,2,\ldots,T$:
\begin{enumerate}

\item Each expert $i$ makes a prediction $\pt \in [0,1]$, with higher
  values indicating stronger advocacy for the event ``1.''

\item The online algorithm commits to a probability distribution
  over $\{0,1\}$, with $\qt$ denoting the probability assigned to~``1.''

\item The outcome $\rt \in \{0,1\}$ is realized.

\item The algorithm incurs a  
loss of $\ell(\qt,\rt)$ and calculates for each
  expert~$i$  a loss of $\ell(\pt,\rt)$.

\end{enumerate}

The standard goal in this problem is to design an online prediction
algorithm that is guaranteed to have expected loss not much larger
than that incurred by the best expert in hindsight.
The classical solutions
maintain a weight for each expert and make a prediction
according to which outcome has more expert weight behind it.
An expert's weight can be interpreted as a measure of its credibility
in light of its past performance.
The (deterministic) Weighted Majority (WM) algorithm always chooses
the outcome with more expert weight.  The Randomized Weighted Majority
(RWM) algorithm randomizes between the two outcomes with probability
proportional to their total expert weights.  The most common method of
updating experts' weights is via multiplication by $1-\eta \ell(p_i^{(t)}, r^{(t)})$
after each time step~$t$, where $\eta$ is the learning rate.  We call
this the ``standard'' or ``classical'' version of the WM and RWM algorithm.

The classical model instills no agency in the experts.
To account for this, in this
paper we replace Step 1 of the classical model by:
\begin{itemize}

\item [1a.] Each expert~$i$ formulates a belief $\bt \in [0,1]$.

\item [1b.] Each expert~$i$ reports a prediction $\pt \in [0,1]$ to
  the algorithm.

\end{itemize}
Each expert now has two types of loss at each time step --- the {\em
  reported loss}
$\ell(\pt,\rt)$ with respect to the reported prediction and the {\em
  true loss}
$\ell(\bt,\rt)$ with respect to her true beliefs.\footnote{When we speak
  of the best expert in hindsight, we are always referring to the
  true losses. Guarantees with respect to reported losses follow from
  standard results~\citep{LW94,FS97,CBMS07}, but are not immediately meaningful.}

When experts care about the weight that they are assigned, and with it
their reputation and influence in the algorithm,
different loss functions can lead to different expert behaviors. 
For example, in Section~\ref{s:model} we observe that for the quadratic loss
function, in the standard WM and RWM algorithms,
experts have no reason to misreport their beliefs. 
The next example shows that this is not the case for 
other loss functions, such as the absolute loss function.\footnote{The
  loss function is often tied to the particular application.  For example,
  in the current FiveThirtyEight pollster rankings, the performance of a
  pollster is primarily measured according to an absolute loss
  function and also whether the
  candidate with the highest polling numbers ended up winning (see
  \url{https://github.com/fivethirtyeight/data/tree/master/pollster-ratings}). However,
  in 2008 FiveThirtyEight used the notion of ``pollster introduced
  error'' or PIE, which is the square root of a difference of squares,
  as the most important feature in calculating the weights, see
  \url{https://fivethirtyeight.com/features/pollster-ratings-v31/}.}

\begin{example}\label{ex:obvious}
Consider the standard WM algorithm, where each expert initially has
unit weight, and an expert's weight is multiplied by $1-\eta |\pt - \rt|$ at
a time step $t$, where $\eta \in
(0,\tfrac{1}{2})$ is the learning rate.
Suppose there are two experts and $T=1$, and that $b^{(1)}_1 = .49$
while $b^{(1)}_2 = 1$.  Each expert reports to maximize her expected
weight.
Expanding, for each $i=1,2$ we have
\begin{align*}
\E[w^{(1)}_i] &= \Pr(r^{(1)} = 1)\cdot (1-\eta(1-p_i^{(1)})) +
                \Pr(r^{(1)} =  0)\cdot (1-\eta p_i^{(1)})\\
&= b_i^{(1)}\cdot (1-\eta (1-p_i^{(1)})) + (1-b_i^{(1)})\cdot (1-\eta p_i^{(1)})\\
&= b_i^{(1)} - b_i^{(1)}\eta + b_i^{(1)}\eta p_i^{(1)} + 1 - \eta p^{(1)}_i - b_i^{(1)}+b_i^{(1)}\eta p_i^{(1)}\\
&= 2b_i^{(1)} \eta p_i^{(1)}  - p_i^{(1)}\eta - b_i^{(1)}\eta + 1,
\end{align*}

where all expectations and probabilities are with respect to the true
beliefs of agent~$i$.
To maximize this expected weight over the possible reports $p_i^{(1)}
\in [0,1]$, we can omit the second two terms (which are independent of
$p^{(1)}_i$) and divide out by $\eta$ to obtain
	\begin{align*}
	\argmax_{p_i^{(1)} \in [0,1]} 2b_i^{(1)} \eta p_i^{(1)}  -
          p_i^{(1)}\eta - b_i^{(1)}\eta + 1&= \argmax_{p_i^{(1)} \in [0,1]} p_i^{(1)}(2b_i^{(1)} - 1)\\
	&= \begin{cases}
	1 & \text{if } b_i^{(1)} \geq \frac{1}{2} \\
	0 & \text{otherwise.}
	\end{cases}
	\end{align*}
Thus an expert always reports a point mass on whichever
outcome she believes more likely.  In our example, the second expert
will report her true beliefs ($p^{(t)}_2 = 1$) while the first will
not ($p^{(t)}_1 = 0$).
While the combined true beliefs of the experts clearly favor outcome
$1$, the WM algorithm sees two opposing predictions and must break
ties arbitrarily between them. 
\end{example}

This conclusion can also be drawn directly from the property elicitation literature. Here, the absolute loss function is known to elicit the median \citep{B76}\citep{T79}, and since we have binary realizations, the median is either 0 or 1. Example~\ref{ex:obvious} shows that for the absolute loss function the standard WM algorithm is not
``incentive-compatible'' in a sense that we formalize in
Section~\ref{s:model}.  There are similar examples for
the other commonly studied weight update rules and for the RWM
algorithm. We might care about truthful reporting for its own sake,
but additionally the worry is that non-truthful reports will impede
our ability to get good regret guarantees (with respect to experts'
true losses).

We study several fundamental questions about online prediction with selfish experts:
\begin{enumerate}

\item What is the design space of ``incentive-compatible'' online
  prediction  algorithms, where every expert is incentivized to report
  her true beliefs?

\item Given a loss function like absolute loss,
are there incentive-compatible algorithm that obtain good regret guarantees?

\item 
Is online prediction with selfish experts strictly harder
than in the classical model with honest experts?

\end{enumerate}

\subsection{Our Results}
The first contribution of this paper is the development of a model for
reasoning formally about the design and analysis
of weight-based online prediction
algorithms when experts are selfish (Section~\ref{s:model}), and the
definition of an ``incentive-compatible'' (IC) such algorithm.
Intuitively, an IC algorithm is such that each expert wants to report
its true belief at each time step.
We demonstrate that the design of IC online
prediction algorithms is closely related to the design of strictly
proper scoring rules. Using this, we show that for the quadratic loss
function, the standard WM and RWM
algorithms are IC, whereas these algorithms are not
generally IC for other loss functions.

Our second contribution is the design of IC
prediction algorithms for the absolute loss function with non-trivial performance guarantees.  For
example, our best result for deterministic algorithms is: the WM
algorithm, with experts' weights evolving according to the spherical
proper scoring rule (see Section~\ref{s:det-ub}), is IC
and has loss at most $2+\sqrt{2}$ times the loss of best
expert in hindsight (in the limit as $T \rightarrow \infty$).  A
variant of the RWM algorithm with the Brier scoring rule is IC and has expected loss at most 2.62 times that of
the best expert in hindsight (also in the limit, see Section~\ref{s:rand}).

Our third and most technical contribution is a
formal separation between online prediction with selfish experts and the traditional setting with honest experts.
Recall that with honest experts, the classical (deterministic) WM
algorithm has loss at most twice that of the best expert in hindsight
(as $T \rightarrow \infty$) \cite{LW94}.  We prove that the worst-case loss of
every (deterministic) IC algorithm (Section~\ref{s:det-lb}) and every
non-IC algorithm satisfying mild technical conditions
(Section~\ref{s:det-non-ic-lb}) has worst-case loss bounded away from
twice that of the best expert in hindsight (even as
$T \rightarrow \infty$).  A consequence of our lower bound is that,
with selfish experts, there is no natural (randomized) algorithm for
online prediction---IC or otherwise---with asymptotically vanishing
regret.

\subsection{Related Work}
We believe that our model of online prediction over time with selfish
experts is novel.  We next survey the multiple other ways in which online
learning and incentive issues have been blended, and the other efforts 
to model incentive issues in machine learning.

There is a large literature on prediction and decision markets
(e.g.~\citep{chen_pennock10,horn14}), which also aim to aggregate
information over time from multiple parties and make use of proper
scoring rules to do it. There are several major differences between
our model and prediction markets.  First, in our model, the goal is to
predict a sequence of events, and there is feedback (i.e., the
realization) after each one.  In a prediction market, the goal is to
aggregate information about a single event, with feedback provided
only at the end (subject to secondary objectives, like bounded
loss).\footnote{In the even more distantly related peer prediction
scenario~\citep{MRZ05}, there is never any realization at all.}
Second, our goal is to make accurate predictions, while that of a
prediction market is to aggregate information.  For example, if
one expert is consistently incorrect over time, we would like to
ignore her reports rather than aggregate them with others' reports.
Finally, while there are strong mathematical connections between cost
function-based prediction markets and regularization-based online
learning algorithms in the standard (non-IC) model~\citep{ACV12}, there
does not appear to be analogous connections with online
prediction with selfish experts.

There is also an emerging literature on ``incentivizing exploration''
(as opposed to exploitation) in partial feedback models such as the
bandit model (e.g.~\citep{FKKK14,MSSW16}).  Here, the incentive issues
concern the learning algorithm itself, rather than the experts (or
``arms'') that it makes use of.

The question of how an expert should report beliefs has been studied before in the literature on strictly proper scoring rules \citep{B50,M56,S71,GR07}, but this literature typically considers the evaluation of a single prediction, rather than low-regret learning. The work by \citet{BD89} specifically looks at the question of how an expert should respond to an aggregator who assigns and updates weights based on their predictions. Their work focuses on optimizing relative weight under different objectives and informational assumptions. However, it predates the work on low-regret learning, and it does not include performance guarantees for the aggregator over time. \citet{B12} discusses a model in which an aggregator wants to take a specific action based on predictions that she elicits from experts. He explores incentive issues where experts have a stake in the action that is taken by the decision maker.

Finally, there are many works that fall under the broader umbrella of
incentives in machine learning.
Roughly, work in this area can be divided into two genres: incentives
during the learning stage, or incentives during the deployment
stage. During the learning stage, one of the main considerations is
incentivizing data providers to exert effort to generate high-quality
data. There are several recent works that propose ways to elicit data
in crowdsourcing applications in repeated settings through payments,
e.g. \citep{CDP15,SZ15,LC16}. Outside of crowdsourcing,
\citet{DFP10} consider a regression task where different experts have
their own private data set, and they seek to influence the learner to
learn a function such that the loss of their private data set with
respect to the function is low.

During deployment, the concern is that the input is given by agents who have a stake in the result of the classification, e.g. an email spammer wishes to avoid its emails being classified as spam. \citet{BS11} model a learning task as a Stackelberg game. On the other hand \citet{HMPW16} consider a cost to changing data, e.g. improving your credit score by opening more lines of credit, and give results with respect to different cost functions.

Online learning does not fall neatly into either learning or
deployment, as the learning is happening while the system is
deployed. \citet{BKS10} consider the problem of no-regret learning
with selfish experts in an ad auction setting, where the incentives
come from the allocations and payments of the auction, rather than
from weights as in our case.

\subsection{Organization}
Section~\ref{s:model} formally defines weight-update online prediction
algorithms and shows a connection between algorithms that are incentive compatible and proper scoring
rules. We use the formalization to show that when we care about achieving guarantees for quadratic losses, the standard WM and RWM algorithms work well. Since the standard algorithm fails to work well for absolute losses, we focus in the remainder of the paper on proving guarantees for this case.

Section~\ref{s:det-ub} gives a deterministic weight-update
online prediction algorithm that is incentive-compatible and has absolute loss
at most $2+\sqrt 2$ times that of the best expert in hindsight (in the
limit).  Additionally we show that the weighted majority algorithm with
the standard update rule has a worst-case true loss of at least $4$ times the best expert in hindsight. 

To show the limitations of online prediction with selfish experts, we break our lower bound results into two parts. In Section~\ref{s:det-lb} we show that any deterministic incentive compatible weight-update online prediction algorithm has worst case loss bounded away from $2$, even as $T\rightarrow \infty$. Then in Section~\ref{s:det-non-ic-lb} we show that under mild technical conditions, the same is true for non-IC algorithms. 

Section~\ref{s:rand} contains our results for randomized algorithms. It shows that the lower bounds for deterministic algorithms imply that under the same conditions randomized algorithms cannot have asymptotically vanishing regret. We do give an IC randomized algorithm that achieves worst-case loss at most 2.62 times that of the best expert in hindsight (in the limit).

Finally, in Section~\ref{s:simulations} we show  simulations that indicate that different IC methods show similar regret behavior, and that their regret is substantially better than that of the non-IC standard algorithms, suggesting that the worst-case characterization we prove holds more generally.

The appendix contains omitted proofs (Appendix~\ref{s:proofs}),
and a discussion on the selecting appropriate proper scoring rules for good
guarantees (Appendix~\ref{s:selection}).

\section{Preliminaries and Model}
\label{s:model}

\subsection{Standard Model}
At each time step $t\in 1, ..., T$ we want to predict a binary
realization $\rt\in\{0,1\}$. To help in the prediction, we have access
to $n$ experts that for each time step report a prediction
$\pt \in [0,1]$ about the realization. The realizations are determined
by an oblivious adversary, and the predictions of the experts may
or may not be accurate. The goal is to use the predictions of the
experts in such a way that the algorithm performs nearly as well as
the best expert in hindsight. 
Most of the algorithms proposed for this problem
fall into the following framework.

\begin{definition}[Weight-update Online Prediction Algorithm]
	\label{def:wuopa}
	A weight-update online prediction algorithm maintains a weight
        $\wt$ for each expert and makes its prediction $\qt$ based
        on $\sum_{i=1}^n \wt\pt$ and $\sum_{i}^n \wt(1-\pt)$. After the
        algorithm makes its prediction, the realization $\rt$ is
        revealed, and the algorithm updates the weights of experts
        using the rule
	\begin{equation}
		\wtp = f\left(\pt, \rt\right)\cdot \wt,
	\end{equation}
	where $f : [0,1]\times \{0,1\} \rightarrow \R^+$ is a positive function on its domain.
\end{definition}

The standard WM algorithm has $f(\pt,\rt)=1-\eta\ell(\pt,\rt)$ where
 $\eta\in(0,\tfrac12)$ is the learning rate, and
predicts $q^{(t)}=1$ if and only if
$\sum_{i}^n \wt\pt\geq\sum_{i}^n \wt(1-\pt)$. Let the total loss of
the algorithm be $\MT = \sum_{t=1}^T \ell(\qt,\rt)$ and let the
total loss of expert $i$ be $\mT = \sum_{t=1}^T \ell(\pt, \rt)$. The MW
algorithm has the property that
$\MT \leq 2(1+\eta)\mT + \frac{2\ln n}{\eta}$ for each expert $i$, and
RWM ---where the algorithm picks $1$ with probability proportional to
$\sum_{i}^n \wt\pt$--- satisfies $\MT \leq (1+\eta)\mT + \frac{\ln n}{\eta}$
for each expert $i$ \citep{LW94}\citep{FS97}.

The notion of \emph{``no $\alpha$-regret''} \citep{KKL09} captures the idea that the per time-step loss of an algorithm is $\alpha$ times that of the best expert in hindsight, plus a term that goes to $0$ as $T$ grows:

\begin{definition}[$\alpha$-regret]
	An algorithm is said to have \emph{no $\alpha$-regret} if $\MT \leq\alpha\min_i\mT + o(T).$
\end{definition}
By taking $\eta=O(1/\sqrt T)$, MW is a no $2$-regret algorithm, and RWM is a no $1$-regret algorithm.

\subsection{Selfish Model} 
We consider a model in which experts have agency about the prediction
they report, and care about the weight that they are assigned. In the
selfish model, at time $t$ the expert formulates a private belief
$\bt$ about the realization, but she is free to report any prediction
$\pt$ to the algorithm. Let $\text{Bern}(p)$ be a Bernoulli random
variable with parameter $p$. For any non-negative weight update
function $f$,
$$\max_p \E_\bt[\wtp]\ =\ \max_p \E_{r\sim \Bern\left(\bt\right)}[f\left(p,r\right)\wt]\ =\ \wt\cdot\left(\max_p  \E_{r\sim\Bern \left(\bt\right)}[f\left(p,r\right)]\right).$$
So expert $i$ will report whichever $\pt$ will maximize the expectation of the weight update function.

Performance of an algorithm with respect to the reported loss of experts follows from the standard analysis \citep{LW94}. However, the true loss may be worse (in Section~\ref{s:det-ub} we show this for the standard update rule, Section~\ref{s:det-non-ic-lb} shows it more generally). Unless explicitly stated otherwise, in the remainder of this paper $\mT = \sum_{t=1}^T \ell(\bt, \rt)$ refers to the \emph{true} loss of expert $i$. For now this motivates restricting the weight update rule $f$ to functions where reporting $\pt = \bt$ maximizes the expected weight of experts. We call these weight-update rules \emph{Incentive Compatible (IC)}.

\begin{definition}[Incentive Compatibility]
	\label{def:ic}
        A weight-update function $f$ is {\em incentive compatible
          (IC)} if reporting the true belief $\bt$ is 
always a best response 
        for every expert at every time step. 
It is {\em strictly IC} when
        $\pt = \bt$ is the only best response.
\end{definition}
By a ``best response,'' we mean an expected utility-maximizing report,
where the expectation is with respect to the expert's beliefs.

\paragraph{Collusion.} The definition of IC does not rule out the possibility that experts can collude to jointly misreport to improve their weights. We therefore also consider a stronger notion of incentive compatibility for groups with transferable utility.

\begin{definition}[Incentive Compatibility for Groups with Transferable Utility]
	\label{def:gic}
	A weight-update function $f$ is {\em incentive compatible for groups with transferable utility (TU-GIC)} if for every subset $S$ of players, the total expected weight of the group $\sum_{i\in S} \E_\bt[w_i^{(t+1)}]$ is maximized by each reporting their private belief $\bt$.
\end{definition}

Note that TU-GIC is a strictly stronger concept than IC, as for any algorithm that is TU-GIC, the condition needs to hold for groups of size 1, which is the definition of IC. The concept is also strictly stronger than that of GIC with nontransferable utility (NTU-GIC), where for every group $S$ it only needs to hold that there are no alternative reports that would make no member worse off, and at least one member better off \citep{M99}\citep{JM07}.

\subsection{Proper Scoring Rules} Incentivizing truthful reporting of beliefs has been studied extensively, and the set of functions that do this is called the set of proper scoring rules. Since we focus on predicting a binary event, we restrict our attention to this class of functions.

\begin{definition}[Binary Proper Scoring Rule, \citep{S89}]
	\label{def:psr}
A function $f : [0,1]\times\{0,1\}\rightarrow \R\cup \{\pm\infty\}$ is a \emph{binary proper scoring rule} if it is finite except possibly on its boundary and whenever for $p \in [0,1]$
	$$p \in \max_{q\in[0,1]} p\cdot f(q, 1) + (1-p)\cdot f(q, 0).$$
\end{definition}
A function $f$ is a \emph{strictly} proper scoring rule if $p$ is the only value that maximizes the expectation. The first perhaps most well-known proper scoring rule is the Brier scoring rule.
\begin{example}[Brier Scoring Rule, \citep{B50}]
	The \emph{Brier score} is $Br(p,r) = 2p_r - (p^2 + (1-p)^2)$ where $p_r = pr + (1-p)(1-r)$ is the report for the event that materialized.
\end{example}
We will use the Brier scoring rule in Section~\ref{s:rand} to construct an incentive-compatible randomized algorithm with good guarantees.
 The following proposition follows directly from Definitions~\ref{def:ic}~and~\ref{def:psr}:

\begin{proposition}
	\label{prop:psr}
A weight-update rule $f$ is (strictly) incentive compatible if and only if $f$ is a (strictly) proper scoring rule.
\end{proposition}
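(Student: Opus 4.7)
The plan is to unfold both definitions and observe that they describe the same optimization problem. Starting from Definition~\ref{def:ic}, an expert~$i$ with belief $\bt$ chooses a report $\pt$ to solve $\max_{p} \E_{\bt}[\wtp]$. Using the weight-update rule and the fact that $\wt$ is a positive scalar that does not depend on the report (it is determined by history), I factor it out:
\begin{equation*}
\E_{\bt}[\wtp] \;=\; \wt \cdot \E_{r \sim \Bern(\bt)}[f(\pt,r)] \;=\; \wt \cdot \bigl(\bt \, f(\pt,1) + (1-\bt)\, f(\pt,0)\bigr).
\end{equation*}
Since $\wt > 0$, the set of maximizers in $\pt$ coincides with the set of maximizers of $\bt f(\pt,1) + (1-\bt)f(\pt,0)$.

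Next I would compare this with Definition~\ref{def:psr}. Being a proper scoring rule means that for every $p \in [0,1]$, the value $p$ itself is a maximizer of $q \mapsto p f(q,1) + (1-p) f(q,0)$. Identifying the expert's belief $\bt$ with the parameter $p$ in the scoring-rule definition, the condition ``$\pt = \bt$ is a best response for every expert at every time step'' is exactly the condition ``$p$ is a maximizer of $q \mapsto p f(q,1) + (1-p) f(q,0)$ for every $p \in [0,1]$.'' This establishes the non-strict equivalence in both directions.

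For the strict version, I would note that strict IC requires $\pt = \bt$ to be the \emph{unique} best response, which, by the same factoring argument, is equivalent to $p$ being the unique maximizer of $q \mapsto p f(q,1) + (1-p) f(q,0)$ for every $p \in [0,1]$; this is exactly strict propriety. The only subtlety worth mentioning is that the equivalence relies on $\wt > 0$, which is guaranteed by Definition~\ref{def:wuopa} (the range of $f$ is $\R^+$, so weights remain strictly positive throughout) and on the fact that an expert's belief $\bt$ can in principle be any value in $[0,1]$, so the quantifier ``for every $p \in [0,1]$'' in the scoring-rule definition matches the quantifier ``for every expert at every time step'' in the IC definition. There is no real technical obstacle; the proposition is essentially a dictionary between two equivalent formulations.
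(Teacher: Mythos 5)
Your proposal is correct and follows essentially the same route as the paper, which treats the proposition as an immediate consequence of the two definitions after factoring the (positive, report-independent) current weight $\wt$ out of the expectation $\E_{r\sim\Bern(\bt)}[f(\pt,r)\,\wt]$ — exactly the identity you display. Your additional remarks on uniqueness for the strict case and on the matching of quantifiers are consistent with the paper's reading of Definitions~\ref{def:ic} and~\ref{def:psr}.
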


Surprisingly, this result remains true even when experts can collude.
While the realizations are obviously correlated, linearity of expectation causes the sum to be maximized exactly when each expert maximizes their expected weight.

\begin{proposition}
	\label{prop:gic}
	A weight-update rule $f$ is (strictly) incentive compatible for groups with transferable utility if and only if $f$ is a (strictly) proper scoring rule.
\end{proposition}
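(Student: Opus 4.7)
The plan is to derive both directions of the equivalence almost immediately from Proposition~\ref{prop:psr} together with linearity of expectation.

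For the ``only if'' direction, apply the TU-GIC condition to singleton groups $S=\{i\}$: this recovers exactly the IC condition of Definition~\ref{def:ic}, so Proposition~\ref{prop:psr} yields that $f$ must be a (strictly) proper scoring rule.

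For the ``if'' direction, suppose $f$ is a (strictly) proper scoring rule. Fix any time step $t$, any subset $S$ of experts, any profile of beliefs $\{\bt\}_{i\in S}$, and any history up to time $t$ (so every weight $\wt>0$ is determined). The structural observation driving the proof is that expert $i$'s updated weight $\wtp = f(\pt,\rt)\cdot\wt$ depends on the group's joint report profile only through its single coordinate $\pt$. Linearity of expectation then yields
$$\sum_{i\in S}\E_\bt[\wtp] \;=\; \sum_{i\in S}\wt\cdot \E_{r\sim\Bern(\bt)}[f(\pt,r)],$$
and since $\wt>0$ and the $i$th summand depends only on $\pt$, the whole sum is maximized by maximizing each summand separately. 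The (strictly) proper scoring rule property of $f$ then says each summand is (uniquely) maximized at $\pt=\bt$, which is exactly the condition required by Definition~\ref{def:gic}.

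There is no real technical obstacle; the only conceptual subtlety --- and the reason the authors call the result ``surprising'' --- is that although the random variables $\{\wtp\}_{i\in S}$ are coupled through their common dependence on the single realization $\rt$, their expectations decouple, because $\E_\bt[\wtp]$ depends only on expert $i$'s own report. Consequently, transferable utility buys the group nothing beyond what each expert can already obtain by optimizing unilaterally, so no coordinated misreport or side payment can strictly improve on truthful reporting.
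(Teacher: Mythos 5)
Your proof is correct and matches the paper's (much terser) argument exactly: the paper justifies Proposition~\ref{prop:gic} with the single observation that, by linearity of expectation, each term $\E_\bt[\wtp]$ in the group sum depends only on expert $i$'s own report, so the sum decouples and is maximized term-by-term at truthful reports. Your write-up simply makes explicit the singleton-group reduction for the ``only if'' direction and the term-by-term optimization for the ``if'' direction, both of which are the intended reasoning.
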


Thus, for online prediction with selfish experts, we get TU-GIC ``for free.''It is quite uncommon for problems in non-cooperate game theory to admit good TU-GIC solutions. For example, results for auctions (either for revenue or welfare) break down once bidders collude, see \citep{GH05} and references therein for more examples from theory and practice. In the remainder of the paper we will simply us IC to refer to both incentive compatibility and incentive compatibility for groups with transferable utility, as strictly proper scoring rules lead to algorithms that satisfy both definitions.

So when considering incentive compatibility in the online prediction
with selfish experts setting, we are restricted to considering proper
scoring rules as weight-update rules. Moreover, since $f$ needs to be
positive, only bounded proper scoring rules can be used. 
Conversely, any bounded scoring
rule can be used, possibly after an affine transformation (which preserves
proper-ness). 
Are there any proper scoring rules that give an online prediction
algorithm with a good performance guarantee?

\subsection{Online Learning with Quadratic Losses}
The first goal of this paper is to describe the class of algorithms that lead incentive compatible learning. Proposition~\ref{prop:psr} answers this question, so we can move over to our second goal, which is for different loss functions, do there exist incentive compatible algorithms with good performance guarantees? In this subsection we show that a corollary of Proposition~\ref{prop:psr} is that the standard MW algorithm with the quadratic loss function $\ell(p,r) = (p-r)^2$ is incentive compatible.

\begin{corollary}
	The standard WM algorithm with quadratic losses, i.e. $w_i^{(t+1)}=(1-\eta(\pt-\rt))^2\cdot\wt$ is incentive compatible.
\end{corollary}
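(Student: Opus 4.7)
The plan is to invoke Proposition~\ref{prop:psr} and reduce the claim to the statement that the weight-update rule $f(p,r) = 1 - \eta(p-r)^2$ is a (strictly) proper scoring rule on $[0,1] \times \{0,1\}$. Since properness is an affine-invariant property (adding a constant and scaling by a positive number preserves the location of the argmax), this is essentially the classical fact that the quadratic score $-(p-r)^2$ elicits the mean of a Bernoulli, and hence the belief $b$ in the binary case.

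Concretely, I would just compute. Fix an expert with belief $\bt = b$ who reports $p$. Then
\begin{equation*}
\frac{\E_{r \sim \Bern(b)}[w_i^{(t+1)}]}{\wt} = b\bigl(1 - \eta(p-1)^2\bigr) + (1-b)\bigl(1 - \eta p^2\bigr) = 1 - \eta\bigl((p-b)^2 + b(1-b)\bigr),
\end{equation*}
where the second equality comes from expanding and collecting terms. The right-hand side is maximized over $p \in [0,1]$ uniquely at $p = b$, so reporting truthfully is the unique best response. By Definition~\ref{def:ic}, $f$ is strictly incentive compatible, and by Proposition~\ref{prop:psr} it is a strictly proper scoring rule.

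Two small bookkeeping items remain. First, the weight-update function must be positive (Definition~\ref{def:wuopa}); since $(p-r)^2 \le 1$ and $\eta \in (0, \tfrac{1}{2})$, we have $f(p,r) \ge 1 - \eta > 0$, so this is satisfied. Second, I would flag the apparent typo in the displayed formula — the natural reading is $w_i^{(t+1)} = (1 - \eta(\pt - \rt)^2)\cdot \wt$, i.e., the standard multiplicative update $1 - \eta \ell(\pt,\rt)$ with $\ell(p,r) = (p-r)^2$ — and proceed with that interpretation.

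There is no real obstacle here: the content of the corollary is precisely that Proposition~\ref{prop:psr} makes the quadratic case trivial, in contrast to the absolute-loss case that motivates the rest of the paper. The only step worth writing down carefully is the one-line expansion above that exhibits the expected weight as $1 - \eta(p-b)^2 - \eta b(1-b)$, from which uniqueness of the maximizer at $p = b$ is immediate.
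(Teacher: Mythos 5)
Your proposal is correct and matches the paper's own argument: both reduce the claim via Proposition~\ref{prop:psr} to checking that the expected updated weight $b(1-\eta(p-1)^2)+(1-b)(1-\eta p^2)$ is uniquely maximized at $p=b$, the only cosmetic difference being that you complete the square to get $1-\eta\bigl((p-b)^2+b(1-b)\bigr)$ while the paper takes a derivative. Your added remarks on positivity of the update rule and the typo in the displayed formula are accurate but do not change the substance.
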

\begin{proof}
	By Proposition~\ref{prop:psr} it is sufficient to show that $\bt=\max_p \bt\cdot (1-\eta(p-1)^2) + (1-\bt)\cdot (p-0)^2$.
	
\begin{align*}
&\max_p \bt\cdot (1-\eta(p-1)^2) + (1-\bt)\cdot (1-\eta(p-0)^2)\\
= &\max_p\bt - \bt\eta p^2 + 2\bt\eta p - \bt\eta + 1-\bt - \eta p^2 + \bt\eta p^2\\
= &\max_p 1- \bt\eta  + 2\bt\eta p  - \eta p^2 \\
= &\max_p 1- \bt\eta  + \eta p(2\bt-p)
\end{align*}
To solve this for $p$, we take the derivative with respect to $p$:
$\frac{d}{dp} 1- \bt\eta  + \eta p(2\bt-p) = \eta(2\bt - 2p)$. So the maximum expected value is uniquely $p=\bt$.
\end{proof}
A different way of proving the Corollary is by showing that the
standard update rule with quadratic losses can be translated into the
Brier strictly proper scoring rule. Either way, for applications with
quadratic losses, 
the standard algorithm already works out of the box.
However, as we saw in Example~\ref{ex:obvious}, 
this is not the case with the absolute loss function. 
As the absolute loss function arises in practice---recall
that FiveThirtyEight uses absolute loss to calculate their pollster
ratings---in the remainder of this paper we focus on answering
questions (2) and (3) from the introduction for the absolute loss function.


\section{Deterministic Algorithms for Selfish Experts}
\label{s:det-ub}

This section studies the question if there are good online prediction
algorithms for the absolute loss function. We restrict our attention
here to deterministic algorithms; Section~\ref{s:rand} gives a
randomized algorithm with good guarantees.

Proposition~\ref{prop:psr} tells us that for selfish experts to have
a strict incentive to report truthfully, the weight-update rule must
be a strictly proper scoring rule. 
This section gives a
deterministic algorithm based on the \emph{spherical} strictly proper
scoring rule that has no $(2+\sqrt 2)$-regret
(Theorem~\ref{thm:spherical-det-ub}). Additionally, we consider the
question if the non-truthful reports from experts in using the
standard (non-IC) WM algorithm are harmful. We show that this is the
case by proving that the algorithm is not a  no $(4-O(1))$-regret algorithm, for any constant smaller than 4 (Proposition~\ref{prop:standard-det-lb}). This shows that,
when experts are selfish, the
IC online prediction algorithm
with the spherical rule outperforms
the standard WM algorithm (in the worst case).

\subsection{Deterministic Online Prediction using a Spherical Rule}
\label{ss:det-ub}
We next give an algorithm that uses a strictly proper scoring rule
that is based on the spherical rule scoring rule.\footnote{See Appendix~\ref{s:selection} for intuition about why this rule
yields better results than other natural candidates, such as the Brier
scoring rule.} In the following, let $\st = |\pt - \rt|$ be the absolute loss of expert $i$.

Consider the following weight-update rule:
\begin{align}
\label{eq:spherical}
f_{sp}\left(\pt, \rt\right) =  1-\eta\left(1-\frac{1-\st}{\sqrt{\left(\pt\right)^2 + \left(1-\pt\right)^2}}\right).
\end{align}
The following proposition establishes that this is in
fact a strictly proper scoring rule.
\begin{proposition}
The spherical weight-update rule in \eqref{eq:spherical} is a strictly
proper scoring rule.	
\end{proposition}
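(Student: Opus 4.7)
The plan is to verify Definition~\ref{def:psr} directly: for every belief $b \in [0,1]$, show that $p = b$ is the unique maximizer of $b \cdot f_{sp}(p,1) + (1-b) \cdot f_{sp}(p,0)$. The constant $1 - \eta$ and the positive factor $\eta$ do not affect the argmax, so proper-ness of the update rule reduces to proper-ness of the classical binary spherical rule, and the $\eta$ merely rescales.

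First I would substitute $r = 1$ (so $s = 1-p$) and $r = 0$ (so $s = p$) to get
\begin{align*}
f_{sp}(p,1) &= 1 - \eta + \frac{\eta\, p}{\sqrt{p^2 + (1-p)^2}}, \\
f_{sp}(p,0) &= 1 - \eta + \frac{\eta\, (1-p)}{\sqrt{p^2 + (1-p)^2}}.
\end{align*}
Taking the $b$-weighted combination and discarding the additive constant $1 - \eta$ and positive factor $\eta$, the maximization reduces to
$$
\max_{p \in [0,1]} \; g(p) \;:=\; \frac{b p + (1-b)(1-p)}{\sqrt{p^2 + (1-p)^2}}.
$$

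The key step is to recognize $g(p)$ as an inner product divided by a norm and invoke Cauchy--Schwarz. Writing $u = (b, 1-b)$ and $v(p) = (p, 1-p)$, we have $g(p) = \langle u, v(p)\rangle / \|v(p)\|_2 \le \|u\|_2 = \sqrt{b^2 + (1-b)^2}$, with equality if and only if $v(p)$ is a positive scalar multiple of $u$. Since both $u$ and $v(p)$ lie on the line $x + y = 1$ with nonnegative coordinates, equality forces $v(p) = u$, i.e.\ $p = b$. This both establishes that $p = b$ is optimal and that it is the \emph{unique} optimum, which is exactly strict proper-ness.

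I do not expect a real obstacle here — the computation after substituting $r \in \{0,1\}$ collapses to the standard spherical rule, and the Cauchy--Schwarz argument handles uniqueness cleanly without having to verify second-order conditions. The only thing worth double-checking is that $f_{sp}$ is positive on its domain (needed for it to serve as a weight-update function in Definition~\ref{def:wuopa}), which follows because $1 - s \le \sqrt{p^2 + (1-p)^2}$ (again by Cauchy--Schwarz applied to $(1,1)$ and $(p, 1-p)$ when $r = 1$, and symmetrically when $r = 0$), so the quantity in the parentheses lies in $[0,1]$ and hence $f_{sp} \ge 1 - \eta > 0$ for $\eta \in (0, \tfrac12)$.
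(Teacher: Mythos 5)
Your proof is correct, but it takes a different route from the paper's. The paper's proof is a one-liner: it cites the classical spherical rule $(1-\st)/\sqrt{(\pt)^2+(1-\pt)^2}$ as a known strictly proper scoring rule and only verifies that \eqref{eq:spherical} is a positive affine transformation of it (which preserves strict proper-ness). You instead re-derive strict proper-ness of the spherical rule from first principles: after stripping the affine transformation, you identify the objective as $\langle u, v(p)\rangle/\|v(p)\|_2$ with $u=(b,1-b)$ and $v(p)=(p,1-p)$, and Cauchy--Schwarz pins down $p=b$ as the unique maximizer because both vectors live on the simplex $x+y=1$. Your argument is self-contained and cleanly handles uniqueness without second-order conditions, at the cost of redoing a standard fact; the paper's is shorter but leans on the literature. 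Your extra check that $f_{sp}>0$ is a worthwhile addition (the paper needs positivity for Definition~\ref{def:wuopa} but does not verify it here), though note a small slip: the bound $1-\st\le\sqrt{(\pt)^2+(1-\pt)^2}$ follows from Cauchy--Schwarz applied to $(1,0)$ (when $\rt=1$, so $1-\st=\pt$) or $(0,1)$ (when $\rt=0$), not to $(1,1)$, whose inner product with $(\pt,1-\pt)$ is $1$ rather than $1-\st$. The inequality you assert is nevertheless true, so nothing downstream breaks.
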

\begin{proof}
  The standard spherical strictly proper scoring rule is
  $(1-\st)/\sqrt{(\pt)^2 + (1-\pt)^2}$. Any positive affine
  transformation of a strictly proper scoring rule yields another
  strictly proper scoring rule, see e.g. \citep{GR07}, hence
  $(1-\st)/\sqrt{(\pt)^2 + (1-\pt)^2}-1$ is also a strictly proper
  scoring rule. Now we multiply this by $\eta$ and add $1$ to
  obtain$$1+\eta\left(\frac{1-\st}{\sqrt{\left(\pt\right)^2 +
        \left(1-\pt\right)^2}}-1\right),$$
  and rewriting proves the claim.
\end{proof}

In addition to incentivizing truthful reporting,
the WM algorithm with the update rule $f_\text{sp}$ 
does not do much worse than the best expert in hindsight. 
(See the appendix for the proof.)

\begin{theorem}\label{thm:spherical-det-ub}
The WM algorithm with weight-update rule \eqref{eq:spherical} for $\eta=O(1/\sqrt T)<\frac12$ has no $(2+\sqrt 2)$-regret.
\end{theorem}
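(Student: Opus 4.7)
The plan is to apply the standard Weighted Majority potential-function argument with $\Phi^{(t)} = \sum_{i=1}^n w_i^{(t)}$ and $\Phi^{(0)} = n$, specialized to the spherical update. Two elementary pointwise inequalities for $N_i^{(t)} := \sqrt{(p_i^{(t)})^2 + (1 - p_i^{(t)})^2}$ drive the whole argument. Since $N_i^{(t)} \in [1/\sqrt 2, 1]$, with the minimum attained only at $p_i^{(t)} = 1/2$, we have $p/N \leq \sqrt 2\,p$ and symmetrically $(1-p)/N \leq \sqrt 2\,(1-p)$; in the reverse direction, $N \leq 1$ gives $(1 - s)/N \geq 1 - s$, so $f_{sp}(p_i^{(t)}, r^{(t)}) \geq 1 - \eta\, s_i^{(t)}$. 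Note also that $f_{sp} \leq 1$ always, so $\Phi^{(t+1)} \leq \Phi^{(t)}$ unconditionally.

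First I would upper bound $\Phi^{(T)}$ in terms of the algorithm's mistake count $M^{(T)}$. On a mistake round---say $r^{(t)} = 1$ and $q^{(t)} = 0$, which by the WM prediction rule forces $\sum_i w_i^{(t)} p_i^{(t)} < \Phi^{(t)}/2$---the first inequality yields
\[
\Phi^{(t+1)} = (1-\eta)\Phi^{(t)} + \eta \sum_i w_i^{(t)} \tfrac{p_i^{(t)}}{N_i^{(t)}} \leq (1-\eta)\Phi^{(t)} + \eta\sqrt 2 \sum_i w_i^{(t)} p_i^{(t)} < (1 - c\eta)\Phi^{(t)},
\]
where $c = 1 - 1/\sqrt 2$. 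Chaining over all $T$ rounds (non-mistake rounds contributing at most a factor of $1$) gives $\Phi^{(T)} \leq n(1 - c\eta)^{M^{(T)}} \leq n\exp(-c\eta M^{(T)})$.

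Next I would lower bound $\Phi^{(T)} \geq w_i^{(T)}$ for any fixed expert $i$. From $f_{sp} \geq 1 - \eta\, s_i^{(t)}$ and the elementary inequality $\ln(1 - y) \geq -y - y^2$ valid for $y \in [0, 1/2]$ (applicable since $\eta < 1/2$ and $s_i^{(t)} \leq 1$), summing over $t$ yields $\ln w_i^{(T)} \geq -\eta\, m_i^{(T)} - \eta^2 T$. Combining with the upper bound produces
\[
-\eta\, m_i^{(T)} - \eta^2 T \;\leq\; \ln n - c\eta\, M^{(T)},
\]
so $M^{(T)} \leq c^{-1}\bigl(m_i^{(T)} + \eta T + (\ln n)/\eta\bigr)$. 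Rationalizing, $c^{-1} = (1 - 1/\sqrt 2)^{-1} = 2 + \sqrt 2$, and taking $\eta = \Theta(1/\sqrt T)$ (for instance $\eta = \sqrt{(\ln n)/T}$) drives the additive slack to $O(\sqrt{T \log n}) = o(T)$, establishing no $(2 + \sqrt 2)$-regret.

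The main obstacle is finding the right pair of sandwich inequalities for the somewhat awkward spherical rule. The constant $\sqrt 2$ in $p/N \leq \sqrt 2\,p$ is tight precisely at $p = 1/2$, which is exactly where the deterministic algorithm is most indecisive on a mistake round, so the contraction factor $1 - 1/\sqrt 2$---and hence the final $2 + \sqrt 2$ regret factor---is what the potential-function approach naturally delivers; the remaining bookkeeping is routine.
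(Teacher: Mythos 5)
Your proposal is correct and follows essentially the same route as the paper's own proof: the potential $\Phi^{(t)}=\sum_i w_i^{(t)}$, the sandwich $\sqrt{(p)^2+(1-p)^2}\in[1/\sqrt2,1]$ giving the contraction $\bigl(1-\tfrac{2-\sqrt2}{2}\eta\bigr)$ on mistake rounds and the per-expert lower bound $f_{sp}\geq 1-\eta s_i^{(t)}$, and the identity $\bigl(1-1/\sqrt2\bigr)^{-1}=2+\sqrt2$. The only (immaterial) difference is bookkeeping in the lower bound, where you incur an additive $\eta^2 T$ rather than the paper's $\eta^2 m_i^{(T)}$; both are $o(T)$ for $\eta=\Theta(1/\sqrt{T})$.
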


\subsection{True Loss of the Non-IC Standard Rule}
\label{ss:true-loss}

It is instructive to compare the guarantee in
Theorem~\ref{thm:spherical-det-ub} with the performance of the
standard (non-IC) WM algorithm.
With the standard weight update function $f(\pt,\rt) = 1-\eta\st$
for $\eta\in(0,\tfrac12)$, the WM algorithm has the guarantee that
$\MT\leq 2\left((1+\eta)\mT + \tfrac{\ln n}{\eta}\right)$
with respect to the {\em reported} loss of experts. However, 
Example~\ref{ex:obvious} demonstrates that 
this algorithm incentivizes extremal reports, i.e. if $\bt\in [0,\tfrac12)$
the expert will report $\pt=0$ and if $\bt\in(\tfrac12,1]$ the expert
will report $1$. The following proposition shows that, in the worst
case, this algorithm does no better than a factor $4$ times the {\em
  true} loss of the best expert in hindsight.
Theorem~\ref{thm:spherical-det-ub} shows that a suitable IC algorithm
can obtain a superior worst-case guarantee.

\begin{proposition}
	\label{prop:standard-det-lb}
The standard WM algorithm with weight-update rule $f\left(\pt,\rt\right) = 1-\eta|\pt - \rt|$ results in a total worst-case loss no better than
	$$
	\MT \geq 4\cdot \min_i \mT - o(1).
	$$
\end{proposition}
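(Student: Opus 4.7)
The plan exploits Example~\ref{ex:obvious}, which shows that under the non-IC weight-update rule $1-\eta|\pt-\rt|$, each expert's best response is the extremal report $\pt = \mathbf{1}[\bt \geq 1/2]$. This fully decouples an expert's reported loss (always in $\{0,1\}$) from her true loss (which depends continuously on $\bt$), and I would build an adversarial instance that maximizes the resulting gap by arranging for the best expert's true loss to be exactly a quarter of the algorithm's loss.

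Concretely, I would use $n$ experts (with $n$ chosen as a function of $T$ and $\eta$): expert~$1$ has the constant belief $b_1^{(t)} = 1/4$ and therefore reports $p_1^{(t)} = 0$, while experts $2,\ldots,n$ all have constant belief $b_i^{(t)} = 3/4$ and report $p_i^{(t)} = 1$. The adversary sets $\rt = 0$ for every~$t$. Then $\mT^{(1)} = T\cdot|1/4 - 0| = T/4$ and $\mT^{(i)} = 3T/4$ for $i\geq 2$, so $\min_i \mT = T/4$. On the algorithm side, expert~$1$ is always correct and retains $w_1^{(t)} = 1$, while each of the other $n-1$ experts is always wrong and has weight $(1-\eta)^{t-1}$ at the start of round~$t$. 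The total weight on prediction~$1$ is $(n-1)(1-\eta)^{t-1}$, so WM predicts~$1$ (and errs, since $\rt=0$) at round~$t$ precisely when $(n-1)(1-\eta)^{t-1} \geq 1$. Choosing $n \geq 1 + (1-\eta)^{-(T-1)}$ makes this hold throughout, giving $\MT = T = 4\cdot(T/4) = 4\min_i \mT$; the $-o(1)$ slack absorbs the $O(1)$ boundary correction that arises if $n$ is chosen only just barely large enough.

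The crux is calibrating the belief value so that the bound is simultaneously meaningful and tight. Values closer to~$0$ would drive $\min_i \mT$ to zero and satisfy the bound only vacuously, whereas beliefs near~$1/2$ keep the per-round true loss close to $1/2$ and yield only the factor-$2$ guarantee familiar from the classical honest-experts analysis. The constant $b_1 = 1/4$ is exactly what pins down the factor~$4$. The main obstacle is then preventing WM from ``recovering'' during the horizon: keeping the weight on prediction~$1$ above the weight on~$0$ at every round requires an exponentially (in $\eta T$) large majority of experts on the reporting-$1$ side, which is why the construction lets $n$ depend on~$T$.
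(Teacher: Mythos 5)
There is a genuine gap. Your instance only achieves the factor $4$ by letting the number of experts grow as $n \approx (1-\eta)^{-(T-1)}$, i.e.\ exponentially in $\eta T$. This defeats the purpose of the proposition in two ways. First, the proposition is meant to show that the standard rule is not a no-$(4-c)$-regret algorithm for any $c>0$; under the paper's regret framework the additive slack is allowed to scale like $\ln n/\eta$ (this is exactly the form of the classical guarantee $\MT \leq 2(1+\eta)\mT + 2\ln n/\eta$), and on your instance $\ln n/\eta \approx T-1$, so the instance is perfectly consistent with a bound of the form $\MT \leq 2\min_i\mT + O(\ln n/\eta)$ and refutes nothing. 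Second, and more tellingly, the same phenomenon occurs with \emph{honest} experts: if your experts truthfully reported $1/4$ and $3/4$, the weight ratio between the lone correct expert and the crowd decays only geometrically, so choosing $n-1 \geq \left(\frac{1-\eta/4}{1-3\eta/4}\right)^{T}$ again forces $\MT = T = 4\min_i\mT$. Your construction therefore exploits the well-known $\ln n$ dependence of WM rather than anything about selfish reporting, which is the entire point of the proposition (it is paired with Theorem~\ref{thm:spherical-det-ub} to argue that an IC rule beats the non-IC one \emph{in the worst case over instances with the additive term negligible}).

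The missing idea is to keep $n$ fixed and instead drive a wedge between the \emph{reported} loss (which controls the weights and is forced into $\{0,1\}$ by the extremal best responses) and the \emph{true} loss (which can be as small as $\approx 1/2$ per reported mistake when the belief sits just next to $1/2$). The paper uses two experts reporting $0$ and $1$ at every step, with realizations chosen opposite to the algorithm's prediction so that $\MT = T$; the beliefs are then set to $1/2\mp\epsilon$ on the rounds where the respective expert's report is wrong (and to $0$ or $1$ when it is right), which is consistent with the rounding incentive. The two experts' reported mistakes sum to $T$, so the better one has at most $T/2$ of them, each contributing true loss only $\tfrac12+\epsilon$; hence $\min_i\mT \leq (\tfrac12+\epsilon)T/2 \approx T/4$ while $\MT = T$, with $n=2$ throughout. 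That halving of true loss relative to reported loss is precisely what lifts the classical factor $2$ to $4$, and it is the step your construction does not supply.
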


\begin{proof}
  Let $A$ be the standard weighted majority algorithm. We create an
  instance with 2 experts where $\MT \geq 4\cdot \min_i \mT -
  o(1)$.
  Let the reports $p_1^{(t)}=0$, and $p_2^{(t)}=1$ for all
  $t\in1, ..., T$; we will define $\bt$ shortly. Given the reports,
  $A$ will choose a sequence of predictions, let $\rt$ be $1$ whenever
  the algorithm chooses $0$ and vice versa, so that $\MT=T$.
	
Now for all $t$ such that $\rt = 1$, set $b_1^{(t)}=\tfrac12 -
\epsilon$ and $b_2^{(t)}=1$, and for all $t$ such that $\rt=0$ set
$b_1^{(t)}=0$ and $b_2^{(t)}=\tfrac12 + \epsilon$, for small
$\epsilon>0$. Note that the beliefs $\bt$ indeed lead to the reports
$\pt$ since $A$ incentivizes rounding the reports to the nearest
integer. 
	
Since the experts reported opposite outcomes, their combined total
number of incorrect reports is $T$, hence the best expert had a
reported loss  of at most $T/2$. For each incorrect report $\pt$, the
real loss of expert is $|\rt - \bt| = \tfrac12+\epsilon$, hence
$\min_i \mT \leq \left(\tfrac12 + \epsilon\right)T/2$, while $\MT =
T$. Taking $\epsilon = o(T^{-1})$ yields the claim. 
\end{proof}

\section{The Cost of Selfish Experts for IC algorithms}
\label{s:det-lb}

We now address the third fundamental question:
whether or not online prediction with selfish experts is strictly
harder than with honest experts.
In this section we restrict our attention to deterministic algorithms; we
extend the results to randomized algorithms in
Section~\ref{s:rand}. As there exists a deterministic algorithm for
honest experts with no $2$-regret, showing a separation between honest and
selfish experts boils down to proving that there exists a constant
$\delta$ such that the worst-case loss is no better than a factor of
$2+\delta$
(with $\delta$ bounded away from 0 as $T\rightarrow \infty$).

In this
section we show that such a $\delta$ exists for all incentive
compatible algorithms, and that $\delta$ depends on properties of a
``normalized'' version of the weight-update rule $f$, independent of
the learning rate. This implies that the lower bound also holds for
algorithms that, like the classical prediction algorithms, use a
time-varying learning rate. In Section~\ref{s:det-non-ic-lb} we show
that under mild technical conditions the true loss of non-IC
algorithms is also bounded away from $2$, and in Section~\ref{s:rand}
the lower bounds for deterministic algorithms are used to show that
there is no randomized algorithm that achieves vanishing regret.

To prove the lower bound, we have to be specific about which set of
algorithms we consider. To cover algorithms that have a decreasing
learning parameter, we first show that any positive proper scoring
rule can
be interpreted as having a learning parameter $\eta$.

\begin{proposition}\label{prop:rewrite}
Let $f$ be any strictly proper scoring rule. We can write $f$ as
$f(p,r) = a +  b f'(p,r)$ with $a\in\R$, $b\in \R^+$ and $f'$ a
strictly proper scoring rule with $\min(f'(0,1), f'(1,0))=0$ and
$\max(f'(0,0), f'(1,1))=1$.  
\end{proposition}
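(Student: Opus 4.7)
The plan is to produce the desired $f'$ as a positive affine rescaling of $f$ that sends a specified lower value to~$0$ and a specified upper value to~$1$. Since a positive affine transformation of a strictly proper scoring rule is again strictly proper (noted earlier in the paper, citing \citep{GR07}), it suffices to choose the shift and scale correctly.

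First, I would pin down the key numerical facts. Let
\[
m = \min\bigl(f(0,1),\, f(1,0)\bigr), \qquad M = \max\bigl(f(0,0),\, f(1,1)\bigr).
\]
Strict properness at the belief $p = 1$ (so the ``true'' distribution is a point mass on $r=1$) forces the unique maximizer over reports to be $1$, giving $f(1,1) > f(0,1)$. Symmetrically at $p = 0$ we get $f(0,0) > f(1,0)$. Both inequalities imply $M > m$.

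Next, I would set
\[
a = m, \qquad b = M - m > 0, \qquad f'(p,r) = \frac{f(p,r) - a}{b},
\]
so that $f(p,r) = a + b\,f'(p,r)$ with $a \in \R$ and $b \in \R^+$ as required. Because $f'$ is obtained from $f$ by a positive affine transformation, $f'$ is a strictly proper scoring rule. Finally, the normalization conditions follow by direct substitution:
\[
\min\bigl(f'(0,1), f'(1,0)\bigr) = \frac{m - m}{b} = 0, \qquad
\max\bigl(f'(0,0), f'(1,1)\bigr) = \frac{M - m}{b} = 1.
\]

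There is no real obstacle here; the only subtle point is justifying $b > 0$, which I would flag explicitly as coming from strict properness rather than mere properness (for a merely proper rule, $f(1,1)$ and $f(0,1)$ could be equal, allowing $M = m$ and a degenerate scaling). Everything else is a one-line calculation.
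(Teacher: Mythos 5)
Your proof is correct and follows essentially the same route as the paper: both define $f'$ as the positive affine rescaling $(f - f_{\min})/(f_{\max}-f_{\min})$ and invoke the fact that positive affine transformations preserve strict properness. You additionally justify $b>0$ via strict properness at the point-mass beliefs $p=0$ and $p=1$, a detail the paper's proof leaves implicit.
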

\begin{proof}
Let $f_{min} = \min(f(0,1), f(1,0))$ and $f_{max} = \max(f(0,0),
f(1,1))=1$. Then define $f'(p,r) = \frac{f(p,r) -
  f_{min}}{f_{max}-f_{min}}$, $a=f_{min}$ and $b=f_{max}-f_{min}$. This is a positive affine translation, hence $f'$ is a strictly proper scoring rule.
\end{proof}

We call $f' : [0,1]\times\{0,1\} \rightarrow [0,1]$ a \emph{normalized} scoring rule. Using normalized scoring rules, we can define a family of scoring rules with different learning rates $\eta$.

\begin{definition}\label{def:family}
Let $f$ be any normalized strictly proper scoring rule. 
Define $\F$ as the following
family of proper scoring rules generated by $f$: 
$$\F = \{f'(p,r) = a\left(1+\eta (f(p,r) - 1)\right)  : a>0 \text{ and }
\eta\in(0,1)\}$$ 
\end{definition}

By Proposition~\ref{prop:rewrite} the union of families generated by
normalized strictly proper scoring rules cover all strictly proper
scoring rules. Using this we can now formulate the class of deterministic algorithms that are incentive compatible.

\begin{definition}[Deterministic Incentive-Compatible Algorithms]
  Let $\Ad$ be the set of deterministic algorithms that update weights
  by $w_i^{(t+1)} = a(1+\eta (f(\pt,\rt)-1))\wt$, for a normalized
  strictly proper scoring rule $f$ and $\eta\in(0,\tfrac12)$ with
  $\eta$ possibly decreasing over time. For
  $q=\sum_{i=1}^n \wt\pt/\sum_{i=1}^n \wt$, $A$ picks $\qt=0$ if
  $q<\tfrac12$, $\qt=1$ if $q>\tfrac12$ and uses any deterministic tie
  breaking rule for $q=\tfrac12$.
\end{definition}

Using this definition we can now state our main result:

\begin{theorem}\label{thm:main-lb}
For the absolute loss function,
there does not exists a deterministic and incentive-compatible algorithm
  $A\in \Ad$ with no $2$-regret.
\end{theorem}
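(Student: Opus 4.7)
The plan is to show that, for every $A \in \Ad$, there is an adversarial sequence of beliefs and realizations with $\MT \geq 2\min_i \mT + \Omega(T)$, violating no~$2$-regret. Throughout, let $A$ be specified by a normalized strictly proper scoring rule $f$ together with a (possibly time-varying) learning rate $\eta^{(t)} \in (0, 1/2)$. The scale factor $a$ in the definition of $\Ad$ applies to all weights equally and so drops out of the algorithm's decisions; I will ignore it.

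The starting structural fact is that, since $f$ is strictly proper and normalized, the expected-score function $G(p) = p f(p, 1) + (1 - p) f(p, 0)$ is strictly convex on $[0,1]$ with $G(0) = G(1) = 1$, so $G(p) < 1$ for every $p \in (0, 1)$. Equivalently, the ``scoring-rule loss'' $\ell_f(p, r) := 1 - f(p, r)$ is in general strictly smaller than the absolute loss $|p - r|$ at intermediate reports. Let $\Delta_f := \max_{p \in [0,1], r \in \{0,1\}}\bigl(|p - r| - \ell_f(p, r)\bigr) > 0$; this number depends only on the normalized~$f$ and not on $\eta$ or~$T$, and it will govern the $\delta$ in the $(2 + \delta)$-regret lower bound.

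I would then construct an instance with $n = 2$ experts having fixed beliefs $b_1 = 0$ and $b_2 = 1 - \delta$ throughout, where $\delta = \delta(f, \eta) > 0$ is tuned to~$f$ and the scale of~$\eta^{(t)}$. Incentive compatibility forces truthful reports. At each step, the adversary observes the current weights, notes that $A$ predicts~$1$ iff $\rho^{(t)} := w_2^{(t)}/w_1^{(t)} > 1/(1 - 2\delta)$, and sets $r^{(t)}$ opposite to the algorithm's prediction; this guarantees $\MT = T$. Under the adversary's play $\rho^{(t)}$ oscillates in cycles of phase-$1$ steps ($r = 1$: expert~$1$ incurs absolute loss~$1$, expert~$2$ incurs only~$\delta$, $\rho$~grows by $(1 - \eta(1 - f(1-\delta,1)))/(1-\eta)$) and phase-$2$ steps ($r = 0$: expert~$1$ incurs loss~$0$, expert~$2$ incurs $1-\delta$, $\rho$~shrinks by $1 - \eta(1 - f(1-\delta,0))$). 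A careful per-cycle accounting combined with the tuning of~$\delta$ using $\Delta_f$ then yields $\min_i \mT \leq T/2 - \Omega(T)$ and hence the claimed $\Omega(T)$ separation.

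The main obstacle is extracting an $\Omega(T)$ separation from what naively appears to be a tied ``$T/2$ each'' steady state. The refinement uses $\Delta_f$ to quantify the asymmetry between $f(1-\delta, 1)$ and $f(1-\delta, 0)$, showing that cycle lengths and per-cycle losses differ in a direction that persistently favors expert~$2$, so its per-cycle absolute loss is strictly below half the cycle length by a fixed constant. The argument must also handle time-varying $\eta^{(t)}$ by a case split: if $\sum_t \eta^{(t)}$ is bounded (weights essentially frozen), then $A$ predicts $0$ from the start since the initial aggregate is $(1-\delta)/2 < 1/2$, and $r^{(t)} = 1$ at every step gives $\MT = T$ while $\min_i \mT \leq T\delta$, trivially violating no~$2$-regret for small~$\delta$; otherwise $\sum_t \eta^{(t)}$ diverges, the cycle dynamics occur on a positive density of steps, and the per-cycle analysis applies with $\delta$ tuned to match the scale of~$\eta^{(t)}$.
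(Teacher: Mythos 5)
Your strategy is genuinely different from the paper's. The paper uses a three-expert, two-phase construction: in phase one an expert reporting $\tfrac12$ accumulates strictly more weight than two experts who alternate being right and wrong with extreme reports (this is exactly the ``scoring rule gap'' $f(\tfrac12)-\tfrac12>0$ forced by strict properness); in phase two that over-weighted expert is always wrong and drags the algorithm down. It then needs separate arguments for semi-symmetric and fully asymmetric normalized rules. Your two-expert oscillation argument, in which the adversary always contradicts the algorithm and one tracks the drift of the log weight-ratio, is a legitimately different decomposition, and (I checked it on the Brier rule) it can in principle yield bounds at least as strong as the paper's. However, as written it has a genuine gap at exactly the step that does the work.

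The gap is that the quantity $\Delta_f=\max_{p,r}\bigl(|p-r|-(1-f(p,r))\bigr)$ does not control what your argument needs. In the oscillation steady state the fraction of steps with $r^{(t)}=1$ is (to first order in $\eta$) $d/(u+d)$ with $u=f(1-\delta,1)-f(0,1)$ and $d=f(0,0)-f(1-\delta,0)$, so the ratio $\MT/\min_i\mT$ exceeds $2$ by a constant if and only if $u-d=\bigl[f(1-\delta,1)+f(1-\delta,0)\bigr]-\bigl[f(0,0)+f(0,1)\bigr]$ is bounded away from $0$. This is an \emph{unweighted} sum of scores; it is not the expected-score defect $1-G(p)$ that $\Delta_f$ measures, and strict convexity of $G$ alone does not bound it below (indeed $H(p):=f(p,1)+f(p,0)$ can satisfy $H(1-\delta)=H(0)$ for some $\delta$, and for asymmetric normalized rules $H(1)<H(0)$ is possible, flipping the sign for small $\delta$). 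To close the gap you must choose $\delta$ near $\tfrac12$ and invoke $H(\tfrac12)=2G(\tfrac12)>f(0,0)+f(0,1)=H(0)$, which is precisely the strict-properness inequality ``belief $\tfrac12$ strictly prefers reporting $\tfrac12$ to reporting $0$'' --- i.e., you end up needing the paper's scoring-rule-gap ingredient anyway. Smaller issues: $G(0)=G(1)=1$ and $f(0,1)=0$ are not implied by normalization (only the max, resp.\ min, over the two endpoints is pinned); the case split on time-varying $\eta^{(t)}$ must compare $\sum_t\eta^{(t)}$ to the $\delta$-dependent threshold $\ln\frac{1}{1-2\delta}$ rather than to a fixed constant, and the slowly-divergent regime still needs an argument; and the steady-state proportion $n_1/n_0\approx d/u$ needs an error term of order $1/(T\eta)$ to be tracked. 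None of these is fatal, but the argument is not complete without the correct asymmetry quantity and its positive lower bound.
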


To prove Theorem~\ref{thm:main-lb} we proceed in two steps. First
we consider strictly proper scoring rules that are symmetric with respect
to the outcomes, because they lead to a lower bound that can be
naturally interpreted by looking at the geometry of the scoring
rule. We then extend these results to cover weight-update rules that
use any (potentially asymmetric) strictly proper scoring rule. 

\subsection{Symmetric Strictly Proper Scoring Rules}
We first focus on symmetric scoring rules
in the  sense that $f(p,0)=f(1-p,1)$ for all $p\in[0,1]$. We can thus write these as $f(p)
= f(p,1) = f(1-p,0)$. Many common scoring rules are symmetric,
including the Brier rule \citep{B50}, the family of pseudo-spherical
rules (e.g. \citep{GR07}), the power family (e.g. \citep{JNW08}), and
the beta family \citep{B05} when $\alpha = \beta$. We start by
defining the scoring rule gap for normalized scoring rules, which will
determine the lower bound constant.

\begin{definition}[Scoring Rule Gap]
The {\em scoring rule gap} $\gamma$ of family $\F$ with generator $f$ is $\gamma = f(\tfrac12) - \tfrac12(f(0) + f(1))=f(\tfrac12)-\tfrac12$.
\end{definition}

The following proposition shows that for all strictly proper scoring
rules, the scoring rule gap must be strictly positive.

\begin{proposition}
	The scoring rule gap $\gamma$ of a family generated by a symmetric strictly proper scoring rule $f$ is strictly positive.
\end{proposition}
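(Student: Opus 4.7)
The plan is to unwind the normalization and symmetry assumptions and then invoke strict properness at the belief $p=\tfrac12$. First I would translate the symmetry condition $f(p) = f(p,1) = f(1-p,0)$ into its boundary and midpoint values: setting $p=0$ gives $f(0,1) = f(1,0) = f(0)$, and setting $p=1$ gives $f(0,0)=f(1,1)=f(1)$, so the normalization $\min(f(0,1),f(1,0))=0$ and $\max(f(0,0),f(1,1))=1$ simply says $f(0)=0$ and $f(1)=1$. Setting $p=\tfrac12$ gives $f(\tfrac12,1)=f(\tfrac12,0)=f(\tfrac12)$.

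Next I would apply the definition of a strictly proper scoring rule (Definition~\ref{def:psr}) at the belief $p=\tfrac12$. Strict properness says that $q=\tfrac12$ is the \emph{unique} maximizer of
\begin{equation*}
G(q) \;=\; \tfrac12\, f(q,1) + \tfrac12\, f(q,0).
\end{equation*}
By the midpoint symmetry above, $G(\tfrac12) = f(\tfrac12)$. On the other hand, evaluating at $q=0$ and using the boundary identities, $G(0) = \tfrac12 f(0,1) + \tfrac12 f(0,0) = \tfrac12 \cdot 0 + \tfrac12 \cdot 1 = \tfrac12$. Strict properness gives $G(\tfrac12) > G(0)$, i.e.\ $f(\tfrac12) > \tfrac12$, and hence $\gamma = f(\tfrac12) - \tfrac12 > 0$.

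There is essentially no obstacle here beyond being careful with the two-argument versus one-argument notation; the claim is really just the statement that a strictly proper scoring rule strictly prefers reporting $\tfrac12$ to reporting $0$ when the true belief is $\tfrac12$, combined with the fact that symmetry pins down $G(0)$ to the normalization average. If desired, one could equivalently compare $G(\tfrac12)$ with $G(1)$, which by symmetry also equals $\tfrac12$, to get the same conclusion, making the interpretation of $\gamma$ as the strict-properness ``gap'' between the midpoint report and the two extreme reports transparent.
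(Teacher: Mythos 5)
Your proof is correct and matches the paper's argument: both invoke strict properness at belief $\tfrac12$ to conclude that reporting $\tfrac12$ strictly beats an extreme report, then use symmetry and normalization to evaluate the extreme report's expected score as $\tfrac12(f(0)+f(1))=\tfrac12$. The only cosmetic difference is that you compare against $q=0$ where the paper compares against $q=1$, which symmetry renders equivalent, as you note.
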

\begin{proof}
	Since $f$ is symmetric and a strictly proper scoring rule, we must have that $\tfrac12f(\tfrac12) + \tfrac12f(\tfrac12) > \tfrac12f(0) + \tfrac12f(1)$ (since an expert with belief $\tfrac12$ must have a strict incentive to report $\tfrac12$ instead of $1$). The statement follows from rewriting.
\end{proof}

We are now ready to prove our lower bound for all symmetric strictly
proper scoring rules. The interesting case is where the learning rate
$\eta\rightarrow 0$, as otherwise it is easy to prove a lower bound
bounded away from 2.

The following lemma establishes that the gap parameter is important in
proving lower bounds for IC online prediction algorithms. Intuitively, the lower bound instance exploits that experts who report $\tfrac12$ will have a higher weight (due to the scoring rule gap) than an expert who is alternatingly right and wrong with extreme reports. This means that even though the indifferent expert has the same absolute loss, she will have a higher weight and this can lead the algorithm astray. The scoring rule gap is also relevant for 
the discussion in Appendix~\ref{s:selection}.  We give partial proof of the lemma
below; the full proof appears in Appendix~\ref{s:proofs}.
\begin{lemma}
\label{lem:det-sym-lb}
Let $\F$ be a family of scoring rules generated by a symmetric
strictly proper scoring rule $f$, and let $\gamma$ be the scoring rule
gap of $\F$. In the worst case, MW with any scoring rule $f'\in\F$ with $\eta\in(0,\tfrac12)$, algorithm loss $\MT$ and expert loss $\mT$, satisfies 
	$$\MT \geq\left(2 + \frac{1}{\lceil\gamma^{-1}\rceil}\right)\cdot \mT.$$
\end{lemma}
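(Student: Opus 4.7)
The plan is to construct a worst-case instance that exploits the scoring rule gap $\gamma$ to force an extra $\tfrac{1}{k}$ multiplicative overhead on top of the classical factor-$2$ lower bound, where $k = \lceil \gamma^{-1}\rceil$. First I normalize: by Proposition~\ref{prop:rewrite}, the constant $a$ in $f'(p,r) = a(1 + \eta(f(p,r) - 1))$ multiplies every expert's weight identically each round and therefore cancels out of every weight ratio, so the algorithm's decisions are identical under the bare update $\wtp = (1 + \eta(f(\pt, \rt) - 1))\,\wt$, with $f$ normalized ($f(0)=0, f(1)=1$) and symmetric ($f(\tfrac12) = \tfrac12 + \gamma$). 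The key mechanism is that a ``neutral'' expert reporting $\pt = \tfrac12$ every round and an ``extremal'' expert who reports $0$ or $1$ and is right/wrong with equal frequency both incur average absolute loss $\tfrac12$, yet the neutral expert accrues a per-round log-weight bonus of approximately $\eta \gamma$ over the extremal one (because $f(\tfrac12) - \tfrac12(f(0)+f(1)) = \gamma$ by definition), compounding to $\Omega(\eta)$ over $k$ rounds.

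My construction uses a small number of experts including at least one neutral expert, organized into phases of length $\Theta(k)$. Because the algorithm is deterministic, the oblivious adversary can simulate it offline and pick, for each phase, a realization sequence and extremal reports such that (a) every expert has true absolute loss $\approx \tfrac12$ per round, so that $\min_i \mT \leq T/2$; and (b) the neutral expert's boosted weight, accumulated over the phase, pushes the weight-majority decision to the wrong outcome in an additional $\tfrac{1}{k}$-fraction of the rounds on top of those already forced by a standard opposed-extremal pair of experts (predicting $0$ and $1$, whose $\tfrac12$-weighted contributions cancel). Summing phase losses yields $\MT \geq T\bigl(1 + \tfrac{1}{2k}\bigr) \geq (2 + \tfrac{1}{k}) \min_i \mT$, as desired.

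The main obstacle is the small-$\eta$ regime: the per-round log-weight bonus $\eta \gamma$ is tiny, so accumulating a majority-flipping weight disparity requires $\Omega(1/\eta)$ rounds. The lemma is stated as a worst-case bound over $T$, so the adversary may let $T$ grow with $1/\eta$ without invalidating the claim. A related technical subtlety is handling the second-order error in $\log(1 + \eta(f-1)) = \eta(f-1) + O(\eta^2)$: it must be shown to be negligible relative to the $O(\eta)$ bonus so that the constant $\tfrac{1}{k}$ is not eroded. Finally, a time-varying learning rate is handled by applying the argument in each phase with $\eta$ replaced by the smallest learning rate used in that phase, since the multiplicative structure of the weight update means the same phase-level analysis goes through.
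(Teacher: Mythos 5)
You have correctly isolated the engine of the lower bound: an expert who reports $\tfrac12$ earns $f(\tfrac12)=\tfrac12+\gamma$ per round while an extremal expert who is right and wrong equally often averages $\tfrac12$, so the neutral expert accumulates a relative weight advantage despite identical absolute loss. But the construction that converts this advantage into regret is missing, and what you do write down cannot work as stated. First, a perpetually neutral expert never influences a weighted-majority decision --- a report of $\tfrac12$ contributes equally to $\sum_i \wt\pt$ and $\sum_i \wt(1-\pt)$ no matter how large its weight --- so ``the neutral expert's boosted weight pushes the decision to the wrong outcome'' is a non sequitur unless that expert eventually \emph{stops} being neutral. The paper's instance does exactly this: after a first phase of length $\alpha T$ with $\alpha=\tfrac{2\lceil\gamma^{-1}\rceil}{2\lceil\gamma^{-1}\rceil+1}$ (neutral expert at $\tfrac12$, an opposed pair at $0$ and $1$ kept at equal weights by a two-step induction through the tie-breaking rule, so the algorithm errs every round), the roles switch: the formerly neutral expert predicts confidently and is always wrong, one member of the opposed pair is always right, and the accumulated gap bonus is shown to keep the wrong expert's weight above the right expert's for the remaining $(1-\alpha)T$ rounds. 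Second, your accounting is impossible: with binary predictions the algorithm's loss is at most $1$ per round, so $\MT\le T$ and the claimed $\MT\ge T(1+\tfrac1{2k})$ cannot hold (writing $k=\lceil\gamma^{-1}\rceil$). The factor $2+\tfrac1k$ does not come from inflating $\MT$ above $T$; it comes from deflating the best expert's loss below $T/2$, to $\min_i\mT=\tfrac{\alpha T}{2}=\tfrac{k}{2k+1}T$, while $\MT=T$.

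Third, the ``obstacle'' you identify in the small-$\eta$ regime is an artifact of not having the construction in hand: the weight disparity needed to mislead the algorithm for one extra round is only a relative factor of $(1-\eta)^{-1}$, and $2k$ rounds of the gap bonus already supply a relative factor of at least $\left(1+2\gamma\eta\right)^{k}\ge 1+2k\gamma\eta\ge 1+2\eta$, which exceeds $(1-\eta)^{-1}$ for every $\eta\in(0,\tfrac12)$. Hence the block length is $2k+1$ independent of $\eta$, no growth of $T$ with $1/\eta$ is needed, and no second-order analysis of $\log(1+\eta(f-1))$ is required: the paper verifies the single inequality $\left(1+\eta(f(\tfrac12)-1)\right)^{2k}(1-\eta)>(1-\eta)^{k}$ by elementary manipulations valid for all $\eta<\tfrac12$. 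To complete your argument you would need to (i) specify the role switch that makes the neutral expert's weight actually matter, (ii) carry out the equal-weights induction for the opposed pair so that the tie-breaking forces a mistake in every round of the first phase, and (iii) redo the final ratio with numerator $T$ and denominator $\tfrac{k}{2k+1}T$.
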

\begin{proof}[Proof Sketch]
Let $a$, $\eta$ be the parameters of $f'$ in the family $\F$, as in
Definition~\ref{def:family}. 
Fix $T$ sufficiently large and an integer multiple of
$2\lceil\gamma^{-1}\rceil + 1$, and let $e_1$, $e_2$, and $e_3$ be
three experts. For $t=1,...,\alpha\cdot T$ where $\alpha
=\tfrac{2\lceil\gamma^{-1}\rceil}{2\lceil\gamma^{-1}\rceil+1}$ such
that $\alpha T$ is an even integer, let $p_1^{(t)} = \frac12$,
$p_2^{(t)} = 0 $, and $p_3^{(t)} = 1$. Fix any tie-breaking rule for
the algorithm. Realization $\rt$ is always the opposite of what the
algorithm chooses. 
	
Let $\Mt$ be the loss of the algorithm up to time $t$, and let $\mt$
be the loss of expert $i$. We first show that at $t'=\alpha T$,
$m_1^{(t')}=m_2^{(t')}=m_3^{(t')} = \tfrac{\alpha T}2$ and $M^{(t')} =
\alpha T$. The latter part is obvious as $\rt$ is the opposite of what
the algorithm chooses. That $m_1^{(t')} = \tfrac{\alpha T}{2}$ is
also obvious as it adds a loss of $\frac12$ at each time step. To show
that $m_2^{(t')} = m_3^{(t')} = \tfrac{\alpha T}{2}$ we do induction
on the number of time steps, in steps of two. The induction hypothesis
is that after an even number of time steps, $m_2^{(t)}=m_3^{(t)}$ and
that $w_2^{(t)}=w_3^{(t)}$. Initially, all weights are $1$ and both
experts have loss of $0$, so the base case holds. Consider the
algorithm after an even number $t$ time steps. Since
$w_2^{(t)}=w_3^{(t)}$, $p_3^{(t)} = 1-p_2^{(t)}$, and $p_1^{(t)} =
\frac12$ we have that $\sum_{i=1}^3 \wt\pt = \sum_{i=1}^3 \wt(1-\pt)$
and hence the algorithm will use its tie-breaking rule for its next
decision. Thus, either $e_2$ or $e_3$ is wrong. Wlog let's say that
$e_2$ was wrong (the other case is symmetric), so $m_2^{(t+1)} =
1+m_3^{(t+1)}$. Now at time $t+1$, $w_2^{(t+1)} =
(1-\eta)w_3^{(t+1)}<w_3^{(t+1)}$. Since $e_1$ does not express a preference, and
$e_3$ has a higher weight than $e_2$, the algorithm will follow
$e_3$'s advice. Since the realization $r^{(t+1)}$ is the opposite of
the algorithms choice, this means that now $e_3$ incurs a loss of
one. Thus $m_2^{(t+2)}= m_2^{(t+1)}$ and $w_2^{(t+2)} = w_2^{(t+1)}$
and  $m_3^{(t+2)} = 1+m_3^{(t+1)} = m_2^{(t+2)}$. The weight of expert
$e_2$ is $w_2^{(t+2)} = aa(1-\eta)w_2^{(t)}$ and the weight of expert
$e_3$ is $w_3^{(t+2)} = a(1-\eta)aw_3^{(t)}$. By the induction
hypothesis $w_2^{(t)}=w_3^{(t)}$, hence $w_2^{(t+2)}=w_3^{(t+2)}$, and since we already showed that $m_2^{(t+2)} = m_3^{(t+2)}$, this
completes the induction. 
	
Now, for $t=\alpha T + 1, ..., T$, we let $p_1^{(t)} = 1$, $p_2^{(t)}
= 0$, $p_3^{(t)} = \frac12$ and $\rt = 0$. So henceforth $e_3$ does
not provide information, $e_1$ is always wrong, and $e_2$ is always
right. If we can show that the algorithm will always follow $e_1$,
then the best expert is $e_2$ with a loss of $m_2^{(T)} =
\tfrac{\alpha T}2$, while the algorithm has a loss of $\MT = T$. If
this holds for $\alpha<1$ this proves the claim. So what's left to
prove is that the algorithm will always follow $e_1$. Note that since
$p_3^{(t)}=\frac12$ it contributes equal amounts to $\sum_{i=1}^3 \wt
\pt$ and $\sum_{i=1}^3 \wt (1-\pt)$ and is therefore ignored by the
algorithm in making its decision. So it suffices to look at $e_1$ and
$e_2$. The algorithm will pick $1$ iff $\sum_{i=1}^3 \wt (1-\pt) \leq
\sum_{i=1}^3 \wt \pt$, which after simplifying becomes $w_1^{(t)} >
w_2^{(t)}$. 
	
At time step $t$, $w_1^{(t)} = \left(a(1+\eta(f(\tfrac12) - 1))\right)^{\alpha T}(a\cdot(1-\eta))^{t-\alpha T}$ and
$w_2^{(t)} = \left(a(1-\eta)\right)^{\tfrac{\alpha
    T}2}a^{\tfrac{\alpha T}2 + t-\alpha T}$.  
	
We have that $w_1^{(t)}$ is decreasing faster in $t$ than
$w_2^{(t)}$. So if we can show that $w_1^{(T)} \geq w_2^{(T)}$ for
some $\alpha<1$, then $e_2$ will incur a total loss of $\alpha T/2$
while the algorithm incurs a loss of $T$ and the statement is
proved. This is shown in the appendix. 
\end{proof}

As a consequence of Lemma~\ref{lem:det-sym-lb}, we can calculate lower bounds for specific strictly proper scoring rules. For example, the spherical rule used in Section~\ref{ss:det-ub} is a symmetric strictly proper scoring rule with a gap parameter $\gamma=\tfrac{\sqrt 2}{2} - \tfrac{1}{2}$, and hence $1/\lceil
\gamma^{-1}\rceil=\tfrac15$. 

\begin{corollary}\label{cor:spherical}
	In the worst case, the deterministic algorithm based on the spherical rule in Section~\ref{ss:det-ub} has
	$$\MT \geq \left(2 + \tfrac15\right)\mT.$$
\end{corollary}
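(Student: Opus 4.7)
The plan is simply to apply Lemma~\ref{lem:det-sym-lb} to the spherical rule, since that lemma already gives the general lower bound $M^{(T)} \geq (2 + 1/\lceil \gamma^{-1}\rceil) \cdot m_i^{(T)}$ for any family generated by a symmetric strictly proper scoring rule. So the entire work reduces to (i) checking symmetry, (ii) identifying the correct normalized generator, and (iii) computing the scoring-rule gap $\gamma$ for the spherical rule.

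First I would observe that the underlying spherical proper scoring rule $g(p,r) = (1-s_i^{(t)})/\sqrt{p^2+(1-p)^2}$, where $s_i^{(t)} = |p-r|$, satisfies $g(p,1) = p/\sqrt{p^2+(1-p)^2}$ and $g(p,0) = (1-p)/\sqrt{p^2+(1-p)^2}$, so $g(p,1) = g(1-p,0)$; hence $g$ is symmetric in the sense of Section~4.1. Moreover $g(0,1) = g(1,0) = 0$ and $g(0,0) = g(1,1) = 1$, so $g$ is already in normalized form, and the update rule \eqref{eq:spherical} is precisely $1 + \eta(g(p,r)-1)$, i.e. a member of the family $\F$ generated by $g$ with multiplicative constant $a=1$. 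Thus Lemma~\ref{lem:det-sym-lb} applies directly.

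Next I would compute $f(\tfrac12) = (\tfrac12)/\sqrt{(\tfrac12)^2 + (\tfrac12)^2} = \tfrac{\sqrt 2}{2}$, so the scoring-rule gap is
\[
\gamma \;=\; f(\tfrac12) - \tfrac12 \;=\; \tfrac{\sqrt 2}{2} - \tfrac12 \;=\; \tfrac{\sqrt{2}-1}{2}.
\]
Then $\gamma^{-1} = 2/(\sqrt 2 - 1) = 2(\sqrt 2 + 1) \approx 4.828$, so $\lceil \gamma^{-1}\rceil = 5$, and $1/\lceil \gamma^{-1}\rceil = \tfrac15$. Plugging this constant into Lemma~\ref{lem:det-sym-lb} yields the claimed bound $M^{(T)} \geq (2 + \tfrac15)\, m_i^{(T)}$.

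There is essentially no obstacle here beyond the routine algebra of rationalizing $2/(\sqrt 2-1)$ and verifying that the value lies strictly between $4$ and $5$ so that the ceiling is indeed $5$. The only thing to be slightly careful about is confirming that the update rule in \eqref{eq:spherical} really is an instance of Definition~\ref{def:family} with $a=1$ (not some other affine scaling), so that the gap $\gamma$ computed from the normalized generator $g$ is the correct quantity fed into Lemma~\ref{lem:det-sym-lb}; once that is checked, the corollary is immediate.
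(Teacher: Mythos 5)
Your proposal is correct and matches the paper's own (very brief) justification: the paper likewise notes that the spherical rule is a symmetric strictly proper scoring rule with gap $\gamma=\tfrac{\sqrt 2}{2}-\tfrac12$, computes $1/\lceil\gamma^{-1}\rceil=\tfrac15$, and invokes Lemma~\ref{lem:det-sym-lb}. Your additional checks---that the normalized generator $g(p,1)=p/\sqrt{p^2+(1-p)^2}$ is symmetric, already normalized, and that \eqref{eq:spherical} is the family member with $a=1$---are exactly the implicit verifications the paper omits, so the argument is the same.
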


We revisit the scoring rule gap parameter again in Appendix~\ref{s:selection} when we discuss considerations for selecting different scoring rules.

\subsection{Beyond Symmetric Strictly Proper Scoring Rules}
We now extend the lower bound example to cover arbitrary strictly proper scoring rules. As in the previous subsection, we consider properties of normalized scoring rules to provide lower bounds that are independent of learning rate, but the properties in this subsection have a less natural interpretation.

For arbitrary strictly proper scoring rule $f'$, let $f$ be the corresponding normalized scoring rule, with parameters $a$ and $\eta$. Since $f$ is normalized, $\max\{f(0,0), f(1,1)\} = 1$ and $\min\{f(0,1), f(1,0)\}=0$. We consider 2 cases, one in which $f(0,0) = f(1,1) = 1$ and $f(0,1) = f(1,0)=0$ which is locally symmetric, and the case where at least one of those equalities does not hold.

\paragraph{\bf The semi-symmetric case.} If it is the case that $f$ has $f(0,0) = f(1,1) = 1$ and $f(0,1) = f(1,0)=0$, then $f$ has enough symmetry to prove a variant of the lower bound instance discussed just before. Define the semi-symmetric scoring rule gap as follows.

\begin{definition}[Semi-symmetric Scoring Rule Gap]
	The {\em `semi-symmetric' scoring rule gap} $\mu$ of family $\F$ with normalized generator $f$ is $\mu = \tfrac12 \left(f(\frac12,0) + f(\frac12,1)\right) - \tfrac12$.
\end{definition}

Like the symmetric scoring rule gap, $\mu>0$ by definition, as there needs to be a strict incentive to report $\tfrac12$ for experts with $\bt=\tfrac12$. Next,  observe that since $f(\tfrac12, 0), f(\tfrac12,1)\in [0,1]$ and $f(\frac12,0) + f(\frac12,1) = 1+2\mu$, it must be that $f(\frac12,0)\cdot f(\frac12,1)\geq 2\mu$. Using this it follows that:

\begin{align}
&\left(1+\eta(f(\tfrac12,0)-1)\right)\left(1+\eta(f(\tfrac12,1)-1)\right) \nonumber\\
&= 1+\eta\cdot\left(f(\tfrac12,0) + f(\tfrac12,1) - 2\right) + \eta^2\left(f(\tfrac12,0)\cdot f(\tfrac12,1) - f(\tfrac12,0) - f(\tfrac12,1) + 1\right)\nonumber\\
&= 1+\eta\cdot\left(1+2\mu - 2\right) + \eta^2\left(f(\tfrac12,0)\cdot f(\tfrac12,1)-2\mu\right)\nonumber\\
&\geq 1-\eta(1-2\mu)+ \eta^2\left(2\mu- 2\mu \right)\nonumber\\
&= 1-\eta+2\mu\eta \label{e:semisym}
\end{align}
Now this can be used in the same way as we proved the setting before:
\begin{lemma}
	\label{lem:det-asym1-lb}
	Let $\F$ be a family of scoring rules generated by a normalized
	strictly proper scoring rule $f$, with $f(0,0)=f(1,1)$ and $f(0,1)=f(1,0)$. In the worst case, MW with any scoring rule $f'$ from
	$\F$ with $\eta\in(0,\tfrac12)$ can do no better than 
	
	$$\MT \geq\left(2 + \frac{1}{\lceil\mu^{-1}\rceil}\right)\cdot \mT.$$
\end{lemma}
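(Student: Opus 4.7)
The plan is to mirror the proof of Lemma~\ref{lem:det-sym-lb} closely, using three experts $e_1, e_2, e_3$ reporting $p_1^{(t)} = \tfrac12$, $p_2^{(t)} = 0$, $p_3^{(t)} = 1$ in a first phase of length $\alpha T$ with $\alpha = \tfrac{2\lceil\mu^{-1}\rceil}{2\lceil\mu^{-1}\rceil + 1}$, and then switching to $p_1^{(t)} = 1$, $p_2^{(t)} = 0$, $p_3^{(t)} = \tfrac12$ with $r^{(t)} = 0$ for the remaining $(1-\alpha)T$ steps. In phase 1, $r^{(t)}$ is always set to the opposite of the algorithm's choice. The target conclusion is that $\MT = T$ while $m_2^{(T)} \leq \alpha T/2$, giving ratio $2/\alpha = 2 + 1/\lceil \mu^{-1}\rceil$.

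The first key observation is that the induction in the symmetric case transfers verbatim to the semi-symmetric setting. Because we assume $f(0,0) = f(1,1) = 1$ and $f(0,1) = f(1,0) = 0$, the per-step weight multipliers for $e_2$ and $e_3$ are precisely $a$ when correct and $a(1-\eta)$ when wrong, exactly as in the symmetric case. Hence, by the same induction on pairs of consecutive time steps, after every even number of steps one has $w_2^{(t)} = w_3^{(t)}$ and $m_2^{(t)} = m_3^{(t)}$; the algorithm is forced to use its tie-breaking rule at every odd step in phase 1, and in each pair of steps the realizations include exactly one $r=0$ and one $r=1$. At the end of phase 1, $m_2^{(\alpha T)} = m_3^{(\alpha T)} = \alpha T/2$ and $M^{(\alpha T)} = \alpha T$.

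The only genuinely new ingredient concerns the weight of $e_1$. Over any pair of steps in phase 1 (one with $r=0$, one with $r=1$), $e_1$'s weight is multiplied by
\[
a^2\bigl(1+\eta(f(\tfrac12,0)-1)\bigr)\bigl(1+\eta(f(\tfrac12,1)-1)\bigr) \;\geq\; a^2(1-\eta+2\mu\eta),
\]
where the inequality is exactly \eqref{e:semisym}. Therefore $w_1^{(\alpha T)} \geq a^{\alpha T}(1-\eta+2\mu\eta)^{\alpha T/2}$, while $w_2^{(\alpha T)} = w_3^{(\alpha T)} = a^{\alpha T}(1-\eta)^{\alpha T/2}$. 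This is structurally identical to the symmetric bound with $\gamma$ replaced by $\mu$, which is the entire point of introducing $\mu$.

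In phase 2, $e_3$ reports $\tfrac12$ and so contributes equally to both sides of the algorithm's weighted comparison; it can be ignored for the decision. Using $f(1,0)=0$ and $f(0,0)=1$, the weight of $e_1$ is multiplied by $a(1-\eta)$ each step (always wrong) while $e_2$'s is multiplied by $a$ each step (always right). The algorithm follows $e_1$ precisely when $w_1^{(t)} > w_2^{(t)}$, and since $w_1^{(t)}/w_2^{(t)}$ shrinks by a factor $(1-\eta)$ per step, the tightest constraint is at $t = T$: we need
\[
\Bigl(\tfrac{1-\eta+2\mu\eta}{1-\eta}\Bigr)^{\alpha T/2}(1-\eta)^{(1-\alpha)T} \;\geq\; 1.
\]
The calculation verifying this for $\alpha = \tfrac{2\lceil\mu^{-1}\rceil}{2\lceil\mu^{-1}\rceil + 1}$ and all $\eta \in (0,\tfrac12)$ is algebraically the same as in the symmetric case with $\mu$ in place of $\gamma$, and is deferred to the appendix along with the corresponding calculation for Lemma~\ref{lem:det-sym-lb}. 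Once this inequality holds, the algorithm follows $e_1$ throughout phase 2, giving $\MT = T$ and $m_2^{(T)} = \alpha T/2$, and the claimed bound $\MT \geq (2 + 1/\lceil\mu^{-1}\rceil)\mT$ follows.

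The main obstacle is simply verifying the final algebraic inequality bounding $w_1^{(T)}/w_2^{(T)}$, which needs care in the limit $\eta \to 0$; however, \eqref{e:semisym} is tight enough that the argument from the symmetric case carries through with $\gamma$ replaced by $\mu$, and no new ideas are required beyond that inequality and the observation that the $e_2,e_3$ dynamics are completely unaffected by the asymmetry in $f(\tfrac12,\cdot)$.
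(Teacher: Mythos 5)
Your proposal is correct and follows essentially the same route as the paper: the same three-expert instance from the symmetric case with $\alpha = \tfrac{2\lceil\mu^{-1}\rceil}{2\lceil\mu^{-1}\rceil+1}$, the observation that the $e_2,e_3$ dynamics are unchanged because the normalized rule is symmetric at the extreme reports, the bound on $e_1$'s per-pair weight multiplier via \eqref{e:semisym}, and the reduction of the final inequality $w_1^{(T)} \geq w_2^{(T)}$ to the algebra starting at \eqref{e:sym} with $\gamma$ replaced by $\mu$. No gaps.
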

\begin{proof}[Proof Sketch]
	Take the same instance as used in Lemma~\ref{lem:det-sym-lb}, with $\alpha=\tfrac{2\lceil\mu^{-1}\rceil}{2\lceil\mu^{-1}\rceil+1}$. The progression of the algorithm up to $t=\alpha T$ is identical in this case, as expert $e_1$ is indifferent between outcomes, and $f(0,0)=f(1,1)$ and $f(0,1)=f(1,0)$ for experts $e_2$ and $e_3$. What remains to be shown is that the weight of $e_1$ will be higher at time $T$. At time $T$ the weights of $e_1$ and $e_2$ are:
	\begin{align*}
	a^{-T}w_1^{(T)} &= \left(1+\eta(f(\tfrac12,0) - 1)\right)^{\tfrac{\alpha T}2}\left(1+\eta(f(\tfrac12,1) - 1)\right)^{\tfrac{\alpha T}2}(1-\eta)^{(1-\alpha) T}\\
	a^{-T}w_2^{(T)} &= \left(1-\eta\right)^{\tfrac{\alpha T}2}. 
	\end{align*}
	
	Similarly to the symmetric case, wee know that $w_1^{(T)}>w_2^{(T)}$ if we can show that $$\left(1+\eta(f(\tfrac12,0) - 1)\right)^{\lceil\mu^{-1}\rceil}\left(1+\eta(f(\tfrac12,1) - 1)\right)^{\lceil\mu^{-1}\rceil}(1-\eta)>\left(1-\eta\right)^{\lceil\mu^{-1}\rceil}.$$
	By~\eqref{e:semisym}, it suffices to show that $\left(1-\eta+2\mu\eta\right)^{\lceil\mu^{-1}\rceil}(1-\eta)>\left(1-\eta\right)^{\lceil\mu^{-1}\rceil}$, which holds by following the derivation in the proof of Lemma~\ref{lem:det-sym-lb} given in the appendix, starting at \eqref{e:sym}.
\end{proof}

\paragraph{\bf The asymmetric case.} We finally consider the setting where the weight-update rule is not symmetric, nor is it symmetric evaluated only at the extreme reports. The lower bound that we show is based on the amount of asymmetry at these extreme points, and is parametrized as follows.

\begin{definition}\label{def:asym}
	Let $c>d$ be parameters of a normalized strictly proper scoring rule $f$, such that $c=1-\max\{f(0,1), f(1,0)\}$ and $d=1-\min\{f(0,0),f(1,1)\}$.
\end{definition}

Scoring rules that are not covered by Lemmas~\ref{lem:det-sym-lb}~or~\ref{lem:det-asym1-lb} must have either $c<1$ or $d>0$ or both.
The intuition behind the lower bound instance is that two experts who have opposite predictions, and are alternatingly right and wrong, will end up with different weights, even though they have the same loss. We use this to show that eventually one expert will have a lower loss, but also a lower weight, so the algorithm will follow the other expert. This process can be repeated to get the bounds in the Lemma below. The proof of the lemma appears in the appendix.
\begin{lemma}
	\label{lem:det-asym2-lb}
	Let $\F$ be a family of scoring rules generated by a normalized
	strictly proper scoring rule $f$, with not both $f(0,0)=f(1,1)$ and $f(0,1)=f(1,0)$ and parameters $c$ and $d$ as in Definition~\ref{def:asym}. In the worst case, MW with any scoring rule $f'$ from
	$\F$ with $\eta\in(0,\tfrac12)$ can do no better than 
	$$\MT \geq\left(2 + \max\{\tfrac{1-c}{2c}, \tfrac{d}{4(1-d)}\}\right)\cdot \mT.$$
\end{lemma}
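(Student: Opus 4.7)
The plan is to construct a two-expert, two-phase adversarial instance. Experts $e_1$ and $e_2$ make constant, opposite extreme reports $p_1^{(t)} \equiv 0$ and $p_2^{(t)} \equiv 1$ throughout. By the hypothesis that $f$ is asymmetric at the corners in at least one of the two possible senses, after relabeling outcomes if needed I may assume without loss of generality that $f(1,0) = 1-c$ for some $c < 1$ (the ``$c$-case''), or $f(0,0) = 1-d$ for some $d > 0$ (the ``$d$-case''), or both. The two terms in the $\max$ correspond to these two sources of asymmetry, and I will analyze each case in parallel and use whichever yields the larger constant.

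In Phase 1, of length $\alpha T$ for a parameter $\alpha \in (0,1)$ to be chosen, I have the adversary choose each realization $r^{(t)}$ opposite to the algorithm's prediction $q^{(t)}$, so the algorithm is wrong on every Phase 1 step. I will verify that the pendulum-like dynamics of the two weights also forces the adversary to arrange approximately equal per-expert absolute loss $\alpha T/2$. The asymmetry of $f$ drives a net drift in the weight ratio: each pair of consecutive steps—one in which $e_2$ is wrong and one in which $e_1$ is wrong—multiplies $w_2^{(t)}/w_1^{(t)}$ by a factor strictly greater than $1$, namely $(1-c\eta)/(1-\eta)$ in the $c$-case or $1/(1-d\eta)$ (via the asymmetric ``right'' updates) in the $d$-case. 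After $\alpha T/2$ such pairs, $w_2$ exceeds $w_1$ by a factor exponential in $\alpha T$. In Phase 2, of length $(1-\alpha) T$, the adversary fixes $r^{(t)} \equiv 0$ so that $e_1$ is always right (its loss remains $\alpha T/2$) and $e_2$ is always wrong. So long as $w_2^{(t)} > w_1^{(t)}$ the algorithm picks $q^{(t)} = 1$ and is wrong, so its total loss across both phases reaches $T$ while $\min_i \mT = \alpha T/2$.

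The quantitative step is to choose $\alpha$ just large enough that $w_2^{(t)} > w_1^{(t)}$ for \emph{all} of Phase 2. The per-step decay of $w_2/w_1$ in Phase 2 is $(1-c\eta)$ in the $c$-case and $(1-\eta)/(1-d\eta)$ in the $d$-case. Requiring the Phase 1 log-gain to exceed the cumulative Phase 2 log-decay yields a threshold $\alpha^*$ for each case; solving to first order in $\eta$ and substituting into the ratio $\MT/\min_i \mT = T/(\alpha T/2) = 2/\alpha$ produces the two bounds, after conservatively weakening constants to obtain the clean forms $(1-c)/(2c)$ and $d/(4(1-d))$ stated in the lemma.

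The main obstacle is the Phase 1 analysis: because the realizations are chosen adversarially and the algorithm is deterministic, the two experts' losses and the weight ratio are coupled through the algorithm's evolving predictions. The cleanest resolution is to argue that whenever one expert's weight pulls ahead, the algorithm follows that expert, the adversary makes that expert wrong, and the weight ratio shrinks toward equality—but not past equality, because the asymmetric multiplicative update leaves a positive drift—so on the next step the weight ordering flips and the other expert becomes wrong. This pendulum dynamics ensures that after $\alpha T$ steps both experts have been wrong about $\alpha T/2$ times each, up to an $O(1)$ discrepancy absorbed into $o(T)$ terms, while the weight ratio drifts by exactly the log-gain quantity above. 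A secondary subtlety is that the first-order-in-$\eta$ estimates must be rigorous in the $T \to \infty$ limit; this is handled by fixing $\eta$ small (depending only on $c$ and $d$) so the leading-order asymmetric drift dominates the higher-order corrections, and by noting that the resulting $\alpha^*$ is independent of $T$.
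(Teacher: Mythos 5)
Your construction has a genuine gap in the Phase~1 analysis, and the two things you claim about Phase~1 are mutually inconsistent. With only two experts reporting the constant extremes $0$ and $1$, the algorithm's prediction at each step is determined by which weight is larger, and an adversary who wants to charge the algorithm a unit loss at every step must set $r^{(t)}$ opposite to that prediction --- so the adversary has no freedom: the \emph{heavier} expert is the one made wrong. Writing $R_t = \log\bigl(w_2^{(t)}/w_1^{(t)}\bigr)$, each step with $R_t\ge 0$ decreases $R_t$ by $-\log(1-c\eta)$ and each step with $R_t<0$ increases it by $-\log(1-\eta)$ (in your $c$-case; the $d$-case is analogous). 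This is a self-correcting sawtooth: $R_t \le -\log(1-\eta)$ for all $t\ge 1$, i.e.\ the ratio $w_2^{(t)}/w_1^{(t)}$ stays within a constant factor of $1$ forever. Hence you cannot simultaneously have ``realizations alternate, so each expert is wrong $\alpha T/2$ times'' and ``$w_2$ exceeds $w_1$ by a factor exponential in $\alpha T$'': strict alternation requires the ordering to flip every step, which caps the accumulated drift at one step's multiplicative factor, while if the ratio ever fails to cross $1$ the ordering does not flip and the same expert is made wrong twice in a row. (Your own sentence ``shrinks toward equality --- but not past equality \dots so on the next step the weight ordering flips'' contains exactly this contradiction.) What actually happens is that the favored expert absorbs extra mistakes at a constant \emph{rate}: the Phase~1 losses are roughly $\alpha T/(1+c)$ and $c\alpha T/(1+c)$, not $\alpha T/2$ each, and there is no stored weight advantage left to spend in a second phase.

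That mistake imbalance is in fact the correct source of the lower bound, and it is how the paper argues: with the same two opposite constant experts, after $2t+1$ steps for some $t$ of order $c/(1-c)$ (resp.\ $(1-d)/d$), independent of $\eta$, the favored expert has one \emph{more} mistake than the other yet still the larger weight; by homogeneity of the multiplicative updates this repeats indefinitely, so the algorithm's loss is $k(2t+1)$ against a best-expert loss of $kt$, giving the factor $2+1/t$ directly with no second phase. The constants $\tfrac{1-c}{2c}$ and $\tfrac{d}{4(1-d)}$ come from bounding $t$, with the extra factor of $2$ absorbing the tie-breaking and corner-labeling cases. Two further loose ends in your write-up: (i) a two-phase architecture of the kind you describe needs a third, indifferent expert to decouple the algorithm's decision from the $w_1$-versus-$w_2$ comparison, but that decoupling is exactly what corner asymmetry destroys (which is why the paper reserves that construction for the symmetric and semi-symmetric cases); and (ii) when both $c<1$ and $d>0$ the two asymmetries can favor \emph{different} experts and their drifts partially cancel, so ``take whichever term is larger'' needs the monotonicity observation that establishing the key inequality at the extreme $d=0$ (when $c<1$) implies it for all $d\in[0,c)$.
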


Theorem~\ref{thm:main-lb} now follows from combining the previous three lemmas.

\begin{proof}[Proof of Theorem~\ref{thm:main-lb}]
	Follows from combining Lemmas~\ref{lem:det-sym-lb},~\ref{lem:det-asym1-lb}~and~\ref{lem:det-asym2-lb}.
\end{proof}


\section{The Cost of Selfish Experts for Non-IC Algorithms}
\label{s:det-non-ic-lb}
What about non-incentive-compatible algorithms?  Could it be that,
even with experts reporting
strategically instead of honestly, 
there is a deterministic no $2$-regret algorithm (or a randomized algorithm with
vanishing regret), to match the classical results for honest experts?
Proposition~\ref{prop:standard-det-lb} shows that the standard
algorithm fails to achieve such a regret bound, 
but maybe some other non-IC algorithm does?

Typically, one would show that this is not the case by a ``revelation principle'' argument: if there exists some (non-IC) algorithm $A$ with good guarantees, then we can construct an algorithm $B$ which takes private values as input, and runs algorithm $A$ on whatever reports a self-interested agent would have provided to $A$. It does the strategic thinking for agents, and hence $B$ is an IC algorithm with the same performance as $A$. This means that generally, whatever performance is possible with non-IC algorithms can be achieved by IC algorithms as well, thus lower bounds for IC algorithms translate to lower bounds for non-IC algorithms. In our case however, the reports impact both the weights of experts as well as the decision of the algorithm simultaneously. Even if we insist on keeping the weights in $A$ and $B$ the same, the decisions of the algorithms may still be different. Therefore, rather than relying on a simulation argument, we give a direct proof that, under mild technical conditions, non-IC deterministic algorithms cannot be no $2$-regret.\footnote{Similarly to Price of Anarchy (PoA) bounds, e.g. \citep{RT07}, the results here show the harm of selfish behavior. Unlike PoA bounds, we sidestep the question of equilibrium concepts and our results are additive rather than multiplicative.} As in the previous section, we focus on deterministic algorithms;
Section~\ref{s:rand} translates
these lower bounds to randomized algorithms,
where they imply that no vanishing-regret
algorithms exist.

The following definition captures
how players are incentivized to report differently from their beliefs.

\begin{definition}[Rationality Function]
	For a weight update function $f$, let $\rho_f : [0,1] \rightarrow [0,1]$ be the function from beliefs to predictions, such that reporting $\rho_f(b)$ is rational for an expert with belief $b$.
\end{definition}

We restrict our attention here on rationality functions that are proper functions, meaning that each  belief leads to a single prediction. Note that for incentive compatible weight update functions, the rationality function is simply the identity function.

Under mild technical conditions on the rationality function, we show our main lower bound for (potentially non-IC) algorithms.\footnote{This holds even when the learning rate is
	parameterized similarly to Definition~\ref{def:family}, as the
	rationality function does not change for different learning rates
	due to the linearity of the expectation operator.}

\begin{theorem}
	\label{thm:non-ic-main-lb}
	For a weight update function $f$ with continuous or non-strictly increasing rationality function $\rho_f$, there is no deterministic no $2$-regret algorithm.
\end{theorem}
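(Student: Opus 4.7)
The plan is to adapt the three-expert adversarial construction from Lemma~\ref{lem:det-sym-lb} to non-IC algorithms, using the rationality function $\rho_f$ to translate between beliefs and the reports the algorithm actually receives. In the IC case the ``gap'' driving the lower bound was the scoring rule gap $\gamma$; in the non-IC case, the analogous role is played by a \emph{belief--report mismatch} revealed by $\rho_f$. Since the IC case $\rho_f = \mathrm{id}$ is already covered by Theorem~\ref{thm:main-lb}, I may assume $\rho_f$ differs from the identity, and use either continuity (via the intermediate value theorem applied to $\rho_f$) or non-strict monotonicity (which forces some interval of beliefs to be mapped into a narrow band of reports, or two distinct beliefs to collapse to the same report) to locate a belief $b^* \neq \tfrac{1}{2}$ whose rational report is $\tfrac{1}{2}$, together with beliefs $b_0, b_1$ whose rational reports are opposite extremes. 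This yields a constant $\delta > 0$ measuring the belief--report discrepancy at the ``indecisive'' point.

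Next I would instantiate a two-phase adversary mirroring Lemma~\ref{lem:det-sym-lb}. In phase~$1$ of length $\alpha T$, three experts play: one with belief $b^*$ always contributing a tie vote via its report $\tfrac{1}{2}$, and two experts with beliefs $b_0, b_1$ alternating correctness via their extremal reports; $\rt$ is always chosen opposite to the algorithm's prediction, so $\Mt = \alpha T$ at $t = \alpha T$. By the same pairwise induction as in the IC proof, the two extremal experts' weights remain equal after each pair of steps, so ties persist throughout phase~$1$ and the algorithm bleeds loss, while the indecisive expert's per-step \emph{true} loss is $|b^* - \rt|$, which is bounded away from $\tfrac{1}{2}$ by $\delta$. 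In phase~$2$ I rearrange the reports so that whichever extremal expert has accumulated the lower true loss so far becomes the best expert in hindsight but still cannot overtake the indecisive expert in weight, forcing the algorithm to continue incurring loss; tuning $\alpha$ analogously to the IC case then yields $\MT \geq (2 + \Omega(\delta)) \min_i \mT$, contradicting no $2$-regret.

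The main obstacle will be quantifying the weight imbalance without a scoring rule gap as a clean parameter, since $f$ need not be proper and its shape is essentially arbitrary. I would handle this by observing that the algorithm cannot distinguish $b^*$ from a belief of exactly $\tfrac{1}{2}$ on the basis of reports alone, so it treats the $b^*$ expert as more indecisive than it truly is; this misattribution carries over to the weight-based comparison with the extremal experts in phase~$2$, and the constant $\delta$ extracted from $\rho_f$ lower-bounds the resulting loss gap regardless of the specific form of $f$. The tie-breaking rule is neutralized by the symmetry of the construction, just as in the IC proof. The continuous and non-strictly-increasing cases feed into the same downstream argument once the appropriate $b^*$, $b_0$, $b_1$ have been located.
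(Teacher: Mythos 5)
Your proposal has a genuine gap in the main case. The pivot of your construction is a belief $b^* \neq \tfrac12$ with $\rho_f(b^*) = \tfrac12$, giving a discrepancy $\delta = |b^* - \tfrac12| > 0$ that drives the lower bound. But for a continuous, strictly increasing rationality function with $\rho_f(\tfrac12) = \tfrac12$, the only belief mapping to $\tfrac12$ is $\tfrac12$ itself, so your $\delta$ is zero and the bound degenerates to $2$ --- and $\rho_f$ can certainly fix $\tfrac12$ while being far from the identity elsewhere. This is exactly the hardest case, and it is where the paper does the real work (Theorem~\ref{thm:det-non-ic-lb-continuous}). The key observation you are missing is that the composition $f(\rho_f(\cdot),\cdot)$ \emph{is} a strictly proper scoring rule: a rational expert's report maximizes her expected weight, so precomposing $f$ with $\rho_f$ yields a rule under which truthful belief-reporting is optimal. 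After normalization this composed rule has a strictly positive (semi-symmetric) scoring-rule gap $\mu$, or positive asymmetry parameters $c,d$, and the paper then reruns the Section~\ref{s:det-lb} instances with beliefs $b_1 = \rho_f^{-1}(p)$ and $b_2 = \rho_f^{-1}(1-p)$ where $p = \max\{\rho_f(0), 1-\rho_f(1)\}$, paying only a constant factor because $b_1 < \tfrac12 < b_2$ and at least one of them is extremal. So the quantitative driver is the gap of the \emph{induced proper rule}, not a belief--report mismatch; your stated "main obstacle" (no clean gap parameter because $f$ need not be proper) is resolved precisely by this composition, which your proposal never invokes.

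The other two pieces of your plan are closer to the mark but still incomplete as stated. Your handling of the non-strictly-increasing case (two beliefs collapsing to one report) matches the spirit of Lemma~\ref{lem:det-non-ic-lb-non-monotone}, which uses a direct two-expert instance giving $\MT \geq (2+|b_2-b_1|)\mT$, and your $b^* \neq \tfrac12$ case is essentially Proposition~\ref{prop:edge-cases}(2), which needs only a single expert. You also silently assume reports straddle $\tfrac12$; if $\rho_f(0) \geq \tfrac12$ or $\rho_f(1) \leq \tfrac12$ the regret is actually unbounded (Proposition~\ref{prop:edge-cases}(1)), which is easy but should be dispatched. To repair the proposal you would need to add the case split and, for the central case, replace the $\delta$-mismatch argument with the properness of $f\circ\rho_f$ and the resulting gap machinery.
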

Note that Theorem~\ref{thm:non-ic-main-lb} covers the standard algorithm, as well as other common update rules such as the Hedge update rule $f_\text{Hedge}(\pt,\rt) = e^{-\eta|\pt-\rt|}$ \citep{FS97}, and all IC methods, since they have the identity rationality function (though the bounds in Thm~\ref{thm:main-lb} are stronger).

We start with a
proof that any algorithm with non-strictly increasing rationality
function must have worst-case loss strictly more than twice the best expert in hindsight. Conceptually, the proof is a generalization of the proof for
Proposition~\ref{prop:standard-det-lb}.  

\begin{lemma}
	\label{lem:det-non-ic-lb-non-monotone}
Let $f$ be a weight update function with a non-strictly increasing
rationality function $\rho_f$, such that there exists $b_1 < b_2$ with
$\rho_f(b_1) \geq \rho_f(b_2)$. 
For every deterministic algorithm, in the worst case $$\MT \geq (2 + |b_2-b_1|)\mT.$$
\end{lemma}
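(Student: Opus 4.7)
The plan is to extend Proposition~\ref{prop:standard-det-lb}'s adversarial construction to arbitrary non-monotone $\rho_f$. Writing $p_1 := \rho_f(b_1)$ and $p_2 := \rho_f(b_2)$, the hypothesis gives $p_1 \geq p_2$ while $b_1 < b_2$, so the expert with the lower belief reports the higher value---a systematic inversion between reports (which govern the algorithm's decisions) and beliefs (which govern true losses). First I would construct a two-expert instance with expert~$A$ holding belief~$b_1$ throughout (so always reporting $p_1$) and expert~$B$ holding belief~$b_2$ throughout (so always reporting $p_2$). Because the algorithm is deterministic and the adversary is oblivious, the adversary simulates the algorithm on this input and commits in advance to a realization sequence $r^{(1)}, \ldots, r^{(T)}$ that forces the prediction wrong at every step, yielding $M^{(T)} = T$.

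I would then split into cases on the positions of $p_1, p_2$ relative to $\tfrac{1}{2}$. In the ``mixed'' case $p_1 > \tfrac{1}{2} > p_2$, the algorithm's prediction genuinely depends on the weight ratio, and the adversary uses an oscillating realization sequence analogous to Proposition~\ref{prop:standard-det-lb}. In the ``same-side'' cases, the algorithm's weighted prediction lies on a fixed side of $\tfrac{1}{2}$ regardless of weights, and the adversary uses the one-sided strategy $r^{(t)} \equiv 0$ (if both reports exceed $\tfrac{1}{2}$) or $r^{(t)} \equiv 1$ (if both are below), which also forces the algorithm wrong at every step.

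To obtain the $(2 + |b_2 - b_1|)$ ratio, I would compute the best-expert loss and show it is at most $T/(2 + \Delta)$ where $\Delta := b_2 - b_1$. With $T_0, T_1$ the counts of $0$- and $1$-realizations, the two experts' true losses are $T_0 b_1 + T_1 (1 - b_1)$ and $T_0 b_2 + T_1 (1 - b_2)$; the adversary-optimal extremes are $T b_1$ (at $T_0 = T$) and $T (1 - b_2)$ (at $T_1 = T$), giving candidate ratios $1/b_1$ and $1/(1 - b_2)$. An elementary check shows that the two intervals $\{b_1 \leq 1/(2+\Delta)\}$ and $\{b_2 \geq (1+\Delta)/(2+\Delta)\}$ together cover all valid pairs $(b_1, b_2)$: the second is equivalent to $b_1 \geq (1 - \Delta - \Delta^2)/(2+\Delta)$, which is strictly less than $1/(2+\Delta)$ whenever $\Delta > 0$. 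Hence at least one of $1/b_1 \geq 2 + \Delta$ or $1/(1-b_2) \geq 2 + \Delta$ always holds.

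The hard part will be the same-side subcase where only one of the two pure strategies directly forces $M^{(T)} = T$, and the corresponding belief condition (e.g., $b_1 \leq 1/(2+\Delta)$ in the both-reports-above-$\tfrac{1}{2}$ subcase) fails for the specific $b_1, b_2$ we are given. In that situation, the bare two-expert instance yields a ratio strictly less than $2 + \Delta$, and one must augment it with auxiliary experts whose beliefs are close to the adversary's chosen realization (driving their true loss arbitrarily small) but whose reports stay on the same side of $\tfrac{1}{2}$ as $p_1, p_2$ (so that the algorithm's prediction is unaffected and $M^{(T)} = T$ is preserved). Verifying that such auxiliary beliefs exist for every $\rho_f$ compatible with the hypothesis will require a careful subcase analysis on the image of $\rho_f$, exploiting the freedom that the weight-update rule need not be incentive-compatible.
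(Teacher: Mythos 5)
Your construction differs from the paper's in a way that leaves a genuine gap. You give both experts \emph{fixed} beliefs $b_1$ and $b_2$ and then reason as if the adversary may choose the split $(T_0,T_1)$ freely subject to $\MT=T$. It cannot: the realizations must be the opposite of the algorithm's predictions, which are determined by the (fixed) reports and the weights. In your ``mixed'' case $p_1>\tfrac12>p_2$ the realizations are forced to oscillate and each expert's per-step loss averages to $\tfrac12$, so the ratio degrades to $2$, not $2+\Delta$. In the ``same-side'' case only \emph{one} of the two pure sequences ($r^{(t)}\equiv 0$ or $r^{(t)}\equiv 1$) is consistent with $\MT=T$ --- namely the one opposite the common side of the reports --- and it may well be the one whose belief condition ($b_1\le 1/(2+\Delta)$, resp.\ $b_2\ge(1+\Delta)/(2+\Delta)$) fails; your covering argument shows one of the two conditions always holds, but the adversary does not get to pick which strategy is available. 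The proposed patch of adding auxiliary experts with beliefs near the realized outcome but reports on the prescribed side of $\tfrac12$ is also not guaranteed to be realizable: nothing in the hypothesis prevents $\rho_f$ from being the identity near $0$ and $1$ and non-monotone only on a small interval around $b_1,b_2$, in which case such auxiliary experts either do not exist or flip the algorithm's prediction.

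The paper's proof sidesteps all of this with a different instance: one expert has the fixed \emph{reflected} belief $b_0=1-\tfrac{b_1+b_2}{2}$, and the second expert's belief \emph{alternates between $b_1$ and $b_2$ over time}, chosen (by simulating the deterministic algorithm) so that whenever the realization is $1$ her belief is $b_2$ and whenever it is $0$ her belief is $b_1$, while the inversion $\rho_f(b_1)\ge\rho_f(b_2)$ ensures her report always pushes the algorithm toward the wrong outcome. This makes the \emph{combined} per-step true loss of the two experts exactly $1-\tfrac{b_2-b_1}{2}$ regardless of how the realizations split, so $\min_i \mT\le\tfrac12\bigl(1-\tfrac{\Delta}{2}\bigr)T$ and $\MT/\mT\ge 4/(2-\Delta)\ge 2+\Delta$, with no case analysis on where the reports sit relative to $\tfrac12$. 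The time-varying belief coordinated with the realization is the key idea missing from your proposal (it is also what makes the proof of Proposition~\ref{prop:standard-det-lb} work), and without it the fixed-belief instance does not yield the claimed bound.
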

\begin{proof}
	Fix, $f$, $b_1$ and $b_2$ such that $\rho_f(b_1) \geq \rho_f(b_2)$ with $b_1 < b_2$. Let $\pi_1 = \rho_f(b_1)$, $\pi_2 = \rho_f(b_2)$, $b_0 = 1-\frac{b_2 + b_1}{2}$, and $\pi_0 = \rho_f(b_0)$.
	
	Let there be two experts $e_0$ and $e_1$. Expert $e_0$ always predicts $\pi_0$ with belief $b_0$. If $\pi_1=\pi_2$, $e_1$ predicts $\pi_1$ (similar to Proposition~\ref{prop:standard-det-lb}, we first fix the predictions of $e_1$, and will give consistent beliefs later). Otherwise $\pi_1 > \pi_2$, and expert $e_1$ has the following beliefs (and corresponding predictions) at time $t$:
	$$
		b_1^{(t)}=\begin{cases}
		b_1 & \text{if }\frac{w_0^{(t)}\pi_0 + w_1^{(t)}\pi_2}{w_0^{(t)} + w_1^{(t)}}\geq\frac12\\
			b_2 & \text{otherwise}
		\end{cases}
	$$
	The realizations are opposite of the algorithm's decisions.
	
	We now fix the beliefs of $e_1$ in the case that $\pi_1=\pi_2$. Whenever $\rt=1$, set expert $e_1$'s belief to $b_2$, and whenever $\rt=0$, set her belief to $b_1$. Note that the beliefs indeed lead to the predictions she made, by the fact that $\pi_1 = \rho_f(b_1) = \rho_f(b_2)$.
	
	For the case where $\pi_1 > \pi_2$, if $(w_0^{(t)}\pi_0 + w_1^{(t)}\pi_2)/(w_0^{(t)} + w_1^{(t)})\geq\tfrac12$ then $e_1$'s belief will be $b_1$ leading to a report of $\pi_1$ and as $\pi_1>\pi_2$ it must hold that $(w_0^{(t)}\pi_0 + w_1^{(t)}\pi_1)(w_0^{(t)} + w_1^{(t)})>\tfrac12$, hence the algorithm will certainly choose $1$, so the realization is $0$. Conversely, if $(w_0^{(t)}\pi_0 + w_1^{(t)}\pi_2)(w_0^{(t)} + w_1^{(t)})<\frac12$, then the belief of $e_1$ will be $b_2$ and her report will lead the algorithm to certainly choose $0$, so the realization is $1$. So in all cases, if the realization is $1$, then the belief of expert $e_1$ is $b_2$ and otherwise it is $b_1$.
	
	The total number of mistakes $\MT$ for the algorithm after $T$ time steps is $T$ by definition. Every time the realization was $1$, $e_0$ will incur loss of $\tfrac{b_1+b_2}2$ and $e_1$ incurs a loss of $1-b_2$, for a total loss of $1-b_2 + \tfrac{b_1+b_2}2 = 1 - \tfrac{b_2 - b_1}{2}$. Whenever the realization was $0$, $e_0$ incurs a loss of $1-\tfrac{b_1+b_2}2$ and $e_1$ incurs a loss of $b_1$ for a total loss of $1-\tfrac{b_1+b_2}2 + b_1 = 1 - \tfrac{b_2 - b_1}{2}$.
	
	So the total loss for \emph{both} of the experts is $\left(1-\tfrac{b_2-b_1}{2}\right)\cdot T$, so the best expert in hindsight has $\mT \leq \frac{1}{2}\left(1-\tfrac{b_2-b_1}{2}\right)\cdot T$. Rewriting yields the claim.
\end{proof}

For continuous rationality functions, we can generalize the results in
Section~\ref{s:det-lb} using a type of simulation argument.
First, we address some edge cases.

\begin{proposition}
	\label{prop:edge-cases}
	For a weight update function $f$ with continuous strictly increasing rationality function $\rho_f$,
	\begin{enumerate}
		\item 
the regret is unbounded unless $\rho_f(0)<\tfrac12<\rho(1)$; and 
		\item if $\rho_f(b)=\tfrac12$ for $b\neq\tfrac12$, the worst-case loss of the algorithm satisfies $\MT \geq \left(2+|b-1/2|\right)\mT.$
	\end{enumerate} 
\end{proposition}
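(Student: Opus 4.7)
The plan is to prove each part by an explicit adversarial construction that exploits the structural defect the hypothesis forces on $\rho_f$.

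For Part 1, I would handle the case $\rho_f(0) \geq \tfrac12$; the case $\rho_f(1) \leq \tfrac12$ is symmetric. The construction uses a single expert with belief $\epsilon > 0$ small; by strict monotonicity her report is $\rho_f(\epsilon) > \rho_f(0) \geq \tfrac12$, so the algorithm's aggregate $q^{(t)}$ is strictly above $\tfrac12$ at every step and $\qt = 1$ always. With the adversary setting $\rt = 0$ throughout, $\MT = T$ while the expert's true loss is $\mT = \epsilon T$; letting $\epsilon \to 0$ makes the ratio arbitrarily large, ruling out any finite $\alpha$-regret bound.

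For Part 2, I would assume $b > \tfrac12$ (the case $b < \tfrac12$ is symmetric via the adversary's choice of realizations) and set $\delta = b - \tfrac12$. The construction uses two experts: $e_1$ with belief $b$, who reports $\rho_f(b) = \tfrac12$, and $e_2$ with belief $\tfrac12$, who reports $\pi := \rho_f(\tfrac12) < \tfrac12$ (by strict monotonicity and $\tfrac12 < b$). The key observation I plan to exploit is that
\[ q^{(t)} = \frac{w_1^{(t)}\cdot\tfrac12 + w_2^{(t)}\pi}{w_1^{(t)} + w_2^{(t)}} < \tfrac12 \iff \pi < \tfrac12, \]
which is weight-independent. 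Thus the algorithm picks $\qt = 0$ at every step and the adversary sets $\rt = 1$ throughout, giving $\MT = T$ and true losses $(1-b)T$ for $e_1$ and $\tfrac{T}{2}$ for $e_2$. So the best expert's loss is $(\tfrac12 - \delta)T$, and the final algebraic step is to check that $1/(\tfrac12 - \delta) \geq 2 + \delta$, which reduces to $\tfrac32 \delta + \delta^2 \geq 0$.

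The main obstacle I anticipate is ensuring the algorithm's decisions remain unchanged as the weights evolve—without the weight-independence of the sign of $q^{(t)} - \tfrac12$ in Part 2, multiplicative weight updates would eventually tilt the algorithm's predictions and a constant pair of experts would not force loss $T$ over all $T$ rounds. This is precisely why the construction pairs an ``indifferent-report'' expert ($e_1$) with one whose report lies strictly below $\tfrac12$: it makes the aggregate's sign depend only on $\pi$ and not on weights at all, preserving the adversary's leverage indefinitely.
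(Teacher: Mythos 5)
Your proof is correct and follows essentially the same strategy as the paper's: explicit adversarial instances with constant beliefs that force the algorithm to predict the wrong outcome at every step while some expert's true per-step loss stays bounded below $\tfrac12$. The only differences are cosmetic --- for part 1 the paper uses two experts with beliefs $0$ and $1$ (so the best expert's true loss is exactly zero), and for part 2 it uses a single expert with belief at the midpoint of $b$ and $\tfrac12$, whereas your two-expert construction with the weight-independent sign of $q^{(t)}-\tfrac12$ yields the same (in fact slightly stronger) constant.
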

\begin{proof}
	First, assume that it does not hold that $\rho_f(0)<\tfrac12<\rho_f(1)$. Since $\rho_f(0)<\rho_f(1)$ by virtue of $\rho_f$ being strictly increasing, it must be that either $\tfrac12\leq\rho_f(0) < \rho_f(1)$ or $\rho_f(0)<\rho_f(1)\leq\tfrac12$. Take two experts with $b_1^{(t)}=0$ and $b_2^{(t)}=1$. Realizations are opposite of the algorithm's predictions. Even though the experts have opposite beliefs, their predictions agree (potentially with one being indifferent), so the algorithm will consistently pick the same prediction, whereas one of the two experts will never make a mistake. Therefore the regret is unbounded.
	
	As for the second statement. Since $\rho_f(0)<\tfrac12<\rho_f(1)$, there is some $b$ such that $\rho_f(b)=\tfrac12$. Wlog, assume $b<\tfrac12$ (the other case is analogous). Since $\rho_f$ is continuous and strictly increasing, $\rho_f(\tfrac{b+1/2}2)>\tfrac12$ while $\tfrac{b+1/2}2<\tfrac12$. Take one expert $e_1$ with belief $b^{(t)}=\tfrac{b+1/2}2<\tfrac12$, who will predict $p^{(t)}=\rho_f(\tfrac{b+1/2}2)>\tfrac12$. Realizations are opposite of the algorithms decisions, and the algorithms decision is consistently 1, due to there only being one expert, and that expert putting more weight on 1. However, the absolute loss of the expert is only $\tfrac12 - \tfrac{|b-1/2}2$ at each time step. Summing over the timesteps and rewriting yields the claim.
\end{proof}

We are now ready to prove the main result in this section. The proof gives lower bound constants that are similar (though not identical) to the constants given in Lemmas~\ref{lem:det-sym-lb}, \ref{lem:det-asym1-lb} and \ref{lem:det-asym2-lb}, though due to a reparameterization the factors are not immediately comparable.
The proof appears in the appendix.

\begin{theorem}
	\label{thm:det-non-ic-lb-continuous}
	For a weight update function $f$ with continuous strictly
        increasing rationality function $\rho_f$, with
        $\rho_f(0)<\tfrac12<\rho_f(1)$ and $\rho_f(\tfrac12) =
        \tfrac12$, there is no deterministic no $2$-regret algorithm.
\end{theorem}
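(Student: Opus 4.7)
The plan is to reduce Theorem~\ref{thm:det-non-ic-lb-continuous} to the IC lower bound established in Section~\ref{s:det-lb} via the composed weight-update rule $\tilde f(b,r) := f(\rho_f(b), r)$. By definition of the rationality function, reporting $b$ truthfully under $\tilde f$ is equivalent to reporting $\rho_f(b)$ under $f$, which is a best response; hence $\tilde f$ is a proper scoring rule in $b$. Normalizing $\tilde f$ via Proposition~\ref{prop:rewrite}, we can read off its scoring-rule gap $\tilde\gamma$ (or the asymmetric analogs $\tilde\mu$, $\tilde c$, $\tilde d$). Continuity and strict monotonicity of $\rho_f$, together with the hypotheses $\rho_f(0)<1/2<\rho_f(1)$ and $\rho_f(1/2)=1/2$, ensure that the values $\rho_f(0), 1/2, \rho_f(1)$ are distinct and that these gap quantities are strictly positive.

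The next step is to replay the three-expert, two-phase adversarial instance from Lemmas~\ref{lem:det-sym-lb}, \ref{lem:det-asym1-lb}, and \ref{lem:det-asym2-lb}, but specifying \emph{beliefs} (rather than reports) at $\{1/2, 0, 1\}$. The hypothesis $\rho_f(1/2)=1/2$ makes the middle expert contribute equally to both sides of the algorithm's decision rule, exactly as in the IC proofs. The outer two experts report $\rho_f(0)$ and $\rho_f(1)$, and the weights accumulate at rates governed by $\tilde f$ in place of $f$. In phase one, realizations oppose the algorithm's choices so the algorithm accumulates loss $\alpha T$ while all three experts accumulate true absolute loss $\alpha T/2$; in phase two, the middle expert stays at $1/2$, one extreme expert is always right and the other always wrong, and the weight surplus of the always-wrong expert (stemming from the scoring-rule gap of $\tilde f$) forces the algorithm to keep following it. This yields $\MT \geq (2+\delta)\mT$ with $\delta>0$ depending only on the normalized parameters of $\tilde f$, and hence independent of the learning rate.

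The main obstacle is that the algorithm's decision rule is phrased over \emph{reports}, not beliefs, so that two experts with opposite beliefs $0$ and $1$ need not produce reports whose weighted contribution $w_2(2\rho_f(0)-1)+w_3(2\rho_f(1)-1)$ cancels when $w_2=w_3$ — unlike the clean IC instance where $0$ and $1$ balance automatically. This asymmetry is exactly what the asymmetric-case machinery of Lemma~\ref{lem:det-asym2-lb} is designed to handle: rather than forcing an exact tie, one tracks the weight ratio $w_3/w_2$ and identifies the moment when it crosses the threshold $(1-2\rho_f(0))/(2\rho_f(1)-1)$, at which point the algorithm's predicted label flips. If needed, a short warm-up phase (or a fourth auxiliary calibrating expert with a well-chosen belief) can be used to arrange that the crossing occurs at a clean time step, after which the rest of the argument proceeds as in the symmetric case with $\tilde\gamma$ replaced by the appropriate asymmetric gap. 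Combining the three regimes (symmetric $\tilde f$, semi-symmetric $\tilde f$, asymmetric $\tilde f$) in parallel with Lemmas~\ref{lem:det-sym-lb}–\ref{lem:det-asym2-lb} produces a uniform constant $\delta>0$ bounding the worst-case regret ratio away from $2$, proving the theorem.
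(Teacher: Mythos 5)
Your overall framework matches the paper's: compose $f$ with $\rho_f$ to get a strictly proper scoring rule in the belief, normalize it, and replay the instances from Lemmas~\ref{lem:det-sym-lb}--\ref{lem:det-asym2-lb}. You also correctly identify the central obstacle: the algorithm aggregates \emph{reports}, so experts with beliefs $0$ and $1$ report $\rho_f(0)$ and $\rho_f(1)$, which are generally not reflections of each other about $\tfrac12$, and the tie structure that drives phase one of the symmetric construction collapses. The gap is in your resolution of this obstacle. The threshold-crossing idea you borrow from Lemma~\ref{lem:det-asym2-lb} does not transfer: in that lemma the two experts still report exactly $0$ and $1$ (perfectly opposite), and the asymmetry lives entirely in how the \emph{weights} evolve; it is not machinery for handling non-opposite reports. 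With non-opposite reports the algorithm's prediction is constant until the weight ratio drifts across $(1-2\rho_f(0))/(2\rho_f(1)-1)$, the per-step loss accounting for the two extreme experts changes, and the phase-two argument — where the $\tfrac12$-expert's accumulated gap surplus leads the algorithm astray — no longer follows from the phase-one bookkeeping. Your suggested ``warm-up phase'' or ``auxiliary calibrating expert'' is not worked out and would need to restore an exact tie at every even step, which a one-time calibration cannot do.

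The paper's fix, which your proposal misses, is to \emph{choose the beliefs so that the induced reports are symmetric}: set $p=\max\{\rho_f(0),\,1-\rho_f(1)\}$ and take the extreme experts' beliefs to be $b_1=\rho_f^{-1}(p)$ and $b_2=\rho_f^{-1}(1-p)$, so that their reports are exactly $p$ and $1-p$ and cancel whenever their weights are equal. This is precisely where the continuity hypothesis earns its keep — by the intermediate value theorem, together with $\rho_f(0)<\tfrac12<\rho_f(1)$ and $\rho_f(\tfrac12)=\tfrac12$, both $p$ and $1-p$ lie in the image of $\rho_f$ — whereas in your write-up continuity is only invoked to assert that some gap quantities are positive. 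The price of this substitution is that the extreme experts' \emph{true} losses are computed at $b_1,b_2$ rather than at $0,1$, which costs a bounded constant factor in the final regret gap (the paper loses at most a factor $\tfrac12$ in $\delta$); the conclusion $\MT\geq(2+\delta)\mT$ with $\delta>0$ independent of the learning rate then goes through by tracing the bijection with the IC instances. As written, your argument does not close the obstacle you yourself flagged.
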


Theorem~\ref{thm:non-ic-main-lb} now follows from Lemma~\ref{lem:det-non-ic-lb-non-monotone}, Proposition~\ref{prop:edge-cases} and Theorem~\ref{thm:det-non-ic-lb-continuous}.


\section{Randomized Algorithms: Upper and Lower Bounds}
\label{s:rand}

\subsection{Impossibility of Vanishing Regret}

We now consider randomized online learning algorithms, which can
typically achieve better worst-case guarantees than deterministic
algoritms.
For example, with honest experts, there are randomized algorithms with
worst-case loss
$\MT \leq \left(1+O\left(\tfrac{1}{\sqrt T}\right)\right)\mT$, which
means that the regret with respect to the best expert in hindsight is
vanishing as $T \rightarrow \infty$.  Unfortunately, the lower bounds in
Sections~\ref{s:det-lb}~and~\ref{s:det-non-ic-lb} below imply that no such
result is possible for randomized algorithms.

\begin{corollary}
	\label{lem:ic-ran-lb}
	Any incentive compatible randomized weight-update algorithm or
        non-IC randomized algorithm with continuous or non-strictly
        increasing rationality function cannot be no $1$-regret.
\end{corollary}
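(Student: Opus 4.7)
The plan is to prove the corollary by a standard derandomization argument that transfers the deterministic lower bounds of Sections~\ref{s:det-lb} and~\ref{s:det-non-ic-lb} to the randomized setting. Concretely, I will argue by contradiction: suppose there exists a randomized weight-update algorithm $A_R$ (either IC or with a rationality function $\rho_f$ covered by Theorem~\ref{thm:non-ic-main-lb}) that is no $1$-regret, so that $\E[\MT] \leq \min_i \mT + o(T)$ on every instance. From $A_R$ I will build a deterministic algorithm $A_D$ that uses the \emph{same} weight-update rule $f$, but replaces the randomized prediction with the deterministic majority rule: predict $\qt=1$ if $\sum_i \wt \pt \geq \sum_i \wt (1-\pt)$, and $\qt=0$ otherwise (with an arbitrary but fixed tie-breaking rule).

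The first key step is to bound the per-step loss of $A_D$ by twice the per-step expected loss of $A_R$. Writing the randomized algorithm's probability of predicting $1$ at step $t$ as $\tilde q^{(t)} = \sum_i \wt \pt / \sum_i \wt$, the expected absolute loss of $A_R$ at step $t$ is $\tilde q^{(t)}(1-\rt) + (1-\tilde q^{(t)})\rt$. On any step where $A_D$ errs, the wrong outcome has weight at least $\tfrac12$, so the expected loss of $A_R$ at that step is at least $\tfrac12$, while $A_D$ incurs loss $1$. Summing over $t$ gives $\MT^{A_D} \leq 2\,\E[\MT^{A_R}] \leq 2\min_i \mT + o(T)$, i.e., $A_D$ is a no $2$-regret algorithm.

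The second key step is to check that $A_D$ lies in the class of algorithms to which our deterministic lower bounds apply. Since $A_D$ shares the weight-update function $f$ with $A_R$, the experts' incentives under $A_D$ are determined by the same maximization of $\E_{\bt}[f(\pt,\rt)]$; hence $A_D$ has exactly the same rationality function $\rho_f$ as $A_R$. In the IC case this means $\rho_f$ is the identity, so $A_D \in \Ad$, and Theorem~\ref{thm:main-lb} contradicts $A_D$ being no $2$-regret. In the non-IC case $\rho_f$ is continuous or non-strictly increasing by hypothesis, so Theorem~\ref{thm:non-ic-main-lb} yields the same contradiction. A small subtlety worth noting explicitly is that those lower bounds are witnessed by instances with $\min_i \mT = \Theta(T)$, so the $(2+\delta)$ factor really is incompatible with the $2\min_i \mT + o(T)$ bound on $A_D$.

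The main obstacle is essentially bookkeeping: making sure the derandomization preserves class membership (in particular, that the IC/rationality properties transfer since they depend only on $f$, not on how the prediction is drawn from the aggregated weights), and that the $o(T)$ slack in the no $1$-regret assumption survives the factor-$2$ blowup to contradict the deterministic lower bound, which requires checking that the adversarial instances used to prove Theorems~\ref{thm:main-lb} and~\ref{thm:non-ic-main-lb} force $\min_i \mT$ to grow linearly in $T$. Both points can be verified directly from the constructions already in the paper, and the rest of the argument is the standard two-step derandomization above.
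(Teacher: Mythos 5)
Your argument is correct and is essentially the paper's own proof run in the contrapositive direction: the paper applies the same key observation---that whenever the deterministic majority rule errs, at least half the expert weight sits on the wrong outcome, so the randomized algorithm's expected per-step loss is at least $\tfrac12$---directly to the lower-bound instances of Theorems~\ref{thm:main-lb} and~\ref{thm:non-ic-main-lb} and Lemma~\ref{lem:det-non-ic-lb-non-monotone}, concluding $\E[\MT] \geq \tfrac12(2+\Omega(1))\min_i\mT$. The bookkeeping points you flag (that the incentive properties depend only on $f$ and hence transfer, and that $\min_i \mT = \Theta(T)$ in those instances so the $o(T)$ slack is harmless) are indeed satisfied by the constructions already in the paper.
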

\begin{proof}
	We can use the same instances as for 
Theorems~\ref{thm:main-lb} and~\ref{thm:det-non-ic-lb-continuous} and
Lemma~\ref{lem:det-non-ic-lb-non-monotone}
(whenever the algorithm was indifferent, the realizations were defined using the algorithm's tie-breaker rule; in the current setting simply pick any realization, say $r^t=1$).
	
Whenever the algorithm made a mistake, it was because
$\sum_i w_i^ts_i^t \geq \frac{1}{2}\sum_i w_i^t$. Therefore, in the
randomized setting, it will still incur an expected loss of at least
$\frac12$. Therefore the total expected loss of the randomized
algorithm is at least
half that of the deterministic algorithm. Since
the approximation factor for the latter is bounded away from $2$ in
all cases in
Theorems~\ref{thm:main-lb} and~\ref{thm:det-non-ic-lb-continuous} and
Lemma~\ref{lem:det-non-ic-lb-non-monotone}, 
in these cases the worst-case loss of a randomized
algorithm satisfies $\MT \geq (1+\Omega(1))\mT.$
\end{proof}

\subsection{An Incentive-Compatible Randomized Algorithm for Selfish Experts}
While we cannot hope to achieve a no-regret algorithm for online prediction with selfish experts, we can do better than the deterministic algorithm from Section~\ref{s:det-ub}. We now focus on the more general class of algorithms where the
algorithm can make any prediction $\qt\in[0,1]$ and incurs a loss of
$|\qt-\rt|$. 
We give a randomized algorithm based on the Brier
strictly proper scoring rule with loss at most $2.62$ times that
of the best expert as $T\rightarrow \infty$. 

Perhaps the most natural choice for a randomized algorithm is to simply report a prediction of
$\qt = \sum_{i=1}^n \wt\pt / \sum_{j=1}^n w_j^{(t)}$. However, this is
problematic when the experts are highly confident and correct in their
predictions. By the definition of a (bounded) strictly proper scoring
rule, $\frac{d}{d\pt} f(\pt,1)$ is $0$ at $1$ (and similarly the
derivative is $0$ around $0$ for a realization of $0$). This means
that experts that are almost certain and correct will not have their
weight reduced meaningfully, and so the proof that uses the potential
function does not go through. 

This motivates looking for an algorithm where the sum of weights of
experts is guaranteed to decrease significantly whenever the algorithm
incurs a loss. Consider the following generalization of RWM that
rounds predictions to the nearest integer if they are with $\theta$ of
that integer. 

\begin{definition}[$\theta$-randomized weighted majority]
	\label{def:A}
	Let $\Ar$ be the class of algorithms that maintains expert weights as in Definition~\ref{def:wuopa}. Let $b^{(t)} = \sum_{i=1}^n \frac{\wt}{\sum_{j=1}^n w_j^{(t)}}\cdot \pt$ be the weighted predictions. For parameter $\theta\in [0,\tfrac12]$ the algorithm chooses $1$ with probability
	$$\Mpt = \begin{cases}
	0 & \text{if }\Mbt \leq \theta\\
	\Mbt & \text{if }\theta < \Mbt \leq 1-\theta\\ 
	1 & \text{otherwise.}
	\end{cases}$$
\end{definition}
We call algorithms in $\Ar$ $\theta$-RWM algorithms. We'll use a
$\theta$-RWM algorithm with the Brier rule. Recall that $\st =
|\pt-\rt|$; the Brier rule is defined as: 

\begin{align}\label{eq:brier}
f_\text{Br}(\pt,\rt) = 1-\eta\left(\frac{(\pt)^2 + (1-\pt)^2 +1}{2}-(1-\st) \right).
\end{align}

\begin{theorem}\label{thm:rand-ub}
Let $A\in\Ar$ be a $\theta$-RWM algorithm with the Brier weight update
rule $f_\text{Br}$ and $\theta=0.382$ and with
$\eta=O(1/\sqrt T)\in(0,\tfrac12)$. $A$ has no $2.62$-regret.
\end{theorem}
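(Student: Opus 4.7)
The plan is to run a potential-function analysis in the spirit of classical RWM, with the threshold $\theta$ chosen to balance two failure modes of the randomized rounding. A useful preliminary simplification: expanding $((\pt)^2 + (1-\pt)^2 + 1)/2 - (1-\st)$ case-by-case on $\rt \in \{0,1\}$ shows it equals $(\st)^2$, so the Brier update collapses to $\wtp = (1 - \eta (\st)^2)\wt$. Since $f_\text{Br}$ is a positive affine transformation of the Brier rule and $\eta<\tfrac12$ keeps it strictly positive, Proposition~\ref{prop:psr} implies selfish experts report $\pt = \bt$, so reported and true losses coincide.

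Let $\Pt = \sum_i \wt$ and, for each round $t$, define $B^{(t)} = (1-b^{(t)})^2$ if $\rt=1$ and $B^{(t)} = (b^{(t)})^2$ otherwise. Convexity of $x \mapsto x^2$ (Jensen's inequality on the weighted distribution of the $\st$'s) yields $\sum_i \wt (\st)^2 \geq \Pt B^{(t)}$, so $\Ptp \leq \Pt(1-\eta B^{(t)})$ and hence $\log \Phi^{(T+1)} \leq \log n - \eta \sum_t B^{(t)}$. On the other side, $\Phi^{(T+1)} \geq w_i^{(T+1)}$ combined with $\log(1-x)\geq -x - x^2$ for $x\leq\tfrac12$ and $(\st)^2 \leq \st$ gives $\log \Phi^{(T+1)} \geq -(\eta+\eta^2)\mT$. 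Sandwiching these two bounds produces $\sum_t B^{(t)} \leq (1+\eta)\mT + \log n/\eta$ for every expert $i$.

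The heart of the proof is a per-step inequality: the expected loss of the algorithm at round $t$ is at most $B^{(t)}/\theta$, verified by a three-case analysis assuming $\rt = 1$ (the case $\rt = 0$ is symmetric). If $b^{(t)} \leq \theta$, the algorithm deterministically predicts $0$ and suffers loss $1$, while $B^{(t)} = (1-b^{(t)})^2 \geq (1-\theta)^2$; this case needs $(1-\theta)^2 \geq \theta$. If $\theta < b^{(t)} \leq 1-\theta$, the expected loss is $1-b^{(t)}$ and $B^{(t)} = (1-b^{(t)})^2$, so the inequality reduces to $\theta \leq 1-b^{(t)}$, which holds because $b^{(t)} \leq 1-\theta$. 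If $b^{(t)} > 1-\theta$ the loss is $0$. Both binding constraints meet at the quadratic $(1-\theta)^2 = \theta$, giving $\theta = (3-\sqrt 5)/2 \approx 0.382$ and $1/\theta = (3+\sqrt 5)/2 < 2.62$. Summing the per-step inequality and invoking the sandwich yields $\MT \leq \tfrac{1+\eta}{\theta}\mT + \tfrac{\log n}{\theta \eta}$; taking $\eta = \Theta(1/\sqrt T)$ makes both error terms $o(T)$ and the leading coefficient $1/\theta < 2.62$.

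The principal obstacle is the first case of the per-step inequality: when $b^{(t)}$ sits just below $\theta$ the rounded algorithm still pays a full unit loss, while the Jensen-driven potential decrease is only of order $(1-\theta)^2$. Choosing $\theta$ any smaller breaks this case, while choosing $\theta$ any larger breaks the middle case because the boundary $b^{(t)}\to 1-\theta$ forces $\theta \leq 1 - b^{(t)}$. The quadratic $(1-\theta)^2 = \theta$ is exactly the tradeoff point, and its positive root simultaneously identifies the optimal threshold and pins down the final approximation constant $2.62$.
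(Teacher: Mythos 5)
Your proof is correct and follows essentially the same route as the paper's: a potential-function argument with $\Phi^{(t)}=\sum_i\wt$, a lower bound via a single expert's weight, and a three-case analysis of the $\theta$-RWM rounding whose binding constraints meet at $(1-\theta)^2=\theta$, yielding the constant $1/\theta=(3+\sqrt5)/2<2.62$. Your observation that the Brier update collapses to $\wtp=(1-\eta(\st)^2)\wt$, followed by Jensen applied to the weighted $\st$'s, is a cleaner packaging of the paper's concavity/chord bounds, but the underlying argument is the same.
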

The proof appears in the appendix.


\section{Simulations}
\label{s:simulations}



The theoretical results presented so far indicate that when faced with selfish experts, one should use an IC weight update rule, and ones with smaller scoring rule gap are better. Two objections to these conclusions are: first, the presented results are \emph{worst-case}, and different instances are used to obtain the bounds for different scoring rules. A priori it is not obvious that for an arbitrary (non worst-case) input, the regret of different scoring rules follow the same relative ordering. It is of particular interest to see if the non-IC standard weight-update rule does better or worse than the IC methods proposed in this paper. Second, there is a gap between our upper and lower bounds for IC rules. It is therefore informative to look at different instances for which we expect our algorithms to do badly, to see if the performance is closer to the upper bound or to the lower bound.

\subsection{Data-Generating Processes}
To address these two concerns, we look at three different data-generating processes.

\paragraph{Hidden Markov Model.} The experts are represented by a simple two-state hidden Markov model (HMM) with a ``good'' state and a ``bad'' state. We first flip a fair coin to determine the realization $\rt$. For $\rt= 0$ (otherwise beliefs are reversed), in the good state expert $i$ believes $\bt \sim \min\{\text{Exp}(1)/5, 1\}$: the belief is exponentially distributed with parameter $\lambda=1$, values are rescaled by $\tfrac15$ and clamped between $0$ and $1$. In the bad state, expert $i$ believes $\bt\sim \text{U}[\tfrac12,1]$.  The transition probabilities to move to the other state are $\tfrac1{10}$ for both states. This data generating process models that experts that have information about the event are more accurate than experts who lack the information.

\paragraph{Lower Bound Instance.} The lower bound instance described in the proof of Lemma~\ref{lem:det-sym-lb}.

\paragraph{Greedy Lower Bound.} A greedy version of the lower bound described the proof of Lemma~\ref{lem:det-sym-lb}. There are 3 experts, one ($e_0$) who is mostly uninformative, and two ($e_1$ and $e_2$) who are alternating correct and incorrect. Whenever the weight of $e_0$ is ``sufficiently'' higher than that of $e_1$ and $e_2$, we have ``punish the algorithm'' by making $e_0$ wrong twice: $b_0^{(t)} = 0$, $b_1^{(t)} = 1$, $b_2^{(t)} = \tfrac12$, $\rt = 1$, and $b_0^{(t+1)} = 0$, $b_1^{(t)} = \tfrac12$, $b_0^{(t)} = 1$, $\rt = 1$. ``Sufficiently'' here means that weight of $e_0$ is high enough for the algorithm to follow its advice during both steps.

\subsection{Results}
\paragraph{Hidden Markov Model Data.} In Figure~\ref{f:hmm} we show the regret as a function of time for  the standard weight-update function, the Brier scoring rule, the spherical scoring rule, and a scoring rule from the Beta family \citep{B05} with $\alpha = \beta = \tfrac12$. The expert's report $\pt$ for the IC methods correspond to their belief $\bt$, whereas for the standard weight-update rule, the expert reports $\pt=1$ if $\bt\geq \tfrac{1}{2}$ and $\pt=0$ otherwise. The $y$ axis is the ratio of the total loss of each of the algorithms to the performance of the best expert at that time. For clarity, we include the performance of the best expert at each time step, which by definition is 1 everywhere. The plot is for $10$ experts, $T=10,000$, $\eta = 10^{-2}$, and the randomized\footnote{Here we use the regular RWM algorithm, so in the notation of Section~\ref{s:rand} we have $\theta=0$.} versions of the algorithms (we return to why in a moment), averaged over 30 runs.

From the plot, we can see that each of the IC methods does significantly better than the standard weight-update algorithm. Whereas the standard weight-update rule levels off between $1.15$ and $1.2$, all of the IC methods dip below a regret of $1.05$ at $T=2,000$ and hover around $1.02$ at $T=10,000$. This trend continues and at $T=200,000$ (not shown in the graph), the IC methods have a regret of about $1.003$, whereas the regret for the standard algorithm is still $1.14$. This gives credence to the notion that failing to account for incentive issues is problematic beyond the worst-case bounds presented earlier.

Moreover, the plot shows that while there is a worst-case lower bound for the IC methods that rules out no-regret, for quite natural synthetic data, the loss of all the IC algorithms approaches that of the best expert in hindsight, while the standard algorithm fails to do this. It curious to note that the performance of all IC methods are comparable (at least for this data-generating process). This seems to indicate that eliciting the truthful beliefs of the experts is more important than the exact weight-update rule. 

Finally, note that the results shown here are for randomized weighted majority, using the different weight-update rules. For the deterministic version of the algorithms the difference between the non-IC standard weight-update rules and the IC ones is even starker. Different choices for the transition probabilities of the HMM, and different distributions, e.g. the bad state has $\bt\sim \text{U}[0,1]$, give similar results to the ones presented here.

\begin{figure}
	\centering
	\includegraphics[scale=1]{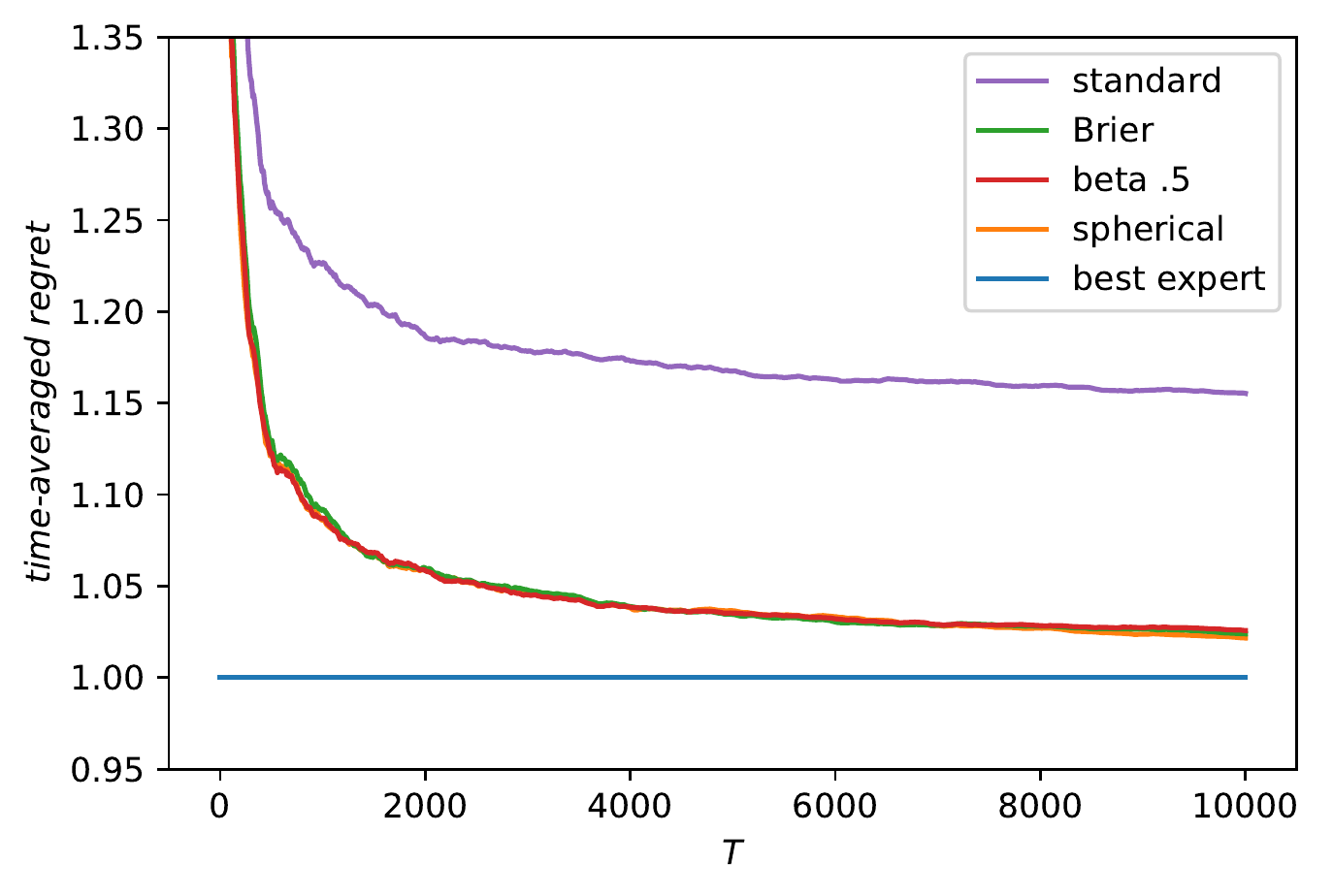}
	\caption{The time-averaged regret for the HMM data-generating process.}
	\label{f:hmm}
\end{figure}

\begin{figure}
	\centering
	\includegraphics[scale=1]{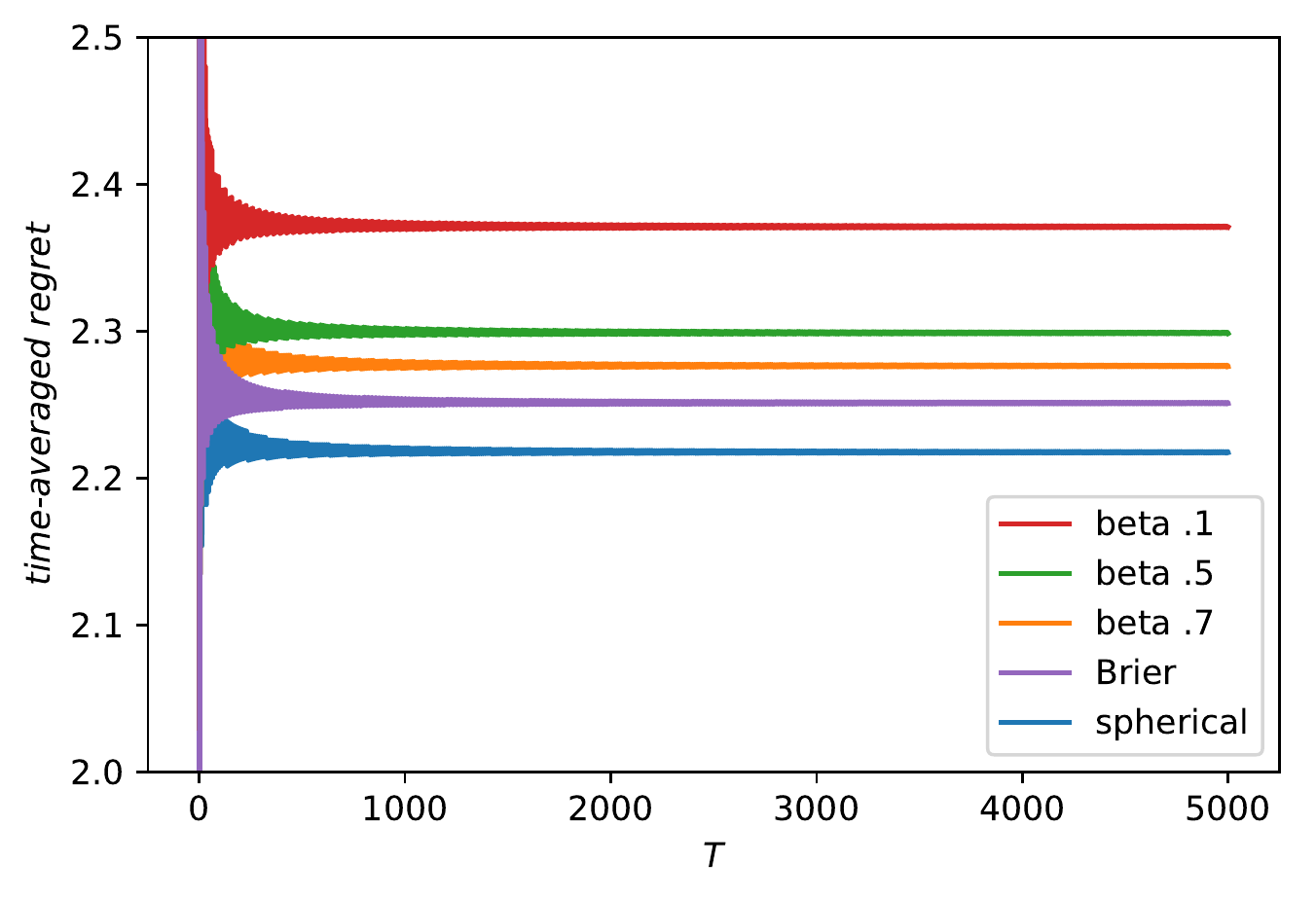}
	\caption{Regret for the greedy lower bound instance.}
	\label{f:lbs}
\end{figure}

\begin{table}
	\centering
	\caption{Comparison of lower bound results with simulation. The simulation is run for $T=10,000, \eta=10^{-4}$ and we report the average of $30$ runs. For the lower bounds, the first number is the lower bound from Lemma~\ref{lem:det-sym-lb}, i.e. $2+ \tfrac{1}{\lceil\gamma^{-1}\rceil}$, the second number (in parentheses) is $2+\gamma$.}
	\label{t:lbs}
	\footnotesize
	\begin{tabular}{|c|c|c|c|c|c|c|c|}
		\hline 
		& Beta $.1$ & Beta $.3$ & Beta $.5$ & Beta $.7$ & Beta $.9$ & Brier ($\beta=1$) & Spherical \\ 
		\hline \hline
		Greedy LB Sim & 2.3708 & 2.3283 & 2.2983 & 2.2758 & 2.2584 & 2.2507 & 2.2071 \\ 
		\hline 
		LB Simulation & $2.4414$ & $2.3657$ & $2.3186$ & $2.2847$ & $2.2599$ & $2.2502$ & 2.2070 \\ 
		\hline 
		Lem~\ref{lem:det-sym-lb} LB & $2.33\ (2.441)$ & $2.33\ (2.365)$ & $ 2.25\ (2.318)$ & $2.25\ (2.285)$ & $2.25\ (2.260)$ & $2.25$ & $2.2\ (2.207)$ \\ 
		\hline 
	\end{tabular} 
\end{table}
\paragraph{Comparison of LB Instances.} Now let's focus on the performance of different IC algorithms. First, in Figure~\ref{f:lbs} we show the regret for different algorithms on the greedy lower bound instance. Note that this instance is different from the one used in the proof of Lemma~\ref{lem:det-sym-lb}, but the regret is very close to what is obtained there. In fact, when we look at Table~\ref{t:lbs}, we can see that very closely traces $2+\gamma$. In Table~\ref{t:lbs} we can also see the numerical results for the lower bound from Lemma~\ref{lem:det-sym-lb}. In fact, for the analysis, we needed to use $\lceil\gamma^{-1}\rceil$ when determining the first phase of the instance. When we use $\gamma$ instead numerically, the regret seems to trace $2+\gamma$ quite closely, rather than the weaker proven lower bound of $2+\tfrac{1}{\lceil\gamma^{-1}\rceil}$. By using two different lower bound constructions, we can see that the analysis of Lemma~\ref{lem:det-sym-lb} is essentially tight (up to the rounding of $\gamma$), though this does not exclude the possibility that stronger lower bounds are possible using more properties of the scoring rules (rather than only the scoring rule gap $\gamma$). In these experiments (and others we have performed), the regret of IC methods never exceeds the lower bound we proved in Lemma~\ref{lem:det-sym-lb}. Closing the gap between the lower and upper bound requires finding a different lower bound instance, or a better analysis for the upper bound.

\newpage

\bibliographystyle{plainnat}
\bibliography{references}
\clearpage
\appendix
\section{Proofs}
\label{s:proofs}

\subsection{Proof of Theorem~\ref{thm:spherical-det-ub}}
Let $A$ be the WM algorithm that updates weights according to \eqref{eq:spherical} for $\eta<\frac12$. Let $M^T$ be the total loss of $A$ and $m_i^T$ the total loss of expert $i$. Then for each expert $i$, we have
$$M^T \leq (2+\sqrt 2)\left((1+\eta)m_i^T + \frac{\ln n}{\eta}\right).$$
\begin{proof}
	We use an intermediate potential function $\Phi^{(t)} = \sum_i \wt$. Whenever the algorithm incurs a loss, the potential must decrease substantially. For the algorithm incur a loss, it must have picked the wrong outcome. Therefore it loss $|\rt - t^{(t)}|=1$ and $\sum_i \wt \st \geq \frac12\cdot\Phi^{(t)}$. We use this to show that in those cases the potential drops significantly:
	\begin{align*}
	\Ptp &= \sum_i \left(1-\eta\left(1-\frac{1-\st}{\sqrt{\left(\pt\right)^2 + \left(1-\pt\right)^2}}\right)\right)\cdot \wt\\
	&\leq \sum_i \left(1-\eta\left(1 -     \sqrt 2\left(1-\st\right)\right)\right)\cdot \wt &\hspace{-.35cm}\left(\text{since $\min_x x^2 + (1-x)^2=\frac12$}\right)\\
	&=\left(1-\eta\right)\Pt +\sqrt2\eta \sum_i \left(1-\st\right) \wt\\
	&\leq\left(1-\eta\right)\Pt +\frac{\sqrt2\eta}{2}\Pt  &\hspace{-2cm}\left(\text{since $\sum_i \wt\left(1-\st\right) \leq \frac12 \Pt$}\right)\\
	&=\left(1-\frac{2-\sqrt2}{2}\eta\right)\Pt
	\end{align*}
	Since initially $\Phi^0=n$, after $\MT$ mistakes, we have:
	\begin{align}
	\label{eq:ub}
	\Phi^T&\leq n \left(1-\frac{2-\sqrt2}{2}\eta\right)^{\MT}.
	\end{align}
	Now, let's bound the final weight of expert $i$ in terms of the number of mistakes she made:
	\begin{align*}
	\wT&= \prod_t \left(1-\eta\left(1-\frac{1-\st}{\sqrt{\left(\pt\right)^2 + \left(1-\pt\right)^2}}\right)\right)\\
	&\geq \prod_t \left(1-\eta \st\right)& \left(\text{since $\max_{x\in[0,1]} x^2 + (1-x)^2=1$}\right)\\
	&\geq \prod_t (1-\eta)^{\st} & \hspace{-2cm}\left(\text{since } 1-\eta x \geq (1-\eta)^x \text{ for }x\in[0,1]\right)\\
	&= (1-\eta)^{\sum_t \st}\\
	&= (1-\eta)^{\mT}
	\end{align*}
	Combining this with $\wt \leq \Pt$ and \eqref{eq:ub}, and taking natural logarithms of both sides we get:
	\begin{align*}
	\ln \left( (1-\eta)^{\mT}\right)&\leq \ln\left(n\left(1-\frac{2-\sqrt 2}{2}\eta\right)^{\MT}\right)\\
	\mT\cdot \ln(1-\eta) &\leq \MT\cdot \ln\left(1-\frac{2-\sqrt 2}{2}\eta\right) + \ln n\\
	\mT\cdot \left(-\eta - \eta^2\right) &\leq \MT\cdot \ln\left(\exp\left(-\frac{2-\sqrt 2}{2}\eta\right)\right) + \ln n\\
	\mT\cdot \left(-\eta - \eta^2\right) &\leq \MT\cdot -\frac{2-\sqrt 2}{2}\eta + \ln n\\
	\MT &\leq \left(\frac{2}{2-\sqrt 2}\right)\cdot\left((1+\eta)\mT + \frac{\ln n}{\eta}\right) 
	\end{align*}
	where in the third inequality we used $-\eta-\eta^2 \leq \ln(1-\eta)$ for $\eta\in(0,\tfrac12)$.
	Rewriting the last statement proves the claim.
\end{proof}

\subsection{Proof of Lemma~\ref{lem:det-sym-lb}}
Let $\F$ be a family of scoring rules generated by a symmetric
strictly proper scoring rule $f$, and let $\gamma$ be the scoring rule
gap of $\F$. In the worst case, MW with any scoring rule $f'\in\F$ with $\eta\in(0,\tfrac12)$, algorithm loss $\MT$ and expert loss $\mT$, satisfies 
$$\MT \geq\left(2 + \frac{1}{\lceil\gamma^{-1}\rceil}\right)\cdot \mT.$$
\begin{proof}
	Let $a$, $\eta$ be the parameters of $f'$ in the family $\F$, as in
	Definition~\ref{def:family}. 
	Fix $T$ sufficiently large and an integer multiple of
	$2\lceil\gamma^{-1}\rceil + 1$, and let $e_1$, $e_2$, and $e_3$ be
	three experts. For $t=1,...,\alpha\cdot T$ where $\alpha
	=\tfrac{2\lceil\gamma^{-1}\rceil}{2\lceil\gamma^{-1}\rceil+1}$ such
	that $\alpha T$ is an even integer, let $p_1^{(t)} = \frac12$,
	$p_2^{(t)} = 0 $, and $p_3^{(t)} = 1$. Fix any tie-breaking rule for
	the algorithm. Realization $\rt$ is always the opposite of what the
	algorithm chooses. 
	
	Let $\Mt$ be the loss of the algorithm up to time $t$, and let $\mt$
	be the loss of expert $i$. We first show that at $t'=\alpha T$,
	$m_1^{(t')}=m_2^{(t')}=m_3^{(t')} = \tfrac{\alpha T}2$ and $M^{(t')} =
	\alpha T$. The latter part is obvious as $\rt$ is the opposite of what
	the algorithm chooses. That $m_1^{(t')} = \tfrac{\alpha T}{2}$ is
	also obvious as it adds a loss of $\frac12$ at each time step. To show
	that $m_2^{(t')} = m_3^{(t')} = \tfrac{\alpha T}{2}$ we do induction
	on the number of time steps, in steps of two. The induction hypothesis
	is that after an even number of time steps, $m_2^{(t)}=m_3^{(t)}$ and
	that $w_2^{(t)}=w_3^{(t)}$. Initially, all weights are $1$ and both
	experts have loss of $0$, so the base case holds. Consider the
	algorithm after an even number $t$ time steps. Since
	$w_2^{(t)}=w_3^{(t)}$, $p_3^{(t)} = 1-p_2^{(t)}$, and $p_1^{(t)} =
	\frac12$ we have that $\sum_{i=1}^3 \wt\pt = \sum_{i=1}^3 \wt(1-\pt)$
	and hence the algorithm will use its tie-breaking rule for its next
	decision. Thus, either $e_2$ or $e_3$ is wrong. Wlog let's say that
	$e_2$ was wrong (the other case is symmetric), so $m_2^{(t+1)} =
	1+m_3^{(t+1)}$. Now at time $t+1$, $w_2^{(t+1)} =
	(1-\eta)w_3^{(t+1)}<w_3^{(t+1)}$. Since $e_1$ does not express a preference, and
	$e_3$ has a higher weight than $e_2$, the algorithm will follow
	$e_3$'s advice. Since the realization $r^{(t+1)}$ is the opposite of
	the algorithms choice, this means that now $e_3$ incurs a loss of
	one. Thus $m_2^{(t+2)}= m_2^{(t+1)}$ and $w_2^{(t+2)} = w_2^{(t+1)}$
	and  $m_3^{(t+2)} = 1+m_3^{(t+1)} = m_2^{(t+2)}$. The weight of expert
	$e_2$ is $w_2^{(t+2)} = aa(1-\eta)w_2^{(t)}$ and the weight of expert
	$e_3$ is $w_3^{(t+2)} = a(1-\eta)aw_3^{(t)}$. By the induction
	hypothesis $w_2^{(t)}=w_3^{(t)}$, hence $w_2^{(t+2)}=w_3^{(t+2)}$, and since we already showed that $m_2^{(t+2)} = m_3^{(t+2)}$, this
	completes the induction. 
	
	Now, for $t=\alpha T + 1, ..., T$, we let $p_1^{(t)} = 1$, $p_2^{(t)}
	= 0$, $p_3^{(t)} = \frac12$ and $\rt = 0$. So henceforth $e_3$ does
	not provide information, $e_1$ is always wrong, and $e_2$ is always
	right. If we can show that the algorithm will always follow $e_1$,
	then the best expert is $e_2$ with a loss of $m_2^{(T)} =
	\tfrac{\alpha T}2$, while the algorithm has a loss of $\MT = T$. If
	this holds for $\alpha<1$ this proves the claim. So what's left to
	prove is that the algorithm will always follow $e_1$. Note that since
	$p_3^{(t)}=\frac12$ it contributes equal amounts to $\sum_{i=1}^3 \wt
	\pt$ and $\sum_{i=1}^3 \wt (1-\pt)$ and is therefore ignored by the
	algorithm in making its decision. So it suffices to look at $e_1$ and
	$e_2$. The algorithm will pick $1$ iff $\sum_{i=1}^3 \wt (1-\pt) \leq
	\sum_{i=1}^3 \wt \pt$, which after simplifying becomes $w_1^{(t)} >
	w_2^{(t)}$. 
	
	At time step $t$, $w_1^{(t)} = \left(a(1+\eta(f(\tfrac12) - 1))\right)^{\alpha T}(a\cdot(1-\eta))^{t-\alpha T}$ and
	$w_2^{(t)} = \left(a(1-\eta)\right)^{\tfrac{\alpha
			T}2}a^{\tfrac{\alpha T}2 + t-\alpha T}$.  
	
	We have that $w_1^{(t)}$ is decreasing faster in $t$ than
	$w_2^{(t)}$. So if we can show that $w_1^{(T)} \geq w_2^{(T)}$ for
	some $\alpha<1$, then $e_2$ will incur a total loss of $\alpha T/2$
	while the algorithm incurs a loss of $T$ and the statement is
	proved.
	
	We have that $w_1^{(t)}$ is decreasing faster in $t$ than $w_2^{(t)}$. So if we can show that at time $T$, $w_1^{(T)} \geq w_2^{(T)}$ for some $\alpha<1$, then $e_2$ will incur a total loss of $\alpha T$ while the algorithm incurs a loss of $T$ and the statement is proved. First divide both weights by $a^T$ so that we have
	\begin{align*}
	a^{-T}w_1^{(T)} &= \left(1+\eta(f(\tfrac12) - 1)\right)^{\alpha T}(1-\eta)^{(1-\alpha) T}\\
	a^{-T}w_2^{(T)} &= \left(1-\eta\right)^{\tfrac{\alpha T}2}. 
	\end{align*}
	
	Let $\alpha = \frac{2\lceil\gamma^{-1}\rceil}{2\lceil\gamma^{-1}\rceil+1}$ and recall that $T = k\cdot\left(2\lceil\gamma^{-1}\rceil+1\right)$ for positive integer $k$. Thus we can write
	\begin{align*}
	a^{-T}w_1^{(T)} &= \left(1+\eta(f(\tfrac12) - 1)\right)^{k2\lceil\gamma^{-1}\rceil}(1-\eta)^{k}\\
	&= \left(\left(1+\eta(f(\tfrac12) - 1)\right)^{2\lceil\gamma^{-1}\rceil}(1-\eta)\right)^{k}\\
	a^{-T}w_2^{(T)} &= \left(1-\eta\right)^{k\lceil\gamma^{-1}\rceil}\\ &=\left(\left(1-\eta\right)^{\lceil\gamma^{-1}\rceil}\right)^k\\
	\end{align*}
	So it holds that $w_1^{(T)}>w_2^{(T)}$ if we can show that $\left(1+\eta(f(\tfrac12) - 1)\right)^{2\lceil\gamma^{-1}\rceil}(1-\eta)>\left(1-\eta\right)^{\lceil\gamma^{-1}\rceil}$
	
	\begin{align}
	\left(1+\eta(f(\tfrac12) - 1)\right)^{2\lceil\gamma^{-1}\rceil}(1-\eta) &= \left(1-(\tfrac12 - \gamma)\eta\right)^{2\lceil\gamma^{-1}\rceil}(1-\eta)\nonumber &\text{(def. of $\gamma$)}\\
	&\geq \left(1-\eta + 2\gamma\eta\right)^{\lceil\gamma^{-1}\rceil}(1-\eta)\label{e:sym}\\
	&= \left(\frac{1-\eta + 2\gamma\eta}{1-\eta}\right)^{\lceil\gamma^{-1}\rceil}(1-\eta)^{\lceil\gamma^{-1}\rceil+1}\nonumber\\
	&= \left(1+2\gamma
	\eta\right)^{\lceil\gamma^{-1}\rceil}(1-\eta)^{\lceil\gamma^{-1}\rceil+1}\nonumber\\
	&\geq \left(1+\lceil\gamma^{-1}\rceil 2\gamma
	\eta\right)(1-\eta)^{\lceil\gamma^{-1}\rceil+1}\nonumber\\
	&\geq \left(\left(1+2\eta\right)(1-\eta)\right)(1-\eta)^{\lceil\gamma^{-1}\rceil}\nonumber\\
	&>(1-\eta)^{\lceil\gamma^{-1}\rceil} &\text{(for $\eta<\tfrac12$)}\nonumber
	\end{align}
	
	Therefore expert $e_2$ will not incur any more loss during the last stage of the instance, so her total loss is $\mT=k\lceil\gamma^{-1}\rceil$ while the loss of the algorithm is $\MT=T=k\cdot\left(2\lceil\gamma^{-1}\rceil+1\right)$.
	So $$\frac{\MT}{\mt} \geq \frac{k\cdot\left(2\lceil\gamma^{-1}\rceil+1\right)}{k\lceil\gamma^{-1}\rceil} = 2 + \frac{1}{\lceil\gamma^{-1}\rceil}$$
	rearranging proves the claim.
\end{proof}

\subsection{Proof of Lemma~\ref{lem:det-asym2-lb}}
Let $\F$ be a family of scoring rules generated by a normalized
strictly proper scoring rule $f$, with not both $f(0,0)=f(1,1)$ and $f(0,1)=f(1,0)$ and parameters $c$ and $d$ as in Definition~\ref{def:asym}. In the worst case, MW with any scoring rule $f'$ from
$\F$ with $\eta\in(0,\tfrac12)$ can do no better than 
$$\MT \geq\left(2 + \max\{\tfrac{1-c}{2c}, \tfrac{d}{4(1-d)}\}\right)\cdot \mT.$$
\begin{proof}
	Fix $f$, and without loss of generality assume that $f(0,0)=1$ (since $f$ is normalized, either $f(0,0)$ or $f(1,1)$ needs to be $1$, rename if necessary). As $f$ is normalized, at least one of $f(0,1)$ and $f(1,0)$ needs to be $0$. For now, we consider the case where $f(0,1)=0$, we treat the other case later. For now we have $f(0,0)=1$, $f(0,1) = 0$, and by definition~\ref{def:asym}, $f(1,0) = 1-c$ and $f(1,1) = 1-d$, where $c>d$ (since correctly reporting $1$ needs to score higher than reporting $0$ when $1$ materialized) and $\lnot (c=1 \land d=0)$ (since that would put us in the semi-symmetric case).
	
	We construct an instance as follows. We have two experts, $e_0$ reports $0$ always, and $e_1$ reports $1$ always, and as usual, the realizations are opposite of the algorithms decisions. Since the experts have completely opposite predictions, the algorithm will follow whichever expert has the highest weight. We will show that after a constant number of time steps $t$, the weight $w_0^{(t)}$ of $e_0$ will be larger than the weight $w_1^{(t)}$ of $e_1$ even though $e_0$ will have made one more mistake. Note that when this is true for some $t$ independent of $\eta$, this implies that the algorithm cannot do better than $2\tfrac{t}{t-1}>2+\tfrac2t$.
	
	While it hasn't been the case that $w_0^{(t)}>w_1^{(t)}$ with $m_0^{(t)} = m_1^{(t)}+1$, realizations alternate, and the weight of each expert is:
	
	\begin{align}
	w_0^{(2t)} &= a^{2t}(1+\eta(f(0,0)-1))^t(1+\eta(f(0,1)-1))^t\nonumber\\
	&= a^{2t}(1+\eta(1-1))^t(1+\eta(1-c-1))^t\nonumber\\
	&= a^{2t}(1-c\eta)^t\label{e:w0t}\\
	w_1^{(2t)} &= a^{2t}(1+\eta(f(1,1)-1))^t(1+\eta(f(1,0)-1))^t\nonumber\\
	&= a^{2t}(1+\eta((1-d)-1))^t(1+\eta(0-1))^t\nonumber\\
	&= a^{2t}(1-d\eta)^t(1-\eta)^t\label{e:w1t}
	\end{align}
	
	What remains to be shown is that for some $t$ independent of $\eta$, $$a^{2t+1}(1-c\eta)^{t+1}>a^{2t+1}(1-d\eta)^{t+1}(1-\eta)^t.$$
	
	We know that it cannot be the case that simultaneously $c=1$ and $d=0$, so let's first consider the case where $c<1$. In this case, it is sufficient to prove the above statement assuming $d=0$, as this implies the inequality for all $d\in[0,c)$. The following derivation shows that $a^{2t+1}(1-c\eta)^{t+1} > a^{2t+1}(1-d\eta)^{t+1}(1-\eta)^t$ whenever $\frac{c}{(1-c)} < t$.
	
	\begin{align*}
	a^{2t+1}(1-c\eta)^{t+1} &> a^{2t+1}(1-d\eta)^{t+1}(1-\eta)^t\\
	(1-c\eta)^{t+1} &> (1-\eta)^t & \text{($d=0$)}\\
	(1-c\eta) &> \left(\frac{1-\eta}{1-c\eta}\right)^t\\
	\ln(1-c\eta) &> t\cdot\ln\left(\frac{1-\eta}{1-c\eta}\right)\\
	1-\frac{1}{1-c\eta} &> t\cdot\left(\frac{1-\eta}{1-c\eta}-1\right) & (1-\tfrac1x \leq \ln x \leq x-1)\\
	\frac{1-c\eta-1}{1-c\eta} &> t\cdot\left(\frac{1-\eta-1+c\eta}{1-c\eta}\right)\\
	\frac{c\eta}{1-c\eta} &< t\cdot\frac{(1-c)\eta}{1-c\eta}\\
	\frac{c\eta}{(1-c)\eta} &< t\\
	\frac{c}{(1-c)} &< t\\
	\end{align*}
	
	So after $2t+1$ time steps for some $t\leq \frac{c}{1-c}+1$, expert $e_0$ will have one more mistake than expert $e_1$, but still have a higher weight. This means that after at most another $2t+1$ time steps, she will have two more mistakes, yet still a higher weight. In general, the total loss of the algorithm is at least $2+\frac{1-c}{c}$ times that of the best expert. Now consider the case where $c=1$ and therefore $d>0$. We will show that after $2t+1$ time steps for some $t\leq 2\frac{1-d}{d}+1$ expert $e_0$ will have one more mistake than expert $e_1$.
	
	\begin{align*}
	a^{2t}(1-c\eta)^{t+1} &> a^{2t}(1-d\eta)^t(1-\eta)^t(1-d\eta)\\
	(1-\eta)^{t+1} &> (1-d\eta)^{t+1}(1-\eta)^t & \text{($c=1$)}\\
	\frac{1-\eta}{1-d\eta} &> (1-d\eta)^t\\
	\ln\left(\frac{1-\eta}{1-d\eta}\right) &> t\ln(1-d\eta)\\
	1-\frac{1-d\eta}{1-\eta} &> t(1-d\eta-1) & (1-\tfrac1x \leq \ln x \leq x-1)\\
	\frac{1-\eta-1+d\eta}{1-\eta} &> -td\eta\\
	\frac{(1-d)\eta}{1-\eta} &< td\eta\\
	\frac{1-d}d\frac1{1-\eta} &< t\\
	2\frac{1-d}d &< t & (\text{by }\eta<\tfrac12)
	\end{align*}
	
	So in any case, after $t \leq 2\max\{\tfrac{c}{1-c}, \tfrac{1-d}{d}\} +1$ time steps so the loss compared to the best expert is at least $$2+\max\{\tfrac{1-c}{c}, \tfrac{d}{2(1-d)}\}.$$
	
	What remains to be proven is the case where $f(0,1)>0$. In this case, it will have to be that $f(1,0)=0$, as $f$ is normalized. And similarly to before, by Definition~\ref{def:asym}, we have $f(0,1) = 1-c$ and $f(1,1)=1-d$ for $c>d$ and $\lnot(c=1 \land d=0)$. Now, whenever $w_o^{(t)}>w_1{(t)}$, $e_0$ will predict $1$ and $e_1$ predicts $0$, and otherwise $e_0$ predicts $0$ and $e_1$ predicts $1$. As usual, the realizations are opposite of the algorithm's decisions. For now assume tie of the algorithm is broken in favor of $e_1$, then the weights will be identical to \eqref{e:w0t}, \eqref{e:w1t}. If the tie is broken in favor of $e_0$ initially, it takes at most twice as long before $e_0$ makes two mistakes in a row. Therefore, the loss with respect to the best expert in hindsight of an algorithm with any asymmetric strictly proper scoring rule is $$2+\max\{\tfrac{1-c}{2c}, \tfrac{d}{4(1-d)}\}.$$
\end{proof}

\subsection{Proof of Theorem~\ref{thm:det-non-ic-lb-continuous}}
	For a weight update function $f$ with continuous strictly
increasing rationality function $\rho_f$, with
$\rho_f(0)<\tfrac12<\rho_f(1)$ and $\rho_f(\tfrac12) =
\tfrac12$, there is no deterministic no $2$-regret algorithm.

\begin{proof}
	Fix $f$ with $\rho_f(0)<\tfrac12<\rho_f(1)$ and $\rho_f(\tfrac12) = \tfrac12$. Define $p = \max\{\rho_f(0), 1-\rho_f(1)\}$, so that $p$ and $1-p$ are both in the image of $\rho_f$ and the difference between $p$ and $1-p$ is as large as possible. Let $b_1 = \rho^{-1}(p)$ and $b_2=\rho^{-1}(1-p)$ and observe that $b_1 < \tfrac12 < b_2$.
	
	Next, we rewrite the weight-update function $f$ in a similar way as the normalization procedure similar to Definition~\ref{def:family}:
	$f(p,r) = a(1+\eta(f'(p,r) -1)).$ where $\max\{f'(p,0), f'(1-p,1)\}=1$ and $\min\{f'(p,1), f'(1-p,0)\}=0$. Again we do this to prove bounds that are not dependent on any learning rate parameter.
	
	Note that the composition of $\rho_f$ and $f$, namely $f(\rho_f(p),r)$ is a strictly proper scoring rule, since it changes the prediction in the same way as the selfish expert would do. Since $f(\rho_f(p),r)$, it must also be that $f'(\rho_f(p),r)$ is a strictly proper scoring rule, since it is a positive affine transformation of $f\circ \rho_f$.\footnote{And since $f'$ is a positive affine transformation of $f$, the rationality function is unchanged due to the linearity of the expectation operator.}
	
	We now continue similarly to the lower bounds in Section~\ref{s:det-lb}. We only treat the semi-symmetric and asymmetric cases as the former includes the special case of the symmetric weight-update function.
	
	For the semi-symmetric case, by definition $f'(\rho_f(b_1),0) = f'(\rho_f(b_2),1)=1$ and $f'(p,0), f'(1-p,1)\}=1$ and $\min\{f'(p,1), f'(1-p,0)=0$. Because $f'\circ\rho_f$ is a strictly proper scoring rule, the following inequality holds:
	
	\begin{align*}
	\tfrac12 f'(\rho_f(\tfrac12), 0) + \tfrac12 f'(\rho_f(\tfrac12,1)) +\mu = \tfrac12 f'(\rho_f(b_1), 0) +  \tfrac12 f'(\rho_f(\tfrac12,1)) = \tfrac12
	\end{align*}
	for some $\mu>0$, since an expert with belief $\rho_f(\tfrac12)$ must have a strict incentive to report this. Here $\mu$ plays the same role as the semi-symmetric scoring rule gap.\footnote{It is defined slightly differently though, as the image of $\rho_f$ may not be $[0,1]$.}
	
	We now pitch three experts against each other in a similar lower bound instance as Lemma~\ref{lem:det-asym1-lb}. For the first stage, they have beliefs $b_0^{(t)}=\tfrac12$, $b_1^{(t)}=b_1$, $b_2^{(t)}=b_2$, so they have predictions $p_0^{(t)}=\tfrac12$, $p_1^{(t)}=\rho_f(b_1)=p$, $p_2^{(t)}=\rho_f(b_2)=1-p$. For the second stage, recall that either $b_1=0$ or $b_2=1$. In the former case, $b_0^{(t)}=1$, $b_1^{(t)}=0$, $b_2^{(t)}=\tfrac12$ and $\rt=0$ and in the latter case $b_0^{(t)}=0$, $b_1^{(t)}=1$, $b_2^{(t)}=\tfrac12$ and $\rt=1$. We now show a bijection between the instance in Lemma~\ref{lem:det-asym1-lb} and this instance, which establishes the lower bound for the semi-symmetric non-incentive compatible case. First of all, note that the weights of each of the experts in the first stage is the same (up to the choice of $a$ and $\eta$, and for now assuming that the algorithms choices and thus the realizations are the same):
	\begin{align*}
	w_0^{(2t)} &= a^{2t}\left((1+\eta(f'(\tfrac12,0)-1))((1+\eta(f'(\tfrac12,1)-1))\right)^t\\
	&\geq a^{2t}(1-\eta+2\mu\eta)^t&(\text{Follows from \eqref{e:semisym}})\\
	w_1^{(2t)} &=a^{2t}\left(1-\eta)\right)^t\\
	w_1^{(2t)} &=a^{2t}\left(1-\eta)\right)^t
	\end{align*}
	In the second stage expert $e_0$ is always wrong and $e_1$ is always right, and hence at time $T$ the weights
	
	Also note, that the predictions of $e_1$ and $e_2$ are opposite, i.e. $p$ and $1-p$, so the algorithm will follow the expert which highest weight, meaning the algorithms decisions and the realizations are identical to the instance in Lemma~\ref{lem:det-asym1-lb}.
	
	To complete the proof of the lower bound instance, we need to show that the total loss of $e_1$ is the same. During the first stage, alternatingly the true absolute loss of $e_1$ is $b_1$ and $1-b_1$, so after each 2 steps, her loss is $1$. During the last stage, since her belief is certain (i.e. $b_0$ if $b_1=0$ or $b_2$ if $b_2=1$) ans she is correct, she incurs no additional loss. Therefore the loss of the algorithm and the true loss of $e_1$ are the same as in Lemma~\ref{lem:det-asym1-lb}, hence the loss of the algorithm is at least $\frac{1}{\lceil\mu^{-1}\rceil}$ times that of the best expert in hindsight.
	
	Finally, we consider the asymmetric case. We use a similar instance as Lemma~\ref{lem:det-asym2-lb} with two experts $e_0$, $e_1$. If $f'(1-p,0)=0$ we have $b_0^{(t)}=b_1$ and $b_1^{(t)}=b_2$, so $p_0^{(t)}=p$ and $p_1^{(t)} = 1-p$, otherwise the beliefs (and thus predictions) alternate. Again, the predictions are opposite of each other, and the weights evolve identically (up to the choice of $a$ and $\eta$) as before. Again the loss up until the moment that the same expert is chosen twice in a row is the same. 
	
	Once the same expert is chosen twice (after at most $2\max\{\tfrac{c}{1-c}, \tfrac{1-d}{d}\} +1)$ steps), it is not necessarily the case that the total loss of one expert exceeds the other by 2, as the true beliefs are $b_1$ and $b_2$, rather than $0$ and $1$. However, since at least either $b_1=0$ or $b_2=1$, and $b_1<\tfrac12<b_2$, the difference in total absolute loss in this non-IC instance is at least half of the IC instance, so we lose at most  factor $\tfrac12$ in the regret bound, hence for the asymmetric case $\MT \geq \left(2+\max\{\tfrac{1-c}{4c}, \tfrac{d}{8(1-d)}\}\right)\mt$, completing the proof of the statement.
\end{proof}

\subsection{Proof of Theorem~\ref{thm:rand-ub}}
Let $A\in\A$ be a $\theta$-RWM algorithm with the Brier weight update
rule $f_\text{Br}$ and $\theta=0.382$ and with
$\eta\in(0,\tfrac12)$. For any expert $i$ it holds that $$\MT \leq
2.62\left((1+\eta)\mT + \frac{\ln n}{\eta}\right).$$ 
\begin{proof}
	The core difference between the proof of this statement, and the proof
	for Theorem~\ref{thm:spherical-det-ub} is in giving the upper bound of
	$\Phi^{(t+1)}$. Here we will give an upper bound of $\Phi^{(T)} \leq
	n\cdot \exp\left(-\frac{\eta}{2.62}\MT\right)$. Before giving this
	bound, observe that this would imply the theorem: since the weight
	updates are identical to the deterministic algorithm, we can use the
	same lower bound for $\Phi^{(T)}$, namely $\Phi^{(T)}\geq
	(1-\eta)^{\mT}$ for each expert. Then taking the log of both sides we
	get: 
	
	\begin{align*}
	\ln n - \frac{\eta}{2.62}\MT &\geq \mT\cdot\ln(1-\eta)\\
	\ln n - \frac{\eta}{2.62}\MT &\geq \mT\cdot(-\eta-\eta^2)\\
	\MT &\leq 2.62\left((1+\eta)\mT + \frac{\ln n}{\eta}\right)
	\end{align*}
	
	So all that's left to prove is that whenever the algorithm incurs a
	loss $\ell$, $\Phi^{(t+1)}\leq
	\exp\left(-\tfrac{\eta}{2.62}\ell\right)$. At time $t$, the output
	$\qt$ of a $\theta$-RWM algorithm is one of three cases, depending on
	the weighted expert prediction. The first options is that the
	algorithm reported the realized event, in which case the
	$\ell^{(t)}=0$ and the statement holds trivially. We treat the other
	two cases separately. 
	
	Let's first consider the case where the algorithm reported the
	incorrect event with certainty: $\ell^{(t)}=1$. The means that
	$\sum_{i=1}^n \wt\st \geq (1-\theta)\Phi^{(t)}$. Since the Brier rule
	is concave, $\Phi^{(t+1)}$ is maximized when $\st = 1-\theta$ for all
	experts $i$. In this case each we get 
	
	\begin{align*}
	\Ptp &\leq \sum_i\left(1-\eta\left(\frac{(\pt)^2 + (1-\pt)^2 +1}{2}-(1-\st) \right)\right)\wt\\
	&\leq \sum_i\left(1-\eta\left(\frac{(\theta)^2 + (1-\theta)^2 +1}{2}-\theta \right)\right)\wt\\
	&\leq\sum_i\left(1-\frac{\eta}{2.62}\right)\wt &\text{(since $\theta = .382$)}\\
	&=\left(1-\frac{\eta}{2.62}\ell^{(t)}\right)\Pt.
	\end{align*}

	Otherwise the algorithms report is between $\theta$ and $1-\theta$. Let $\ell^{(t)}\in[\theta,1-\theta]$ be the loss of the algorithm. Again, since the Brier rule is concave, $\Phi^{(t+1)}$ is maximized when $\st = \ell^{(t)}$ for all experts $i$. On $[\theta, 1-\theta]$ the Brier proper scoring rule can be upper bounded by
	$$1-\frac{\eta}{f_\text{Br}(1-\theta,1)/\theta} \st = 1-\frac{\eta}{2.62}\st.$$
	
	This yields
	\begin{align*}
	\Ptp &\leq \sum_i\left(1-\eta\left(\frac{(\pt)^2 + (1-\pt)^2 +1}{2}-(1-\st) \right)\right)\wt\\
	&\leq \sum_i\left(1-\frac{\eta}{2.62}\st \right)\wt\\
	&\leq \left(1-\frac{\eta}{2.62}\ell^{(t)} \right)\Pt
	\end{align*}
	
	So the potential at time $T$ can be bounded by $\Phi^{(T)} \leq n\cdot\prod_t \left(1-\frac{\eta}{2.62}\ell^{(t)}\right)\leq n\cdot\exp\left(-\frac{\eta}{2.62}\MT\right)$, from which the claim follows.
\end{proof}

\section{Selecting a Strictly Proper Scoring Rule}
\label{s:selection}

\begin{figure}
	\centering
	\includegraphics[width=.6\textwidth]{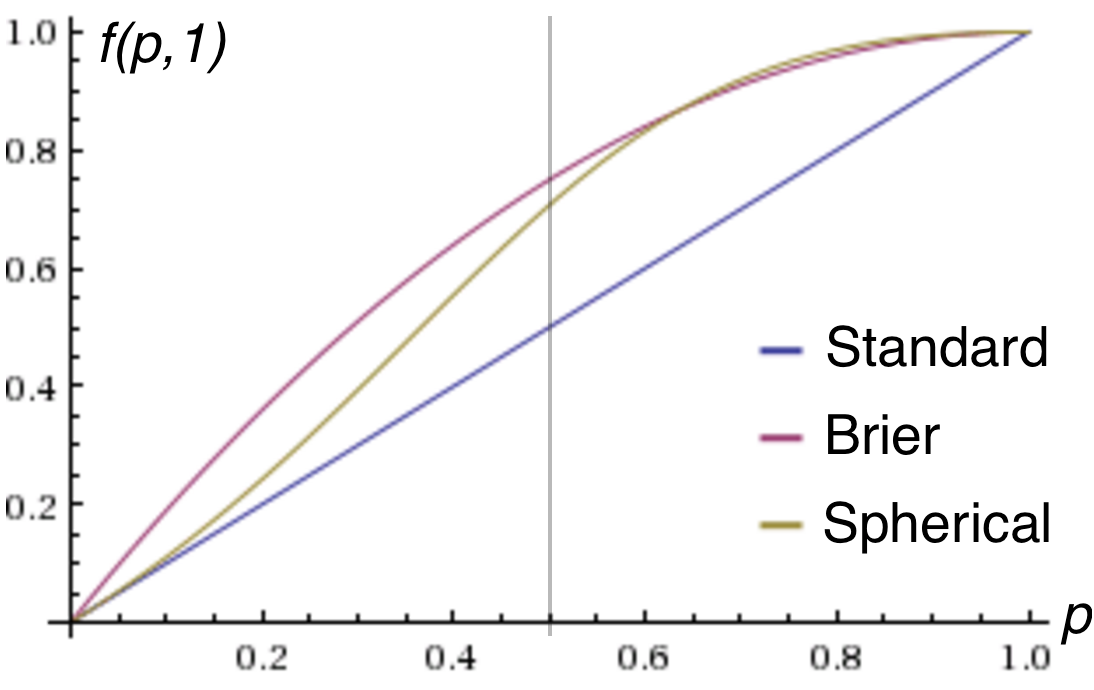}
	\caption{Three different normalized weight-update rules for $\rt=1$. The line segment is the standard update rule, the concave curve the Brier rule and the other curve the spherical rule.}
	\label{f:spsr}
\end{figure}

When selecting a strictly proper scoring rule for an IC online prediction algorithm, different choices may lead to very different guarantees. Many different scoring rules exist \citep{M56,S71}, and for discussion of selecting proper scoring rules in non-online settings, see also \citep{MS13}. Figure~\ref{f:spsr} shows two popular strictly proper scoring rules, the Brier rule and the spherical rule, along with the standard rule as comparison. Note that we have normalized all three rules for easy comparison.

Firstly, we know that for honest experts, the standard rule performs
close to optimally. For every $\delta>0$ we can pick a learning rate
$\eta$ such that as $T\rightarrow \infty$ the loss of the algorithm
$\MT \leq (2+\delta)\mt$, while no algorithm could do better than $\MT
< 2\mT$ \citep{LW94,FS97}. This motivates looking at strictly proper
scoring rule that are ``close'' to the standard update rule in some sense.
In Figure~\ref{f:spsr}, if we compare the two
strictly proper scoring rules, the spherical rule seems to follow the
standard rule better than Brier does. 

A more formal way of look at this is to look at the scoring rule
gap. In Figure~\ref{f:spsr} we marked the $p =\tfrac12$
location. Visually, the scoring rule gap $\gamma$ is the difference
between a scoring rule and the standard rule at $p=\tfrac12$. Since
the Brier score has a large scoring rule gap, we're able to prove a
strictly stronger lower bound for it: the scoring rule gap $\gamma =
\tfrac14$, hence MW with the Brier scoring rule cannot do better than
$\MT\geq (2+\tfrac14)\mT$ in the worst case, according to
Lemma~\ref{lem:det-sym-lb}. Corollary~\ref{cor:spherical} shows that
for the Spherical rule, this factor is $2+\tfrac15$. The ability to
prove stronger lower bounds for scoring rules with larger gap
parameter $\gamma$ is an indication that it is probably harder to
prove strong upper bounds for those scoring rules.


\end{document}